\theoremstyle{plain}
\newtheorem{lemL}{Lemma}
\newcolumntype{L}[1]{>{\raggedright\let\newline\\\arraybackslash\hspace{0pt}}m{#1}}
\newcolumntype{C}[1]{>{\centering\let\newline\\\arraybackslash\hspace{0pt}}m{#1}}
\newcolumntype{R}[1]{>{\raggedleft\let\newline\\\arraybackslash\hspace{0pt}}m{#1}}
\tikzset{
    -Latex,auto,node distance =1 cm and 1 cm,semithick,
    state/.style ={ellipse, draw, minimum width = 0.7 cm},
    point/.style = {circle, draw, inner sep=0.04cm,fill,node contents={}},
    bidirected/.style={Latex-Latex,dashed},
    el/.style = {inner sep=2pt, align=left, sloped}
}
\newcolumntype{L}[1]{>{\raggedright\let\newline\\\arraybackslash\hspace{0pt}}m{#1}}
\newcolumntype{C}[1]{>{\centering\let\newline\\\arraybackslash\hspace{0pt}}m{#1}}
\newcolumntype{R}[1]{>{\raggedleft\let\newline\\\arraybackslash\hspace{0pt}}m{#1}}
\title{Provable Privacy with Non-Private Pre-Processing}
\author{Yaxi Hu \thanks{yaxi.hu@tuebingen.mpg.de}}
\author{Amartya Sanyal \thanks{amsa@di.ku.dk}}
\author{Bernhard Sch\"olkopf\thanks{bernhard.schoelkopf@tuebingen.mpg.de}}
\date{}
\affil{Max Planck Institute for Intelligent Systems, T\"ubingen, Germany}
\begin{document}
\maketitle

\begin{abstract}
When analyzing Differentially Private~(DP) machine learning pipelines, the potential privacy cost of data-dependent pre-processing is frequently overlooked in privacy accounting. In this work, we propose a general framework to evaluate the additional privacy cost incurred by non-private data-dependent pre-processing algorithms. Our framework establishes upper bounds on the overall privacy guarantees by utilising two new technical notions: a variant of DP termed Smooth DP and the bounded sensitivity of the pre-processing algorithms. In addition to the generic framework, we provide explicit overall privacy guarantees for multiple data-dependent pre-processing algorithms, such as data imputation, quantization, deduplication, standard scaling and PCA, when used in combination with several DP algorithms. Notably, this framework is also simple to implement, allowing direct integration into existing DP pipelines.
\end{abstract}

\section{Introduction}
\label{sec:intro}

With the growing emphasis on user data privacy, Differential Privacy (DP), has become the preferred solution for safeguarding training data in Machine Learning~(ML) and data analysis \citep{dwork06dp}. DP algorithms are designed to ensure that individual inputs minimally affect the algorithm's output, thus preserving privacy. This approach is now widely adopted by various organizations for conducting analyses while maintaining user privacy \citep{appleldp,USCensusDP}.

Pre-processing data is a standard practice in data analysis and machine learning. Techniques such as data imputation for handling missing values \citep{imputation19}, deduplication for reducing memorization and eliminating bias \citep{kandpal22dedup, lee-etal-2022-deduplicating}, standard scaling for reducing the impact of outliers, and dimensionality reduction for denoising or visualization \citep{Abadi16dpsgd,zhou21pcasgd,pinto2023pillar} are commonly used. These pre-processings are also prevalent prior to applying DP algorithms, a pipeline that we term~\emph{pre-processed DP pipeline}. Among other uses, this has been shown to improve privacy-accuracy trade-off \citep{tramer2021BetterFeatures, Ganesh23why-public-pretraining}.

A fundamental assumption, required by DP, is that individual data points are independent. However, if training data is used in pre-processing, this assumption is compromised. For example, when deduplicating a dataset, whether a point remains after the pre-processing is dependent on the presence of other points in its vicinity. Similarly, for mean imputation, the imputed value depends on the values of other data points. These dependencies, also evident in PCA and pre-training, can undermine the privacy guarantee of the pre-processed DP pipeline.

There are multiple strategies to address this. A straightforward method to derive privacy guarantees for this pipeline is to use group privacy where the size of the group can be as large as the size of the dataset, thereby resulting in weak privacy guarantees. This idea, albeit not in the context of pre-processing, was previous explored under the name Dependent Differential Privacy~(DDP)~\citep{Zhao17ddp,liu2016ddp}. Another approach is to use public data for the pre-processing. Broadly referred to as semi-private learning algorithms, examples of these methods include pre-training on public data and learning projection functions using the public data e.g. \cite{pinto2023pillar, li2021large, li2022private, yu2021do}. Despite their success, these methods crucially rely on the availability of high-quality public data.

In the absence of public data, an alternative approach is to privatise the pre-processing algorithm. However, designing new private pre-processing algorithms complicate the process, increasing the risk of privacy breaches due to errors in implementation or analysis. Moreover, private pre-processing can be statistically or computationally more demanding than private learning itself. For example, the sample complexity of DP-PCA\citep{chaudhuri2012dppca} is dependent on the dimension of the training data \citep{Liu22dp-pca}, implying that the costs associated with DP-PCA could surpass the benefits of private learning in a lower-dimensional space.

To circumvent these challenges, an alternative approach is non-private pre-processing with a more rigorous analysis of the entire pipeline. This method is straightforward, circumvents the need for modifying existing processes, and avoids the costs associated with private pre-processing. Naturally, this raises the important question, 

\begin{quote}
    \emph{What is the price of non-private pre-processing in differentially private data analysis ?}
\end{quote}  

Our work shows that the overall privacy cost of pre-processed DP pipeline can be bounded with minimal degradation in privacy guarantee. To do this, we rely on two new technical notions: sensitivity of pre-processing functions~(\Cref{defn:sensitivity-pre}) and Smooth-DP~(\Cref{defn:smoothness}). In short, 

\begin{enumerate}
\vspace{-3mm}
\item We introduce a generic framework to quantify the privacy loss in the \emph{pre-processed DP pipeline}. Applying this framework, we evaluate the impact of commonly used pre-processing techniques such as deduplication, quantization, data imputation, standard scaling, and PCA on the overall privacy guarantees.
\item We base our analysis on a novel variant of differential privacy, termed smooth DP, which may be of independent interest. We demonstrate that smooth DP retains essential DP characteristics, including post-processing and composition capabilities, as well as a slightly weaker form of amplification by sub-sampling.
\item We propose an algorithm to balance desired privacy levels against utility in the \emph{pre-processed DP pipeline}. This approach is based on the Propose-Test-Release Mechanism, allowing the user to choose desired privacy-utility trade-off.
\end{enumerate}

\paragraph{Related Work}
\label{sec:related}
Closely related to our work,~\citet{debenedetti2023sidechannels} also studies the necessity of conducting privacy analysis across the entire ML pipeline, rather than focusing solely on the training process. They identify that pre-processing steps, such as deduplication, can unintentionally introduce correlations into the pre-processed datasets, leading to privacy breaches. They show that the empirical privacy parameter of DP-SGD~\citep{Abadi16dpsgd} can be deteriorated by over five times with membership inference attack designed to exploit the correlation introduced by deduplication. While they show privacy attacks using deduplication with DP-SGD that can maximise the privacy loss, our work quantifies the privacy loss in deduplication as well as other pre-processing algorithms with several privacy preserving mecahnisms, thereby presenting a more holistic picture of this problem.

Another line of research, focusing on privacy in correlated datasets~\citep{liu2016ddp, Zhao17ddp, Humphries2023miadependent}, shows that correlations in the datasets can increase privacy risk of ML models. In response, Pufferfish Differential Privacy~\citep{kifer14pufferfish,song17pufferfish} and Dependent Differential Privacy \cite{liu2016ddp, Zhao17ddp} were proposed as privacy notions tailored for datasets with inherent dependencies. However, these definitions usually require complete knowledge of the datasets' dependency structure or the data generating process and sometimes lead to vacuous privacy guarantees. Moreover, the application of these privacy notions usually complicates the privacy analysis, as many privacy axioms, such as composition, do not hold under these more general notions. 

\section{Preliminaries in Differential Privacy}
\label{sec:prelim}
Before providing the main results of our work, we first introduce some common definitions and mechanisms in Differential Privacy. 

\subsection{Rényi Differential Privacy}

Differential privacy restricts the change in the output distribution of a randomized algorithm by altering a single element in the dataset. Formally, for $\varepsilon, \delta > 0$, a randomized algorithm $\cA$ satisfies $(\varepsilon, \delta)$-DP if for any two dataset $S, S'$ that differ by exactly one element and any possible output of the algorithm $\cO$, \begin{equation}\label{eq:approx-dp}\bP\bs{\cA(S) \in \cO }\leq e^\varepsilon \bP\bs{\cA(S') \in \cO} + \delta. \end{equation}
In this work, we mainly focus on a stronger notion of DP, known as Rényi Differential Privacy (RDP)\citep{mironov2017renyi}, based on the Rényi divergence between two distributions. 
\begin{defn}[Rényi Divergence]\label{defn:renyi-divergence}
Let $P, R$ be two probability distributions with $\operatorname{supp}(P)\subseteq \operatorname{supp}(R)$. Let $\alpha > 1$. The Rényi divergence with order $\alpha$ between $P$ and $R$ is defined as \[D_\alpha(P||R) = \frac{1}{\alpha - 1}\log \bE_{x\sim R}\br{\frac{P(x)}{R(x)}}^\alpha.\]
\end{defn}

Let $d_H(\cdot, \cdot)$ denote the Hamming distance between two datasets, we formally define RDP as follows. 
\begin{defn}[$(\alpha, \varepsilon(\alpha))$-RDP]\label{defn:rdp-over-sets}
Let $\varepsilon(\alpha)$ be a function that maps each $\alpha$ to a positive real number. A randomized algorithm $\cA$ is $(\alpha, \varepsilon(\alpha))$-RDP if for all $\alpha > 1$, for any datasets $S, S'$ differing at a single point, it holds that \[D_\alpha(\cA(S)||\cA(S')) \leq \varepsilon(\alpha).\]
\end{defn}

This definition of RDP can be easily converted to standard DP via~\Cref{lem:rdp-to-adp}. While~\Cref{defn:rdp-over-sets} is unconditional on the possible set of datasets, a relaxed version of conditional RDP can be defined over a given dataset collection $\cL$ such that the neighboring datasets $S, S'\in\cL$. When the dataset collection is known in advance, the conditional definition of RDP allows for tighter privacy analysis. We use this conditional version in some of our analysis\footnote{To avoid confusion, we note that our privacy guarantees are not \emph{conditional}, in the sense that it does not suffer a catastrophic failure under any dataset.}.

\subsection{Private mechanisms}\label{sec:private-mechanisms}

There are several ways to make non-private algorithms  private. All of them implicitly or explicitly add carefully calibrated noise to the non-private algorithm. Below, we briefly define the three most common ways in which DP is injected in data analysis tasks and machine learning algorithms.

\paragraph{Output perturbation} The easiest way to inject DP guarantees in an estimation problem is to perturb the output of the non-private estimator with appropriately calibrated noise. Two most common ways to do so are \textit{Gaussian Mechanism} and \textit{Laplace Mechanism}. For any deterministic estimator $f$, both mechanisms add noise proportional to the global sensitivity $\Delta_f$, defined as the maximum difference in $f$ over all pairs of neighboring datasets. For a given privacy parameter $\varepsilon > 0$, both the Gaussian mechanism, denoted $\cM_G$, and the Laplace mechanism, denoted $\cM_L$, produce an output of the form $f(S) + \xi$. Here, $\xi$ follows a Gaussian distribution $\cN(0, \nicefrac{\Delta_f^2}{\varepsilon^2})$ for the Gaussian mechanism, and a Laplace distribution $\operatorname{Lap}(\nicefrac{\Delta_f}{\varepsilon})$ for the Laplace mechanism. 

\paragraph{Random sampling} While Output perturbation is naturally suited to privatising the output of non-private estimators, it is less intuitive when selecting discrete objects from a set. In this case, a private mechanism can sample from a probability distribution defined on the set of objects. The \textit{Exponential Mechanism}, denoted as $\cM_E$, falls under this category and is one of the most fundamental private mechanisms. Given a score function $Q$ with global sensitivity $\Delta_Q$, it randomly outputs an estimator $w$ with probability proportional to $\exp\br{\frac{Q(w, S)\varepsilon}{2\Delta_Q}}$. 

\paragraph{Gradient perturbation} Finally, most common ML applications use gradient-based algorithms to minimize a loss function on a given dataset. A common way to inject privacy in these algorithms is to introduce Gaussian noise into the gradient computations in each gradient descent step. This is referred to as Differential Private Gradient Descent (DP-GD) denoted as $\cA_{\mathrm{GD}}$ \citep{bassily2014private,song21Evading}. Other variants that are commonly used are Differential Private Stochastic Gradient Descent (DP-SGD) with subsampling \citep{bassily2014private,Abadi16dpsgd}, denoted as $\cA_{\mathrm{SGD-samp}}$, and DP-SGD with iteration \citep{feldman18iteration}, denoted as $\cA_{\mathrm{SGD-iter}}$. We include the detailed description of each of these methods in~\Cref{app:gradient-based-methods}.

\section{Main Results}
\label{sec:pre-processing}

We first introduce a norm-based privacy notion, called Smooth RDP, that allows us to conduct a more fine-grained analysis on the impact of pre-processing algorithms. Using this definition, we establish our main results on the privacy guarantees of a pre-processed DP pipeline. 

\subsection{Smooth RDP}\label{sec:srdp}

\begin{table*}[t]\small
\caption{RDP and SRDP parameters of DP mechanisms. We let $\gamma$ and $\maxD$ denote the inverse pointwise divergence and maximum divergence between two datasets.}
    \centering
    \begin{tabular}{C{0.075\textwidth}C{0.1\textwidth}C{0.1\textwidth}C{0.4\textwidth}C{0.05\textwidth}C{0.18\textwidth}}
        \toprule
        Notation & Meaning & Mechanism  & Assumptions & RDP & SRDP \\
        \midrule
       \multirow{2}{*}{$f$} & \multirow{2}={Output function} & $ \cM_G $&$f$ is $L$-Lipschitz &$\frac{\alpha\varepsilon^2}{2}$&$\frac{\alpha L^2\sdpL^2\varepsilon^2}{2\Delta_f^2}$\\
        & & $\cM_L$ & $f$ is $L$-Lipschitz &$ \varepsilon$&$ \frac{ L \sdpL}{\Delta_f}\varepsilon$\\\midrule
        $Q$& Score function &$\cM_E$& $Q$ is $L$-Lipschitz &$ \varepsilon$&$ \frac{L \sdpL}{\Delta_Q}\varepsilon$\\\midrule

       $\ell$& Loss function &$\cA_{\mathrm{GD}}$ & $\ell$ is $L$-Lipschitz and $\mu$-smooth, $\sigma = \frac{L\sqrt{T}}{\varepsilon n}$ & $ {2\alpha\varepsilon^2}$& $ \frac{\alpha\mu^2\sdpL^2\varepsilon^2}{2L^2}$\\
       \addlinespace
        $T$& Number of iterations &$\cA_{\mathrm{SGD-samp}}$ &$\ell$ is $L$-Lipschitz and $\mu$-smooth, $\sigma = \Omega(\nicefrac{L\sqrt{T}}{\varepsilon n })$, inverse point-wise divergence $\gamma$, $1\leq \alpha\leq \min\bc{\frac{\sqrt{T}}{\varepsilon}, \frac{L^2T}{\varepsilon^2 n^2}\log\frac{n^2\varepsilon}{L\sqrt{T}}}$&$\frac{\alpha^2\varepsilon^2}{2}$ & $ \frac{\alpha\mu^2\sdpL^2\varepsilon^2\gamma^2}{2L^2}$\\
        \addlinespace
         $\sigma$ & Variance of gradient noise & \multirow{2}{*}{$\cA_{\mathrm{SGD-iter}}$} & \multirow{2}={$\ell$ is convex, $L$-Lipschitz and $\mu$-smooth, $\sigma = \frac{8\sqrt{2\log n}\eta L}{\varepsilon\sqrt{n}}$, $\varepsilon = O(1/n\alpha^2)$, maximum divergence $\maxD$, $L\sqrt{2\alpha(\alpha -1)}\leq \sigma$ }
         & \multicolumn{1}{c}{\multirow{2}{*}{$\frac{\alpha\varepsilon^2}{2}$}}&\multicolumn{1}{c}{\multirow{2}{*}{$\frac{\alpha\sdpL^2\mu^2 n\log(n-\maxD+2)}{2(n-\maxD+1)L^2\log n}$}}\\
         \addlinespace
       $\eta$&Learning rate&&&\\
        \bottomrule
    \end{tabular}
    \label{tab:sdp-parameters}
\end{table*}

Our analysis on the privacy guarantees of pre-processed DP pipelines relies on a privacy notion that ensures indistinguishability between two datasets with a bounded $L_{12}$ distance. Here, the $L_{12}$ distance between two datasets $S$ and $S'$ of size $n$ is defined as $d_{12}(S, S') = \sum_{i = 1}^n \norm{S_i - S_i'}_2$. We introduce this privacy notion as Smooth Rényi Differential Privacy (SRDP), defined as follows:

\begin{defn}[$(\alpha, \varepsilon(\alpha, \sdpL))$-smooth RDP]\label{defn:srdp}
    Let $\varepsilon(\alpha, \sdpL)$ be a function that maps each $\alpha, \sdpL$ pair to a real value. A randomized algorithm $\cA$ is $(\alpha, \varepsilon(\alpha, \sdpL))$-SRDP if for each $\alpha > 1$ and $\sdpL > 0$, \[\sup_{\substack{S, S':d_{12}(S, S')\leq \sdpL}}D_\alpha(\cA(S)||\cA(S'))\leq \varepsilon(\alpha, \sdpL).\]
\end{defn}

SRDP shares similarities with distance-based privacy notions \citep{Lecuyer19pixeldp, 23distancedp}, but they differ in a key aspect: the distance-based privacy considers neighboring datasets with bounded distance, while SRDP allows for comparison over two datasets differing in every entry. 

Similar to conditional RDP over a set $\cL$, we define conditional SRDP over a set $\cL$ by imposing the additional assumption that $S, S'\in \cL$ in~\Cref{defn:srdp}.

While conditional SRDP over a set $\cL$ can be considered as a special case of Pufferfish R\'enyi Privacy~\citep{kifer14pufferfish}, we show that it satisfies desirable properties, such as sequential composition and privacy amplification by subsampling, which are not satisfied by the more general Pufferfish Renyi Privacy \citep{Pierquin2023renyi}. 

\paragraph{Properties of SRDP} Similar to RDP, SRDP satisfies (sequential) composition and closure under post-processing. 

\begin{restatable}{lem}{propertiesSRDP}\label{lem:properties-srdp}
   Let $\cL$ be any dataset collection. Then, the following holds.
   \begin{itemize}[leftmargin= 0.8em]
       \item\textbf{Composition} Let $k$ be a positive integer. Let $\alpha > 0, \varepsilon_i: \reals\times \reals \to \reals, \forall i\in [k]$. For any $i\in [k]$, if the randomized algorithms $\cA_i$ is $(\alpha, \varepsilon_i(\alpha, \sdpL))$-SRDP over $\cL$, then the composition of the $k$ algorithms $(\cA_1, \ldots, \cA_k)$ is $(\alpha, \sum_{i = 1}^k \varepsilon_i(\alpha, \sdpL))$-SRDP over $\cL$. 
       \item\textbf{Post-processing} Let $\alpha > 0, \varepsilon: \reals \times \reals \to \reals$, and $f$ be an arbitrary algorithm. For any $\sdpL > 0$, if $\cA$ is $(\alpha, \varepsilon(\alpha, \sdpL))$-SRDP over $\cL$, then $f\circ\cA$ is $(\alpha, \varepsilon(\alpha, \sdpL))$-SRDP over $\cL$. 
   \end{itemize} 
\end{restatable}

SRDP also satisfies a form of Privacy amplification by subsampling. We state the weaker version without additional assumptions in~\Cref{app:sec2-proofs} and use a stronger version for $\cA_{\mathrm{SGD-iter}}$ in~\Cref{thm:sdp-parameters} with some light assumptions.

\paragraph{SRDP parameters for common private mechanisms}  In~\Cref{thm:sdp-parameters}, we present the RDP and SRDP parameters for the private mechanisms discussed above. The SRDP parameter usually relies on the Lipschitzness and Smoothness~(see~\Cref{app:sec2-proofs} for definitions) of the output or objective function.

\begin{restatable}[Informal]{thm}{SDPtable}\label{thm:sdp-parameters}
    The DP mechanisms discussed in~\Cref{sec:private-mechanisms} satisfy RDP and SRDP under assumptions on the output or the objective functions. We summarize the parameters and their corresponding assumptions in~\Cref{tab:sdp-parameters}. 
\end{restatable}

\Cref{tab:sdp-parameters} demonstrates that the RDP parameter $\varepsilon$ of most private mechanisms increases to $O(\sdpL\varepsilon)$ for SRDP. In contrast, a naive analysis using group privacy inflates the privacy parameter to $O(n\varepsilon)$, as any two datasets with $d_{12}$ distance $\sdpL$ can differ by at most $n$ entries. Hence, SRDP always leads to tighter privacy parameter than group privacy for $\sdpL = o(n)$, which we are able to exploit later. 

For DP-SGD by subsampling and iteration, the SRDP guarantee is dependent on two properties: the inverse pointwise divergence and maximum divergence of the datasets, detailed in~\Cref{app:gradient-based-methods}. Briefly, the inverse pointwise divergence $\gamma$ measures the ratio between the $d_{12}$ distance and the maximum pointwise distance between two datasets. The maximum divergence $\maxD$ measures the number of points that the two datasets differ. DP-SGD with subsampling benefits from a small $\gamma$, while DP-SGD with iteration benefits from a small $\maxD$. As we will show in the later sections, at least one of these conditions are usually met in practice.

\subsection{Privacy of Pre-Processed DP Pipelines}

Before stating the main result, we introduce the term \textit{data-dependent pre-processing algorithm}. Let $\cX\in \reals^d$ be the instance space with Euclidean norm bounded by 1, \ie $\forall x\in \cX$, $\norm{x}_2 \leq 1$. A deterministic \textit{data-dependent pre-processing algorithm} $\pi: \cX^n\to\cF$ takes a dataset $S$ as input and returns a \textit{pre-processing function} $\pi_S: \cX\to \cX\cup \emptyset$ in a function space $\cF$. Intuitively, privacy of a private algorithm is retained under non-private data-dependent pre-processing, if a single element in the dataset has bounded impact on the output of the pre-processing algorithm. We define the sensitivity of a pre-processing algorithm to quantify the impact of a simgle element below.

\begin{defn}\label{defn:sensitivity-pre}
    Let $S_1 = S\cup \{z_1\}$ and $S_2 = S\cup \{z_2\}$ be two arbitrary neighboring datasets, we define the $L_\infty$ and $L_2$ sensitivity of a pre-processing function $\pi$ as\footnote{When one of $\pi_{S_1}(x)$ and $\pi_{S_2}(x)$ is $\emptyset$, we define $\norm{\pi_{S_1}(x) - \pi_{S_2}(x)}_2 = 1$. } 
    \begin{equation}\label{defn:sensitivity-pre-processing}
    \begin{aligned}
            \Delta_\infty(\pi) &= \sup_{\substack{S_1, S_2: d_{H}(S_1, S_2)=1}}d_H(\pi_{S_1}(S), \pi_{S_2}(S)),\\ 
            \Delta_2(\pi) &= \sup_{\substack{S_1, S_2: d_{H}(S_1, S_2)=1}}\max_{x\in S}\norm{\pi_{S_1}(x) - \pi_{S_2}(x)}_2.
    \end{aligned}
\end{equation}
\end{defn}

When the neighboring datasets $S, S'$ are from a dataset collection $\cL$, we define conditional sensitivity of the pre-processing algorithm $\pi$ as $\Delta_\infty(\cL, \pi)$ and $\Delta_\infty(\cL, \pi)$ in a similar manner. The conditional sensitivity is non-decreasing as the size of the set of $\cL$ increases, i.e. if $\cL \subset \cL'$, then $\Delta_\infty(\cL, \pi)\leq \Delta_\infty(\cL', \pi)\leq \Delta_\infty(\pi)$ and $\Delta_2(\cL, \pi)\leq \Delta_2(\cL', \pi)\leq \Delta_2(\pi)$ for all $\pi$. 

In~\Cref{thm:general-thm-training-only}, we present the privacy guarantees of pre-processed DP pipeline in terms of the RDP and SRDP parameters of the private algorithm and the sensitivity of the pre-processing algorithms. This is a meta theorem, which we then refine to get specific guarantees for different combinations of pre-processing and private mechanisms in~\Cref{thm:overall-privacy-table2}. With a slight abuse of notation, we denote the output of a private algorithm $\cA$ on $\pi_S(S)$ as $\cA\circ \pi(S)$.  

\begin{restatable}{thm}{generalTrainingOnly}\label{thm:general-thm-training-only}
    For a set of datasets $\cL$ and for any $\alpha \geq 1$, $\sdpL > 0$, consider an algorithm $\cA$ that is $(\alpha, \varepsilon(\alpha))$-RDP and $(\alpha, \tilde{\varepsilon}(\alpha, \sdpL))$-SRDP over the set $\cL$. For a pre-processing algorithm $\pi$ with $L_\infty$ sensitivity $\Delta_\infty$ and $L_2$ sensitivity $\Delta_2$, $\cA\circ \pi$ is $\br{\alpha, \widehat{\varepsilon}}$-RDP over $\cL$ for all 
    \(c_1, c_2 \geq 1 \), where 
    \begin{equation}
        \label{eq:general-results}
        \begin{aligned}
            \widehat{\varepsilon} \leq \max\bigg\{ &\frac{\alpha c_1 - 1}{c_1\br{\alpha - 1}}\tilde{\varepsilon}\br{\alpha c_1, \Delta_2\Delta_\infty} + \varepsilon\br{\frac{c_1\alpha-1}{c_1-1}}, \\
            &\frac{\alpha c_2 - 1}{c_2(\alpha - 1)}\varepsilon\br{\alpha c_2 }+ \tilde{\varepsilon}\br{\frac{c_2\alpha - 1}{c_2-1},\Delta_2\Delta_\infty}\bigg\}.
        \end{aligned}
    \end{equation}     
\end{restatable}
\begin{proof}[Proof sketch]
Consider two neighboring datasets $S_1$ and $S_2$, where $S_1 = S\cup \{z_1\},S_2 = S \cup\{z_2\}$. Let $\pi_1$ and $\pi_2$ be the output functions of the pre-processing algorithm $\pi$ on $S_1$ and $S_2$ respectively. Our objective is to upper bound  the Rényi divergence between the output distribution of $\cA$ on the pre-processed datasets $\pi_1(S_1)$ and $\pi_2(S_2)$. 

We proceed by constructing a new dataset $\tilde{S}$ that consists of $\pi_2(S)$ and the point $\pi_1(z_1)$, as indicated in~\Cref{fig:thm1-proof-sketch}. The construction ensures that $\tilde{S}$ and $\pi_2(S_2)$ are neighboring datasets, and that the $L_{12}$ distance between $\tilde{S}$ and $\pi_2(S_2)$ is upper bounded by $\Delta_2\Delta_\infty$. We then apply the RDP property of $\cA$ to upper bound the divergence between $\cA(\tilde{S})$ and $\cA(\pi_2(S_2))$. Then, we employ the SRDP property of $\cA$ to upper bound the divergence between $\cA(\tilde{S})$ and $\cA(\pi_1(S_1))$. Finally, we establish the desired upper bound in~\Cref{eq:general-results} by combining the previous two divergences using the weak triangle inequality of Rényi divergence (\Cref{lem:triangle-inequality-renyi-divergence}). 
\end{proof}

While the privacy guarantee provided by~\Cref{thm:general-thm-training-only} is conditional over a dataset collection $\cL$, it can be extend to an unconditional privacy guarantee over all possible datasets using the Propose-Test-Release (PTR) framework \citep{dwork09ptr}. We show an example of the application of PTR and its guarantees in~\Cref{sec:ptr}.

\begin{figure}[t]
    \centering
    \includegraphics[width = 0.6\linewidth]{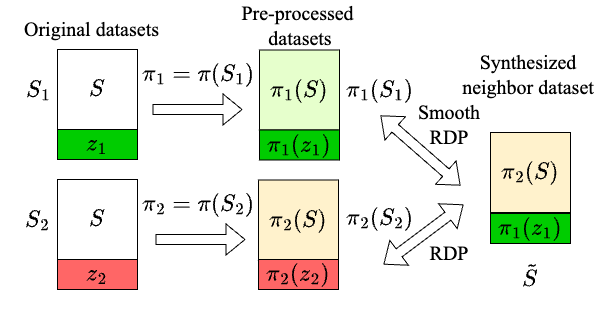}
    \caption{Illustration of the privacy analysis: For two neighboring datasets $S_1, S_2$, a pre-processing algorithm $\pi$ yields the pre-processed datasets $\pi_1(S_1)$ and $\pi_2(S_2)$ respectively. A synthetic dataset $\tilde{S}$ is constructed by combining the pre-processed datasets, ensuring that $\tilde{S}$ and $\pi_2(S_2)$ are neighboring datasets and that $\tilde{S}$ and $\pi_1(S_1)$ have bounded $L_{12}$ distance. }
    \label{fig:thm1-proof-sketch}
\end{figure}

\paragraph{Comparison with the group privacy or DDP analysis} A naive analysis on the privacy guarantee of $\cA\circ \pi$ using group privacy or DDP~\citep{Zhao17ddp,liu2016ddp} provides an upper bound that grows  polynomially\footnote{Linearly if we consider Approximate DP.} with \(\Delta_{\infty}\), which can be as large as $\text{poly}(n)$. In contrast, \Cref{thm:general-thm-training-only} implies a tighter bound on privacy for common DP mechanisms, especially when $\Delta_\infty\Delta_2 = o(n)$.\footnote{\Cref{thm:general-thm-training-only} is also applicable for other distance metrics, as long as Smooth RDP and the sensitivity of pre-processing are defined under comparable distance metric. } Next, we present various examples of pre-processing algorithms with small sensitivities $\Delta_\infty\Delta_2 = O(1)$. In~\Cref{fig:comparison-with-group-privacy}, we illustrate the improvement of the privacy analysis in~\Cref{thm:general-thm-training-only} over the conventional analysis via group privacy for numerous pre-processing algorithms.

\section{Privacy Guarantees of Common Pre-Processing Algorithms}
\label{sec:priv-guarantees-pre}
In this section, we use~\Cref{thm:general-thm-training-only} to provide overall privacy guarantees for several common pre-processing algorithms. First, in~\Cref{sec:sensitivity-preproc}, we define the pre-processing algorithms \(\pi\) and bound their \(L_2\) and \(L_{\infty}\) sensitivities. Then, in~\Cref{sec:overall-privacy} we provide the actual privacy guarantees for all combinations of these pre-processing algorithms and privacy mechanisms defined in~\Cref{sec:private-mechanisms}. We assume the instance space $\cX$ to be the Euclidean ball with radius $1$.

\subsection{Sensitivity of Common Pre-Processing Algorithms}
\label{sec:sensitivity-preproc}

\paragraph{Approximate deduplication} Many machine learning models, especially Large Language Models (LLMs), are trained on internet-sourced data, often containing many duplicates or near-duplicates. These duplicates can cause issues like memorization, bias reinforcement, and longer training times. To mitigate this, approximate deduplication algorithms are used in preprocessing, as discussed in \citet{2021gopher, 2023Llama}. We examine a variant of these algorithms termed \(\eta\)-approximate deduplication.

The concept of \(\eta\)-approximate deduplication involves defining a \textit{good cluster}. For a dataset \(S\) and a point \(x \in S\), consider \(B(x, \eta; S) = \{\tilde{x} \in S: \norm{\tilde{x}-x}_2 \leq \eta\}\), a ball of radius \(\eta\) around \(x\). This forms a \textit{good cluster} if \(B(x, \eta; S) = B(x, 3\eta; S)\). Essentially, this means that any point within a good cluster is at least \(2\eta\) distant from all other points outside the cluster in the dataset. We also define $B(S) = \{B_i = B(x_i, \eta; S)\}_{i = 1}^{m}$ as the set of all good clusters in a dataset $S$.


The \(\eta\)-approximate deduplication process, \(\pi_\eta^d\), identifies and retains only the center of each good cluster, removing all other points. Points are removed in reverse order of cluster size, prioritizing those with more duplicates.

\paragraph{Quantization}
Quantization, another pre-processing algorithm for data compression and error correction, is especially useful when the dataset contains measurement errors. We describe a quantization method similar to $\eta$-approximate deduplication, denoted as $\pi_\eta^q$: it identifies all \emph{good} clusters in the dataset, and replaces all points within each good cluster with the cluster's centroid in the reverse order of the size of the good clusters. The difference between de
duplication and quantization is that while quantization replaces near duplicates with a representative value, deduplication removes them entirely. We discuss the $L_2$ and $L_\infty$ sensitivity of deduplication and quantization in \Cref{prop:deduplication-sensitivities}.

\begin{restatable}{proposition}{deduplicationSensitivities}\label{prop:deduplication-sensitivities}
     For a dataset collection $\cL$, the $L_2$ and $L_\infty$ sensitivities\footnote{When the definition of neighboring dataset is refined to addition and deletion of a single data point, the $L_\infty$ sensitivity of both deduplication and quantization on a set $\cL$ can be reduced to $\max_{S\in \cL, B\in B(S)}|B|$. } of $\eta$-approximate deduplication $\pi_\eta^d$ and quantization $\pi_\eta^q$ are $\Delta_2(\cL, \pi_\eta^d) = 1$ and $ \Delta_2(\cL, \pi_\eta^q) = \eta$, and \[\Delta_\infty(\cL ,\pi_\eta^d)=\Delta_\infty(\cL, \pi_\eta^q)= \max_{S\in \cL}\max_{B \in B(S)}2\abs{B}.\]
\end{restatable}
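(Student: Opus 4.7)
My approach is to compare, for each $x \in S$, the outputs $\pi_{S_1}(x)$ and $\pi_{S_2}(x)$ on arbitrary neighbouring datasets $S_1 = S \cup \{z_1\}$ and $S_2 = S \cup \{z_2\}$ in $\cL$. The analysis splits naturally into controlling the per-point change (which bounds $\Delta_2$) and counting the affected points (which bounds $\Delta_\infty$).

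For the $L_2$ sensitivities, deduplication is essentially immediate: the output $\pi_S^d(x)$ lies in $\{x, \emptyset\}$, so the stated convention $\norm{x - \emptyset}_2 = 1$ gives $\Delta_2(\cL, \pi_\eta^d) \le 1$, with equality when a point's kept-or-removed status flips between $S_1$ and $S_2$. For quantization, I would invoke the good cluster property---every member of $B(y, \eta; S)$ lies within $\eta$ of the centre $y$---to argue that whenever $x$ is mapped to the centroid of a cluster, that centroid is within $\eta$ of $x$ itself. A short case analysis over whether $x$ is in a good cluster in $S_1$ and/or $S_2$ then bounds $\norm{\pi_{S_1}^q(x) - \pi_{S_2}^q(x)}_2 \le \eta$.

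For the $L_\infty$ sensitivity, I propose to decompose the replacement $S_1 \leftrightarrow S_2$ into two add/delete steps via the triangle inequality for Hamming distance,
\[
d_H(\pi_{S_1}(S), \pi_{S_2}(S)) \le d_H(\pi_{S_1}(S), \pi_S(S)) + d_H(\pi_S(S), \pi_{S_2}(S)).
\]
The statement then reduces to showing that, for a single-point addition $S \to S \cup \{z\}$, the number of $x \in S$ whose output changes is at most $\max_{B \in B(S)} |B|$, which is precisely the footnote's bound; doubling recovers the claimed $2 \max |B|$ for replacement neighbourhoods. The structural input for the per-side bound is the isolation property of good clusters: centres in $S$ are pairwise at distance more than $3\eta$, so the effect of $z$ on the clustering is localised, and the ``retain/remove'' labels (for $\pi_\eta^d$) and ``centroid'' labels (for $\pi_\eta^q$) on points of $S$ only change inside a single affected cluster.

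The main obstacle will be the combinatorial accounting of how adding a single point $z$ can alter the good-cluster structure of $S$. I will need to enumerate the possible transitions---$z$ creating a new good cluster, extending an existing one, or destroying one by sitting in the annulus $(\eta, 3\eta]$ around its centre---and to show that the labels on the old points of $S$ flip only within a single cluster of size $\le \max |B|$. The isolation property is the core tool, but care is required in the worst cases where $z$ simultaneously disrupts one cluster and seeds another; there the argument will hinge on showing that the two sides cannot both contribute full cluster changes against the same points.
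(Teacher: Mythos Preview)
Your proposal is essentially the paper's own argument: it too decomposes the replacement $S_1 \leftrightarrow S_2$ through the common dataset $S$, handles the $L_2$ sensitivities by the trivial diameter bound (deduplication) and the radius-$\eta$ property of good clusters (quantization), and then bounds the per-side Hamming change by a case analysis on whether the points within $\eta$ of the added point already lie in a good cluster and whether the added point creates or destroys one. The paper phrases the conclusion as ``the set of changed points is contained in $B(z_1,\eta;S_1)$'' rather than ``inside a single affected cluster,'' but the structure of the argument and the cases you list (new cluster, extended cluster, destroyed cluster via the annulus) are the same.
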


While the $L_2$ sensitivity of deduplication is a constant $1$,  $\Delta_\infty$ is usually upper bounded by a small number. This is because in realistic datasets, the number of near duplicates is generally small for sufficiently small $\eta$. This leads to upper bounding the product $\Delta_2\Delta_\infty$ by a small number. For example, in text datasets, the fraction of near duplicates is typically smaller than 0.1, as demonstrated in Table 2 and 3 in~\citet{lee-etal-2022-deduplicating}. 

\paragraph{Model-based imputation}
Survey data, such as US census data, often contains missing values, resulting from the participants unable to provide certain information,  invalid responses, and changing questionnaire over time. Hence, it is crucial to process the missing values in these datasets with data imputation methods, to make optimal use of the available data for analysis while minimizing the introduction of bias into the results. 

We consider several imputation techniques which use the values of the dataset to impute the missing value. This can involve training a regressor to predict the missing feature based on the other feature, or simply imputing with dataset-wide statistics like mean, median or trimmed mean. For the sake of clarity, we only discuss mean imputation in the main text but provide the guarantees for other imputations in~\Cref{prop:imputation-sensitivity,tab:imputation-sensitivity} in~\Cref{app:imputation-tech}.



\begin{corollary}\label{prop:imputation-mean-sensitivity}
    For a dataset collection $\cL$ with maximum \(p\) missing values in any dataset, the $L_\infty$ sensitivity of mean imputation $\pi_{\mathrm{mean}}$ over $\cL$ is \(p\) and the $L_2$-sensitivity of $\pi_{\mathrm{mean}}$ is upper bounded by $\frac{2}{n-p}$.
\end{corollary}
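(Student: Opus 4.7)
The plan is to bound both sensitivities directly from \Cref{defn:sensitivity-pre} using the explicit form of mean imputation. I fix neighboring datasets $S_1 = S \cup \{z_1\}$ and $S_2 = S \cup \{z_2\}$ in $\cL$, and recall that $\pi_{S_j}$ leaves every non-missing point fixed and replaces each missing point by the coordinatewise mean $\mu_{S_j}$ of the non-missing entries of $S_j$. Throughout, let $\Sigma$ denote the sum and $m$ the number of non-missing entries in the shared portion $S$.

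The $L_\infty$ bound is essentially immediate. The only points $x \in S$ on which $\pi_{S_1}$ and $\pi_{S_2}$ can disagree are the missing ones, since non-missing points map to themselves under both. As $S \subseteq S_1$ and $S_1 \in \cL$ has at most $p$ missing values, $S$ contains at most $p$ missing points, giving $d_H(\pi_{S_1}(S), \pi_{S_2}(S)) \leq p$ and hence $\Delta_\infty(\cL, \pi_{\mathrm{mean}}) \leq p$.

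For the $L_2$ bound, I reduce to controlling $\|\mu_{S_1} - \mu_{S_2}\|_2$ and split into three cases according to whether $z_1$ and $z_2$ are missing. When both are non-missing, the two means share denominator $m+1$ and the difference equals $(z_1 - z_2)/(m+1)$, bounded by $2/(m+1)$. When both are missing, $\mu_{S_1} = \mu_{S_2} = \Sigma/m$ and the difference vanishes. When exactly one, say $z_1$, is non-missing, I bring $(\Sigma + z_1)/(m+1) - \Sigma/m$ over the common denominator $m(m+1)$, reducing the claim to $\|m z_1 - \Sigma\|_2 \leq 2m$, which follows from $\|z_1\|_2 \leq 1$ and $\|\Sigma\|_2 \leq m$ by the triangle inequality. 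In every case the bound is at most $2/(m+1)$, and since $S_1, S_2 \in \cL$ forces $m + 1 \geq n - p$, I conclude $\Delta_2(\cL, \pi_{\mathrm{mean}}) \leq 2/(n - p)$.

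I expect the mixed case to be the main technical step: the two means no longer share a denominator, so the bound cannot follow from a direct cancellation and must instead combine the triangle inequality with the coarse estimates $\|z_1\|_2 \leq 1$ and $\|\Sigma\|_2 \leq m$. The other two cases are either trivial or a single-line computation, and the $L_\infty$ bound is a direct counting argument once one observes that only missing entries of $S$ are ever modified by $\pi$.
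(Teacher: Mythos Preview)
Your proof is correct. The paper's argument differs mainly in scope and brevity. It treats imputation feature-by-feature (each coordinate $j$ has its own count $n_j$ of available values and its own mean) and writes out only the case where both $z_1$ and $z_2$ contribute to every feature mean: then the imputed values differ by $(z_{1j}-z_{2j})/n_j$ in coordinate $j$, and summing gives $\|\mu_{S_1}-\mu_{S_2}\|_2 \le \|z_1-z_2\|_2/(n-p) \le 2/(n-p)$ in one line. It does not spell out your mixed case where the two means have different denominators, so the step you flagged as the main technical one is precisely what the paper glosses over; your three-case analysis is in that sense more complete. Your point-level model (an entire $x$ is either missing or not) is a simplification of the paper's feature-level model, but it leads to the same bound and your argument adapts coordinate-by-coordinate without change.
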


\paragraph{Principal Component Analysis} 
Principal Component Analysis (PCA) is a prevalent pre-processing algorithm. It computes a transformation matrix $A_k^\top \in \reals^{k\times d}$ using the top $k$ eigenvalues of the dataset $S$'s covariance matrix. PCA serves two main purposes: dimension reduction and rank reduction. For dimension reduction, denoted as $\pi_{\mathrm{PCA-dim}}$, PCA projects data into a lower-dimensional space (typically for high-dimensional data visualization), using the pre-processing function $\pi_{S, \mathrm{dim}}(x) = A_k^\top x$. For rank reduction, represented as $\pi_{\mathrm{PCA-rank}}$, PCA leverages the low-rankness of the dataset with the function $\pi_{S, \mathrm{rank}}(x) = A_kA_k^\top x$.

The primary difference between these two PCA applications is in the output dimensionality. Dimension reduction yields data of dimension $k$, while rank reduction maintains the original dimension $d$, but with a low-rank covariance matrix of rank $k$. We detail the $L_\infty$ and $L_2$ sensitivity for both PCA variants in \Cref{prop:pca-sensitivity}.

\begin{restatable}{proposition}{PCASensitivity}
    \label{prop:pca-sensitivity}
    For a dataset collection $\cL$, the $L_\infty$ sensitivity of $\pi_{\mathrm{PCA-dim}}$ and $\pi_{\mathrm{PCA-rank}}$ is the size of the datasets in $\cL$, \ie \(~n\).~
    The $L_2$-sensitivity of $\pi_{\mathrm{PCA-dim}}$ and $\pi_{\mathrm{PCA-rank}}$ is bounded by $2\Delta_2$ and $\Delta_2$ respectively, where \[\Delta_2 = \frac{4(3n + 2)}{n(n-1)\min\{\delta_{\min}^k(\cL), \delta_{\min}^1(\cL)\}},\]
    where  $\delta_{\min}^k(\cL) = \min_{S\in \cL} \lambda_k(S) - \lambda_{k+1}(S)$ is the minimum gap between the $k^{\text{th}}$ and $(k+1)^{\text{th}}$ eigenvalue over any covariance matrix of $S \in \cL$. 
\end{restatable}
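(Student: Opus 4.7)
The plan is to bound the two sensitivities separately: the $L_\infty$ bound follows from a direct combinatorial observation, while the $L_2$ bound combines an explicit perturbation analysis of the sample covariance matrix with the Davis--Kahan $\sin\Theta$ theorem. For any neighboring datasets $S_1 = S \cup \{z_1\}$ and $S_2 = S \cup \{z_2\}$, swapping a single point generically perturbs the sample covariance and hence every top-$k$ principal direction, so for almost every $x \in S$ we have $\pi_{S_1}(x) \neq \pi_{S_2}(x)$ under either PCA variant. The trivial upper bound $d_H(\pi_{S_1}(S), \pi_{S_2}(S)) \leq n$ is therefore tight, yielding $\Delta_\infty(\cL, \pi_{\mathrm{PCA-dim}}) = \Delta_\infty(\cL, \pi_{\mathrm{PCA-rank}}) = n$.

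For the $L_2$ sensitivity, the first step is an explicit operator-norm bound on the covariance perturbation. Let $\Sigma_1, \Sigma_2$ denote the sample covariance matrices of $S_1$ and $S_2$, and let $A_k, A_k'$ denote their respective top-$k$ eigenvector matrices. Using $\|z_i\|_2 \leq 1$ and the identity $\bar{x}_1 - \bar{x}_2 = (z_1 - z_2)/n$, a termwise expansion of the centered form $\Sigma_i = \frac{1}{n-1}\sum_{x \in S_i}(x - \bar{x}_i)(x - \bar{x}_i)^\top$ yields
\[
\|\Sigma_1 - \Sigma_2\|_{\mathrm{op}} \;\leq\; \frac{4(3n+2)}{n(n-1)},
\]
which is the source of the prefactor in $\Delta_2$. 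Applying the Davis--Kahan $\sin\Theta$ theorem to the top-$k$ invariant subspace, the corresponding orthogonal projections $P = A_k A_k^\top$ and $P' = A_k' A_k'^\top$ satisfy $\|P - P'\|_{\mathrm{op}} \leq \|\Sigma_1 - \Sigma_2\|_{\mathrm{op}}/\delta_{\min}^k(\cL)$. Since $\|x\|_2 \leq 1$ and $\pi_{S,\mathrm{rank}}(x) = P x$, this directly yields the claimed bound $\Delta_2$ for $\pi_{\mathrm{PCA-rank}}$.

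The main obstacle lies in the dimension-reducing variant, where $A_k$ is defined only up to an orthogonal rotation of the top-$k$ subspace (or a sign ambiguity in the case of simple eigenvalues). I handle this by aligning the two bases via the polar decomposition of $A_k^\top A_k'$ and comparing $\pi_{S_1}(x) = A_k^\top x$ with a suitably rotated $A_k'^\top x$. A columnwise Davis--Kahan bound, with sign conventions pinned by the top eigenvector, introduces both $\delta_{\min}^k$ (controlling the top-$k$ subspace) and $\delta_{\min}^1$ (controlling the orientation of the leading component), and contributes an extra factor of $2$ from the worst-case alignment, giving the stated $2\Delta_2$ bound with $\min\{\delta_{\min}^k(\cL), \delta_{\min}^1(\cL)\}$.

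The routine parts are the covariance perturbation computation and the subspace-level Davis--Kahan application for PCA-rank; both reduce to standard matrix-algebraic manipulations. The delicate part is the orientation handling for PCA-dim, since the $L_2$ sensitivity there compares unsigned low-dimensional embeddings and therefore depends on which basis representation of the top-$k$ subspace one chooses. Carefully selecting a canonical orientation---via the polar factor of the cross-Gram matrix between the two bases---is what forces the appearance of $\delta_{\min}^1$ alongside $\delta_{\min}^k$ in the final bound.
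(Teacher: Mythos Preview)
Your plan matches the paper's approach closely: both bound the sample-covariance perturbation explicitly and then invoke a Davis--Kahan--type subspace bound. Two minor differences in execution: the paper works entirely in Frobenius norm (obtaining $\|\tilde\Sigma-\Sigma\|_F\le 2(3n+2)/[n(n-1)]$) rather than operator norm, and for $\pi_{\mathrm{PCA\text{-}rank}}$ it cites a Zwald--Blanchard projection-perturbation lemma rather than Davis--Kahan directly, though these give equivalent bounds. For $\pi_{\mathrm{PCA\text{-}dim}}$ the paper simply writes $\|A_k-\tilde A_k\|_F$, passes to canonical angles, and invokes the Yu--Wang--Samworth variant of Davis--Kahan whose denominator is $\min\{\lambda_k-\lambda_{k+1},\lambda_1-\lambda_2\}$; it does not explicitly discuss basis alignment, so your polar-decomposition treatment is in fact more careful than the paper's own argument. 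The one point in your sketch that deserves more justification is the claim that pinning orientation ``by the top eigenvector'' introduces only $\delta_{\min}^1$: if individual eigenvector signs must be fixed, one would naively expect \emph{all} intermediate gaps $\lambda_j-\lambda_{j+1}$, $j\le k$, to enter, not just the first and $k$th; the paper sidesteps this by citing the lemma wholesale rather than deriving the dependence, so you should either invoke the same Yu--Wang--Samworth statement directly or make precise which alignment convention yields exactly those two gaps.
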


\paragraph{Standard Scaling} Scaling is one of the most common pre-processing methods. In~\Cref{prop:sensitivity-scaling}, we provide the sensitivity results of standard scaling which scales each feature to have mean 0 and standard deviation 1, and min max scaling, which scales each feature to the interval between 0 and 1.

\begin{restatable}{proposition}{StandardScalingSensitivity}
\label{prop:sensitivity-scaling}
    For a dataset collection $\cL$, the $L_\infty$ sensitivity of standard scaling and min max scaling is the size of the datasets in $\cL$, i.e. $n$. The $L_2$ sensitivity of standard scaling is \[\Delta_2 = \frac{2}{\sigma_{\min}^3 n} + \frac{2}{n\sigma_{\min}},\]
    where $\sigma_{\min}$ is the minimum standard deviation over datasets in $\cL$.
\end{restatable}

\subsection{Privacy Analysis for Pre-Processing Algorithms}
\label{sec:overall-privacy}

\begin{table*}[t]
\caption{Overall privacy guarantees $\widehat{\varepsilon}$ of pre-processed DP pipelines with $\alpha \geq 11$. Let $p$ represent the $L_\infty$ sensitivity of deduplication, quantization and mean imputation. We also assume the size of the dataset $n \geq 101$ for PCA, and the Lipschitz and smoothness parameters, along with global sensitivity, are set to $1$. We omit the privacy guarantees of deduplication due to space constraint. However, we note that the privacy guarantees for deduplication are the same as those for quantization with $\eta = 1$. See~\Cref{app:sec4-proofs} for details. }
\centering\small
\begin{tabular}{C{0.1\linewidth}C{0.2\linewidth}C{0.2\linewidth}C{0.2\linewidth}C{0.2\linewidth}}
\hline
\toprule
 & Quantization & Mean imputation & PCA & Standard Scaling\\
\midrule
$\cM_G$ & ${1.05\alpha\varepsilon^2}\br{1+\eta^2p^2}$& ${1.05\alpha\varepsilon^2}\br{1+\frac{4p^2}{(n-p)^2}}$& ${1.05\alpha\varepsilon^2}\br{1+\frac{12.2^2}{(\delta_{\min}^k)^2}}$&${1.05\alpha\varepsilon^2}\br{1+\frac{4}{\sigma_{\min}^3}}$\\
$\cA_{\mathrm{GD}}$&${1.05\alpha\varepsilon^2}\br{4+\eta^2p^2}$&${4.2\alpha\varepsilon^2}\br{1+\frac{p^2}{(n-p)^2}}$&${1.05\alpha\varepsilon^2}\br{4+\frac{12.2^2}{(\delta_{\min}^k)^2}}$&${4.2\alpha\varepsilon^2}\br{1+\frac{1}{\sigma_{\min}^3}}$\\
$\cM_L/\cM_E$ &$\varepsilon\br{1 + \eta p} $ & $\varepsilon\br{1 + \frac{2p}{n-p}}$  &$\varepsilon\br{1+\frac{12.2}{\delta_{\min}^k}}$&$\varepsilon\br{1+\frac{4}{\sigma_{\min}^3}}$   \\
$\cA_{\mathrm{SGD-samp}}$&--&--&${1.05\alpha\varepsilon^2}\br{2\alpha + \frac{12.2^2}{(\delta_{\min}^k)^2}}$&${2.1\alpha\varepsilon^2}\br{\alpha + \frac{8}{\sigma_{\min}^6}}$\\

$\cA_{\mathrm{SGD-iter}}$&${1.1\alpha\varepsilon^2}\br{1+\frac{\frac{\eta^2p^2n}{\log n}}{\frac{n-p}{\log n-p}}}$&${1.1\alpha\varepsilon^2}\br{1+\frac{4p^2\frac{n}{\log n }}{\frac{(n-p)^3}{\log n-p}}}$&--&--\\
\bottomrule
\end{tabular}
\label{tab:comparison}
\end{table*}

\begin{figure*}[t]
\centering
\begin{subfigure}{0.32\textwidth}
    \includegraphics[width=\textwidth]{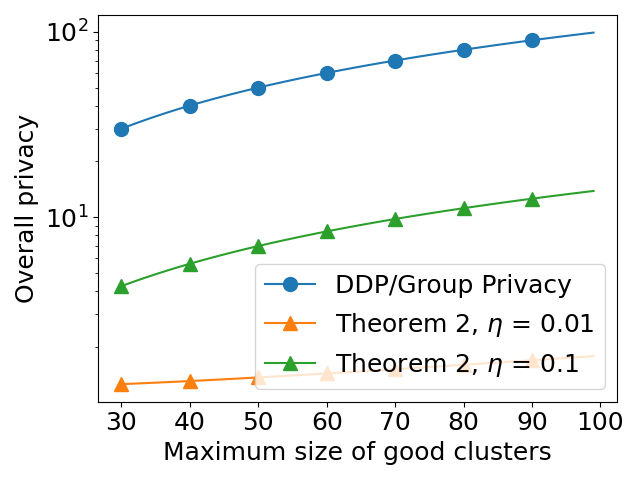}
    \caption{Quantization}
\end{subfigure}
\hfill
\begin{subfigure}{0.32\textwidth}
    \includegraphics[width=\textwidth]{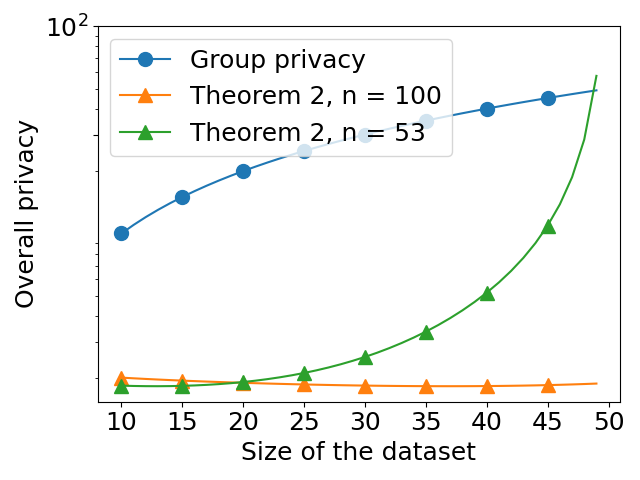}
    \caption{Mean Imputation}
    \label{subfig:mean-imputation-bound}
\end{subfigure}
\hfill
\begin{subfigure}{0.32\textwidth}
    \includegraphics[width=\textwidth]{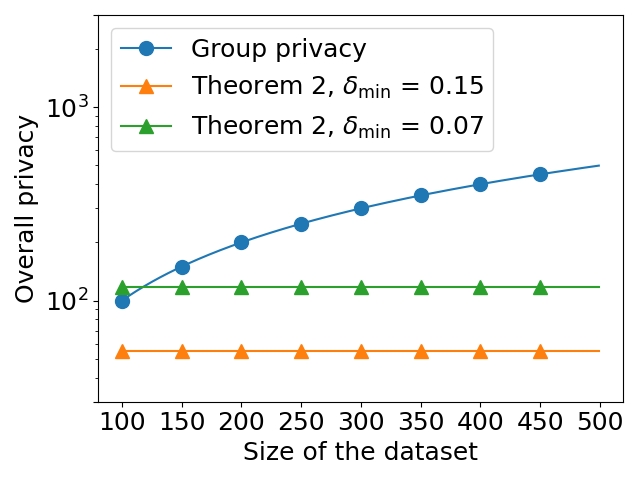}
    \caption{PCA for rank reduction}
\end{subfigure}
        
\caption{Visualization of the overall privacy of pre-processed Gaussian mechanism analysed with group privacy and our bound from Theorem 2. Here, $\eta$ is the distance threshold of the quantization algorithm, $n$ is the size of the possible datasets, and $\delta_{\min}$ is the minimum gap between the $k^{th}$ and the $k+1^{th}$ eigenvalue of all possible datasets. }
\label{fig:comparison-with-group-privacy}
\end{figure*}

After establishing the necessary elements of our analysis, including the sensitivities of pre-processing algorithms (\Cref{sec:pre-processing}) and the SRDP parameters of private mechanisms (\Cref{tab:sdp-parameters}), we are ready to present the exact overall privacy guarantees for specific pre-processed DP pipelines. In~\Cref{tab:comparison}, we present the privacy guarantees for various combinations of pre-processing methods and private mechanisms. 

\begin{thm}[Informal]\label{thm:overall-privacy-table2}
Pre-processed DP pipelines comprised of all combinations of private mechanisms in~\Cref{sec:private-mechanisms} and pre-processing algorithms in~\Cref{sec:sensitivity-preproc}  are $(\alpha, \widehat{\varepsilon})$-RDP where $\widehat{\varepsilon}$  is specified in~\Cref{tab:comparison}. 
\end{thm}

\Cref{tab:comparison} shows that the pre-processing methods discussed in~\Cref{sec:sensitivity-preproc} typically lead to a minimal, constant-order increase in the privacy cost for common DP mechanisms, depending on the datasets in $\cL$. Specifically, deduplication results in a constant increase in the privacy cost when the datasets in $\cL$ do not contain large clusters, \ie~$\Delta_\infty = o(n)$, whereas quantization can handle larger clusters of size $\Delta_\infty = o(\nicefrac{n}{\eta})$. The privacy cost of mean imputation remains constant for datasets with few missing values ($p = o(n)$). Standard scaling leads to constant amplification in privacy cost when all features in dataset collection $\cL$ have non-vanishing standard deviation. For PCA, we only present the results for rank reduction and the results for dimension reduction follows similarly. The additional privacy cost remains small for low rank datasets with $\delta_{\min}$ bounded away from 0. 

Moreover,~\Cref{fig:comparison-with-group-privacy} demonstrates a comparison between our SRDP-based analysis (\Cref{thm:general-thm-training-only}) and the naive group privacy/DDP analysis for pre-processed Gaussian mechanism. This comparison illustrates the advantage of our SRDP-based analysis over group privacy/DDP and how the privacy guarantees vary with properties of the dataset collections $\cL$. For quantization, the SRDP-based privacy analysis provides significantly stronger privacy guarantees for smaller \(\eta\). For mean imputation, the difference between SRDP-based analysis and group pirvacy analysis is large for smaller number of missing values. However, in the extreme scenario where over 90\% of the data points are missing, group privacy achieves a similar guarantee as our analysis with SRDP. In the case of PCA for rank reduction, the privacy parameter obtained with SRDP-based privacy analysis remains constant, while it increases for group privacy analysis with the size of the dataset.

\section{Unconditional Privacy Guarantees in Practice}
\label{sec:practical-implications}
The previous section provides a comprehensive privacy analysis for pre-processed DP pipelines where the resulting privacy guarantee depends on properties of the dataset collection. In this section, we first illustrate how conditional privacy analysis can become ineffective for pathological datasets~(\Cref{sec:limitation-conditional-privacy}). Then, we introduce a PTR-inspired framework to address this and establish unconditional privacy guarantees in~\Cref{sec:ptr}. Using PCA for rank reduction as an example, we provide convergence guarantee for excess empirical loss for generalized linear models and validate our results with synthetic experiments. 

\subsection{Limitations of conditional privacy guarantee}\label{sec:limitation-conditional-privacy}

The previous analysis in~\Cref{sec:pre-processing,sec:priv-guarantees-pre} rely on the chosen dataset collection $\cL$. Typically, for dataset collections \(\cL\) that are well-structured, non-private pre-processing leads to only a slight decline in the privacy parameters as discussed in~\Cref{sec:overall-privacy}. However, this degradation can become significant for datasets that exhibit pathological characteristics. In the following, we present several examples where our guarantees in~\Cref{thm:general-thm-training-only} become vacuous due to the pathological nature of the dataset. 

    \paragraph{Imputation} If the number of missing points is comparable to the size of the dataset i.e. \(p\approx n\), the privacy guarantee in~\Cref{thm:general-thm-training-only} deteriorates to the level of those obtained from group privacy or DDP. This deterioration is also reflected in~\Cref{subfig:mean-imputation-bound}.

    \paragraph{PCA} When performing PCA with a reduction to rank $k$, the privacy guarantee can worse and scale with $n$ for very small $\delta_{\min}^k$, in particular when $\delta_{\min}^k=\bigO{\nicefrac{1}{n}}$. This situation can occur naturally when the data is high rank or when \(k\) is chosen ``incorrectly''.
    
    \paragraph{Deduplication} Consider a dataset $S = \{z_1, ..., z_n\}$, where each pair of distinct data points are at least $\eta$ distance apart, yet each point are no more than $\eta$ distance from a specific reference point $x$, \ie $\norm{z_i - z_j}_2\geq \eta$, $\norm{z_i - x}_2\leq \eta$, for $i\neq j\in [n]$.Under these conditions, applying deduplication on $S$ leaves the dataset unchanged as all points are uniformly \(\eta\) distance from each other. However, if \(x\) is added to \(S\), then deduplication would eliminate all points except \(x\). As a result, the $L_\infty$ sensitivity of deduplication becomes $n$, resulting in the same privacy analysis as group privacy. Interestingly, a similar example was leveraged by~\citet{debenedetti2023sidechannels} in their side-channel attack.


It is important to recognise that datasets exhibiting pathologucal characteristics, like those above, are usually not of practical interest in data analysis. For instance, applying mean imputation to a dataset where the number of missing points is nearly equal to the size of the dataset or implementing approximate deduplication that results in the removal of nearly all data are not considered sensible practice. A possible solution to this problem is to potentially refuse to provide an output when the input dataset is deemed pathological, provided that the decision to refuse is made in a manner that preserves privacy. This approach is adopted in the following section, where we employ the Propose-Test-Release framework proposed by~\citet{dwork09ptr} to establish unconditional privacy guarantees over all possible datasets, at the expense of accuracy on datasets that are ``pathological''. 

\subsection{Unconditional privacy guarantees via PTR}\label{sec:ptr}

We present~\Cref{alg:ptr-imputation} that applies the PTR procedure to combine the non-private PCA for rank reduction with DP-GD. It exhibits unconditional privacy guarantee over all possible datasets (\Cref{thm:ptr-privacy-guarantee}). While this technique is specifically described for combining PCA with DP-GD, it's worth noting that the same approach can be applied to other pairings of DP mechanisms and pre-processing algorithms, though we do not explicitly detail each combination since their implementations follow the same basic principles.

\begin{algorithm}
    \caption{PTR for $\pi_{\mathrm{PCA-rank}}$ on DP-GD}
    \textbf{Input:} Dataset $S$, estimated lower bound $\beta$ of $\delta_k(S)$, privacy parameters $\varepsilon, \delta$, Lipschitzness  $L$
    \begin{algorithmic}[1]
        \STATE Set
            $\Gamma = \min_{{S':\delta_{k}(S')< \beta}}d_H(S, S') + Lap\br{\frac{1}{\varepsilon}}$\qquad\qquad\qquad\qquad\qquad\qquad\qquad \(\delta_{k}(S') = \lambda_k\br{S'} - \lambda_{k+1}\br{S'}\)
        \IF{$\Gamma \leq {(\log\frac{2}{\delta})}/{\varepsilon}$}
        \STATE Return $\perp$ 
        \ELSE 
        \STATE Compute the pre-processing function $\pi_{\mathrm{PCA-rank}}^S$ using the dataset $S$ and set $\sigma = \frac{2L\sqrt{T}}{\varepsilon n}$.
        \STATE Return $\cA_{\mathrm{DP-GD}}(\pi_{\mathrm{PCA-rank}}^S(S))$ with parameter $\sigma$. 
        \ENDIF
    \end{algorithmic}
    \label{alg:ptr-imputation}
\end{algorithm}

\begin{restatable}{thm}{PTRGuarantee}\label{thm:ptr-privacy-guarantee}
    For any $L$-Lipschitz and $\mu$-smooth loss function $\ell$, $\varepsilon > 0$ and $\delta \leq\exp\br{-1.05\varepsilon^2\br{1 + \frac{12.2^2\mu^2}{L^2\beta^2}}}$, \Cref{alg:ptr-imputation} with privacy parameters $\varepsilon, \delta$, and estimated lower bound $\beta$ is $\br{\widehat{\varepsilon} +\varepsilon,\delta}$-DP on a dataset of size $n\geq 101$, where $\widehat{\varepsilon} = 3\varepsilon\sqrt{1.05\br{1 + \frac{12.2^2\mu^2}{L^2\beta^2}}\log\frac{1}{\delta}}$. 
\end{restatable}
The RDP parameter on PCA with DP-GD in~\Cref{tab:comparison} with the same parameter $\sigma$ can be converted to $\br{\widehat{\varepsilon},\delta}$- DP if it holds that for all \(S\in\cL\), $\delta_{k}\br{S} \geq \beta$ . In comparison, the guarantee in~\Cref{thm:ptr-privacy-guarantee} is marginally worse by an additive $\varepsilon$, but it remains applicable even when $\delta_{k}\br{S}< \beta$ for some \(S\in\cL\). However, when $\delta_{k}\br{S}< \beta$ for some \(S\in\cL\), the bound in~\Cref{tab:comparison} degrades to the same as obtained via group privacy/DDP.


For generalized linear models, we present an high probability upper bound on the excess empirical loss of~\Cref{alg:ptr-imputation} conditional on the properties of the private dataset. Given a dataset $S = \{(x_i, y_i)\}_{i = 1}^n$ and a loss function $\ell$, let $\hat{\ell}_S(\theta) = \frac{1}{n}\sum_{i = 1}^n \ell(\langle\theta, x_i\rangle, y_i)$ denote the empirical loss of a generalized linear model on the dataset $S$ and let $\theta^\star = \text{arg}\min_{\theta\in \reals^d} \hat{\ell}_S(\theta)$. For simplicity, we assume \(S\) is centered and \(\ell\) is L-Lipschitz. 

\begin{restatable}{proposition}{PCAaccuracy}\label{prop:pca-accuracy}
For $\delta, \varepsilon$ defined in~\Cref{alg:ptr-imputation}, $n\geq 101$, and any \(\beta \leq\delta_k\br{S}\), with probability at least $1-\xi$, ~\Cref{alg:ptr-imputation} outputs $\hat{\theta}$ such that the excess empirical risk 
     \begin{equation}
         \label{eq:ptr-pca-acc-guarantee}
         \bE\bs{\hat{\ell}_S(\hat{\theta})} - \hat{\ell}_S(\theta^\star)= \bigO{\frac{L\sqrt{1 + k\log \nicefrac{1}{\delta}}}{\varepsilon n}}+ 2L\Lambda, 
     \end{equation}
   \small  where $\xi = \frac{1}{2\delta}\exp\br{- \frac{(\delta_{k}(S) 
     - \beta)n\varepsilon}{12.2} }, \Lambda = \sum_{i = k+1}^d\lambda_{i}(S)$  where the high probability is over the randomness in Step 1 and the expectation is over the randomness of Step 6.
\end{restatable}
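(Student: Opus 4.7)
The proof decouples into (a) a tail bound on the PTR gate and (b) an excess-risk analysis of DP-GD on the PCA-projected dataset. For (a), the plan is to lower-bound the ``distance-to-failure''
\[ D(S) := \min_{S':\,\delta_k(S')<\beta} d_H(S,S'). \]
Since $\|x\|_2 \leq 1$, changing a single point perturbs the empirical covariance $\tfrac{1}{n}\sum_i x_ix_i^\top$ by at most $2/n$ in operator norm, so by Weyl's inequality the eigengap $\delta_k$ can move by at most a constant multiple of $1/n$ per changed point. The sharper constant $12.2$ in the statement comes from the same sensitivity bookkeeping that underlies~\Cref{prop:pca-sensitivity}, yielding $D(S) \geq (\delta_k(S)-\beta)n/12.2$. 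Applying the Laplace lower-tail $\bP[\operatorname{Lap}(1/\varepsilon)\leq -t] = \tfrac{1}{2}e^{-\varepsilon t}$ at $t = D(S) - \log(2/\delta)/\varepsilon$ then gives exactly $\xi = \tfrac{1}{2\delta}\exp(-(\delta_k(S)-\beta)n\varepsilon/12.2)$, which is the claimed failure probability of the gate.

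Conditional on the gate passing, write $\tilde\ell_S(\theta) = \tfrac{1}{n}\sum_i \ell(\langle\theta,A_kA_k^\top x_i\rangle, y_i)$ for the objective actually optimised by DP-GD and let $\tilde\theta^\star = \arg\min\tilde\ell_S$. Using $\tilde\ell_S(\tilde\theta^\star)\leq\tilde\ell_S(\theta^\star)$,
\[
\bE[\hat\ell_S(\hat\theta)] - \hat\ell_S(\theta^\star) \leq \underbrace{\bE[\hat\ell_S(\hat\theta) - \tilde\ell_S(\hat\theta)]}_{(\mathrm{I})} + \underbrace{\bE[\tilde\ell_S(\hat\theta)] - \tilde\ell_S(\tilde\theta^\star)}_{(\mathrm{II})} + \underbrace{\tilde\ell_S(\theta^\star) - \hat\ell_S(\theta^\star)}_{(\mathrm{III})}.
\]
Because $\tilde\ell_S$ depends on $\theta$ only through $A_k^\top\theta \in \reals^k$, a reparametrisation reduces (II) to standard DP-GD on a $k$-dimensional convex $L$-Lipschitz problem with $\sigma = 2L\sqrt T/(\varepsilon n)$, contributing $O\br{L\sqrt{1 + k\log(1/\delta)}/(\varepsilon n)}$. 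For (III), a second-order Taylor expansion of $\ell$ about $\langle\theta^\star, x_i\rangle$ combined with the first-order optimality condition $\tfrac{1}{n}\sum_i \ell'(\langle\theta^\star, x_i\rangle)\,x_i = 0$ cancels the linear term, leaving a quadratic residual bounded by $\tfrac{\mu}{2}\|\theta^\star\|_2^2\,\Lambda$ via $\tfrac{1}{n}\sum_i \|(I-A_kA_k^\top)x_i\|_2^2 = \Lambda$ and the centering assumption on $S$; under the stated parameter regime this is at most $L\Lambda$. Term (I) is handled analogously by decomposing $\hat\theta$ along $\operatorname{range}(A_k)$ and its orthogonal complement: the null-space component arises only from the isotropic DP-GD noise and interacts with the data only through $(I-A_kA_k^\top)x_i$, so a second-moment argument again yields $\leq L\Lambda$.

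The delicate step is obtaining the linear-in-$\Lambda$ rate $O(L\Lambda)$ rather than the naive Cauchy--Schwarz bound $O(L\sqrt\Lambda)$ for the approximation terms (I) and (III); this is where first-order optimality together with smoothness of $\ell$ is essential, as it kills the linear-in-$(I-A_kA_k^\top)x_i$ piece of the Taylor expansion and lets the quadratic tail $\Lambda$ dominate. The remainder is bookkeeping: the Laplace tail is elementary, and the reduction of DP-GD from $d$ to $k$ effective dimensions follows the standard reparametrisation argument whose only care lies in tracking the correct Gaussian-noise subspace.
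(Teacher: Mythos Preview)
Your PTR-gate analysis (part (a)) matches the paper exactly: Weyl plus the covariance-perturbation bound from \Cref{prop:pca-sensitivity} gives the per-point eigengap sensitivity $12.2/n$, hence the lower bound on $D(S)$, and the Laplace tail yields $\xi$. The three-term decomposition and the appeal to the rank-$k$ DP-GD bound of \citet{song21Evading} for the optimisation term (your (II), the paper's (b)) also coincide.

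Where you diverge is in the approximation terms (I) and (III). The paper handles both uniformly and elementarily: since DP-GD projects iterates onto the unit ball, $\|\theta\|_2\le 1$ for every $\theta$ of interest, and Lipschitzness alone gives
\[
\bigl|\hat\ell_{S_k}(\theta)-\hat\ell_S(\theta)\bigr|\le \frac{L}{n}\sum_{i=1}^n\bigl\|A_kA_k^\top x_i - x_i\bigr\|_2,
\]
which the paper identifies with the PCA reconstruction error $L\Lambda$. No Taylor expansion, no first-order optimality of $\theta^\star$, no smoothness of $\ell$, and no separate noise-subspace argument for $\hat\theta$ are invoked. Your route is more elaborate: the first-order-optimality cancellation is a nice trick for (III), but it does not transfer to (I) since $\hat\theta$ is not an optimiser, which is why your treatment of (I) becomes hand-wavy (``second-moment argument''). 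The paper's uniform Lipschitz bound sidesteps this asymmetry entirely. Your observation that the naive Cauchy--Schwarz route gives $L\sqrt{\Lambda}$ rather than $L\Lambda$ is well-taken and arguably sharper than what the paper writes, but the paper does not engage with this distinction and simply states the Lipschitz bound.
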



The results follow a similar analysis as \cite{song21Evading}, which shows that the convergence bound of DP-GD scales with the rank of the dataset. However, when the dataset remains full rank but the first $k$ eigenvalues dominates the rest, e.g. $\Lambda = O(\sqrt{k})$, the convergence bound following \cite{song21Evading} is $O(\sqrt{d})$. In contrast,~\Cref{prop:pca-accuracy} leads to a dimension-independent convergence bound of order $O(\sqrt{k})$, with a slight degradation in the privacy guarantee. Here, \(\beta\) introduces the trade-off between privacy and utility. Large \(\beta\) leads to tighter privacy guarantee~(small effective \(\varepsilon\) in~\Cref{thm:ptr-privacy-guarantee}) at the risk of worse utility~(large \(\xi\) in~\Cref{prop:pca-accuracy}). 

\paragraph{Experimental results} We conducted experiments on a synthetic approximately low rank dataset to corroborate our results in~\Cref{prop:pca-accuracy}, and summarised the results in~\Cref{fig:training-acc-pca}. We generate a 2-class low rank dataset consisting of 1000 data points with dimension 6000 and approximately rank 50. The synthetic dataset has positive yet small eigenvalues for the $k^{\text{th}}$ eigenvectors for $k\geq 50$, ensuring $2L\Lambda$ in~\Cref{eq:ptr-pca-acc-guarantee} is small but positive. 

For each overall privacy parameter $\varepsilon$, we evaluated the excess empirical risk of DP-SGD with: a) non-private PCA with an adjusted privacy parameter from~\Cref{tab:comparison}, b) DP-PCA, with half of the privacy budget to allocated to DP-PCA, and c) no pre-processing. 

\Cref{fig:training-acc-pca} shows that pre-processed DP pipeline outperforms logistic regression without pre-processing. This is because the dataset's approximate low rankness, indicated by $2L\Lambda > 0$, prevents logistic regression from leveraging the dimension-independent optimization guarantees using the original high dimensional datasets~\citep{song21Evading}. However, non-private PCA, while incurring a minor constant-order privacy cost, effectively utilizes the data's low rankness and offers significant benefits, especially at smaller $\varepsilon$ values where optimization error is dominated by the first term in~\Cref{eq:ptr-pca-acc-guarantee}.~\Cref{fig:training-acc-pca} also demonstrates that DP-PCA exhibits the worst performance among the three methods, beccause the inherent error of order $\Omega(d)$ of DP-PCA dominates given a high dimensional dataset with $d \ll n$\citep{Liu22dp-pca}. 

In~\Cref{app:experiments}, we provide the details on experimental setups and a discussion on the practical modification of~\Cref{alg:ptr-imputation} with clipping.

\begin{figure}
\centering
\includegraphics[width=0.4\textwidth]{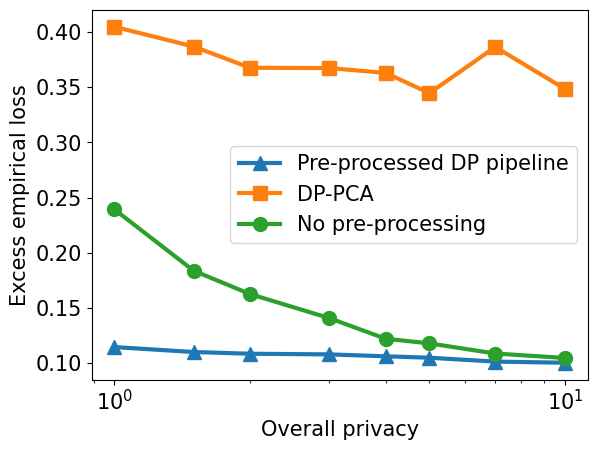}
\caption{Comparison of excess empirical loss of private logistic regression: for each level of overall privacy $\varepsilon$, pre-processed DP pipeline consistently outperforms other methods.}
\label{fig:training-acc-pca}
\end{figure}

        

\section{Conclusion}
In this paper, we investigate the often-neglected impact of pre-processing algorithms in private ML pipelines. We propose a framework to assess the additional privacy cost from non-private pre-processing steps using two new technical notions: Smooth RDP and sensitivities of pre-processing algorithms. Finally, we propose a PTR-based procedure to relax some of the necessary assumptions in our framework and make it practically usable. Several interesting directions of future work remain unexplored, including handling more complex pre-processing algorithms, such as pre-trained deep neural feature extractors and private algorithms like private data synthesis.

\section*{Acknowledgements}
We thank Francesco Pinto for the initial discussion of the project. We also thank Alexandru \c{T}ifea, Jiduan
Wu and Omri Ben-Dov for their useful feedback. Additionally, we extend our gratitude to the reviewers of ICML for their careful reviews and insightful suggestions.


\bibliography{arxiv_main}
\bibliographystyle{icml2024}

\newpage
\appendix
\onecolumn

We present the detailed proof of our results and some additional findings in the appendix. To distinguish the lemmas from existing literature and new lemmas used in our proofs, we adopt alphabetical ordering for established results and numerical ordering for our contributions. 

\section{Proofs regarding Smooth RDP (\Cref{sec:srdp})}

In this section, we present the proofs regarding SRDP properties. In \Cref{app:sec2-proofs}, we introduce SRDP satisfies some desired properties, and present their proofs.~\Cref{app:srdp-table} provides the exact SRDP parameters of common DP mechanisms.

\subsection{Properties of SRDP and their proofs}

While conditional SRDP over a set $\cL$ can be considered as a special case of Pufferfish R\'enyi Privacy~\citep{kifer14pufferfish}, we show that it satisfies desirable properties, such as sequential composition and privacy amplification by subsampling, which are not satisfied by the more general Pufferfish Renyi Privacy \citep{Pierquin2023renyi}. 

\paragraph{Properties of SRDP} Similar to RDP, SRDP satisfies (sequential) composition and closure under post-processing. 

\begin{restatable}{lem}{propertiesSRDP}\label{lem:properties-srdp}
   Let $\cL$ be any dataset collection. Then, the following holds.
   \begin{itemize}[leftmargin= 0.8em]
       \item\textbf{Composition} Let $\alpha > 0, \varepsilon_i: \reals\times \reals \to \reals, \forall i\in [k]$. For any $i\in [k]$, if the randomized algorithms $\cA_i$ is $(\alpha, \varepsilon_i(\alpha, \sdpL))$-SRDP over $\cL$, then the composition $(\cA_1, \ldots, \cA_k)$ is $(\alpha, \sum_{i = 1}^k \varepsilon_i(\alpha, \sdpL))$-SRDP over $\cL$. 
       \item\textbf{Post-processing} Let $\alpha > 0, \varepsilon: \reals \times \reals \to \reals$, and $f$ be an arbitrary algorithm. For any $\sdpL > 0$, if $\cA$ is $(\alpha, \varepsilon(\alpha, \sdpL))$-SRDP over $\cL$, then $f\circ\cA$ is $(\alpha, \varepsilon(\alpha, \sdpL))$-SRDP over $\cL$. 
   \end{itemize} 
\end{restatable}

SRDP also satisfies a form of Privacy amplification by subsampling. We state the weaker version without additional assumptions in~\Cref{app:sec2-proofs} and use a stronger version for $\cA_{\mathrm{SGD-iter}}$ in~\Cref{thm:sdp-parameters} with some light assumptions.

\label{app:sec2-proofs}
\propertiesSRDP*
\begin{proof}
    \textbf{Composition: } For $i\in [k]$ and any two datasets $S, S'\in \cL$ with $d_{12}(S, S')\leq \sdpL$, let $\nu_i$ and $\nu_i'$ be the distribution of $\cA_i(S)$ and $\cA_i(S')$ respectively. Then, using to the independence between $\cA_i(S)$ and $\cA_{i+1}(S)$,  denote the joint distribution of $\cA_i(S)$ and $\cA_{i+1}(S)$ as $\nu_i\times \nu_{i+1}$ and the joint distribution of $\cA_i(S')$ and $\cA_{i+1}(S')$ as $\nu_i'\times \nu_{i+1}'$. 

    Then, we prove the composition property of by upper bounding $\alphaRDP{\cA_1(S), \ldots, \cA_k(S)}{\cA_1(S'), \ldots, \cA_k(S')}$ for $S, S'\in \cL$ with $d_{12}(S, S') \leq \sdpL$. To achieve this, we employ the following lemma on additivity of Rényi divergence (\Cref{lem:additivity-renyi-divergence}). 
    
    \begin{lemL}[Additivity of Renyi divergence \citep{Erven12RenyiDivergence}]\label{lem:additivity-renyi-divergence}
        For $\alpha > 1$ and distributions $\nu_1, \nu_2, \nu_1', \nu_2'$, \[\alphaRDP{\nu_1\times \nu_2}{\nu_1'\times \nu_2'} = \alphaRDP{\nu_1}{\nu_1'} + \alphaRDP{\nu_2}{\nu_2'}. \]
    \end{lemL}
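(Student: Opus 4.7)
The plan is to prove the identity by directly unfolding the definition of R\'enyi divergence from \Cref{defn:renyi-divergence} and exploiting the multiplicative structure of product measures. First, I would write
\[
D_\alpha(\nu_1 \times \nu_2 \,\|\, \nu_1' \times \nu_2') = \frac{1}{\alpha - 1}\log \mathbb{E}_{(x,y)\sim \nu_1'\times \nu_2'}\left[\frac{(\nu_1\times \nu_2)(x,y)}{(\nu_1'\times \nu_2')(x,y)}\right]^\alpha,
\]
where the expectation is taken over the product reference measure.

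Next I would use the defining property of product measures, namely that the Radon--Nikodym derivative factorises as
\[
\frac{(\nu_1\times \nu_2)(x,y)}{(\nu_1'\times \nu_2')(x,y)} \;=\; \frac{\nu_1(x)}{\nu_1'(x)}\cdot\frac{\nu_2(y)}{\nu_2'(y)}.
\]
Raising to the $\alpha$-th power preserves the product structure, so the integrand inside the expectation splits into a product of two functions, each depending on only one coordinate. By Fubini's theorem applied to this non-negative integrand on the product measure $\nu_1' \times \nu_2'$, the joint expectation factorises as the product of two marginal expectations $\mathbb{E}_{x\sim\nu_1'}[\nu_1(x)/\nu_1'(x)]^\alpha \cdot \mathbb{E}_{y\sim\nu_2'}[\nu_2(y)/\nu_2'(y)]^\alpha$.

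Finally, applying $\log$ converts the product of expectations into a sum, and dividing by $\alpha - 1$ yields exactly $D_\alpha(\nu_1\,\|\,\nu_1') + D_\alpha(\nu_2\,\|\,\nu_2')$ by two applications of \Cref{defn:renyi-divergence}. There is no real obstacle to this argument; the only regularity condition to verify is that the support containments $\operatorname{supp}(\nu_i) \subseteq \operatorname{supp}(\nu_i')$ for $i\in\{1,2\}$, which are built into \Cref{defn:renyi-divergence} for each factor, imply the product containment $\operatorname{supp}(\nu_1\times \nu_2) \subseteq \operatorname{supp}(\nu_1'\times \nu_2')$, so that all density ratios above are well-defined $(\nu_1'\times \nu_2')$-almost everywhere and the expectations are unambiguous.
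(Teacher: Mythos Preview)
Your proof is correct and is the standard argument: unfold the definition, factorise the density ratio of the product measure, apply Fubini to split the expectation, and take logarithms. Note, however, that the paper does not actually prove this lemma---it is quoted as an established result from \citet{Erven12RenyiDivergence} (this is why it carries an alphabetical label), so there is no in-paper proof to compare against. Your argument is precisely the one given in that reference.
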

    \noindent Applying~\Cref{lem:additivity-renyi-divergence} at step (a),
    \begin{equation}
    \begin{aligned}
    \alphaRDP{\cA_1(S), \ldots, \cA_k(S)}{\cA_1(S'), \ldots, \cA_k(S')} &= D_\alpha(\nu_1\times \ldots \times \nu_{k}||\nu_1'\times \ldots \times \nu_{k}') \\
            &\overset{(a)}{\leq} \alphaRDP{\nu_1}{\nu_1'} + \alphaRDP{\nu_2\times \ldots \times \nu_{k}}{\nu_2'\times \ldots \times \nu_{k}'}\\
            &\overset{(b)} {=}\varepsilon_1(\alpha, \sdpL) + \alphaRDP{\nu_2\times \ldots \times \nu_{k}}{\nu_2'\times \ldots \times \nu_{k}'}\\
            &\overset{(c)}{=}  \varepsilon_1(\alpha, \sdpL) 
             + \varepsilon_2(\alpha, \sdpL) 
             + \ldots \\
            &\leq \sum_{i = 1}^k\varepsilon_i(\alpha, \sdpL)
    \end{aligned}
    \end{equation}
    where step (b) follows from the fact that $\cA_1$ is SRDP with parameter $\varepsilon_1(\alpha, \sdpL)$, and step (c) is obtained by decomposing $D_\alpha(\nu_2\times \ldots \times \nu_{k}||\nu_2'\times \ldots \times \nu_{i+1}')$ in the similar manner as step (a) and applying~\Cref{lem:additivity-renyi-divergence} iteratively. This completes the proof for composition of SRDP. 

    \textbf{Post-processing: }For any $S, S'\in \cL$ with $d_{12}(S, S')\leq \sdpL$, let $X = \cA(S)$, $X' = \cA(S')$ be random variables with probability distribution $\nu_X, \nu_X'$, and $Y = f(\cA(S)), Y' = f(\cA(S'))$ be random variables with distributions $\nu_Y$ and $\nu_Y'$. For any value $x$, we denote the conditional distribution of $Y = f(x)$ and $Y' = f(x)$ by $\nu_{Y|x}$ and $\nu_{Y|x}'$ respectively. We note that $\nu_{Y|x} = \nu_{Y|x}'$ by definition. 

    We will upper bound the Rényi divergence between $f(\cA(S))$ and $f(\cA(S'))$, 
    \begin{equation}
        \begin{aligned}
            \alphaRDP{f(\cA(S))}{f(\cA(S'))} &= \frac{1}{\alpha-1} \log \bE_{y\sim \nu_Y'}\bs{\frac{\nu_Y(y)}{\nu_Y'(y)}}^\alpha\\
            &= \frac{1}{\alpha - 1}\log \bE_{y\sim \nu_Y'}\bs{\bE_{x\sim \nu_X'}\br{\frac{\nu_{Y|x}(y)\nu_X(x)}{\nu_{Y|x}'(y)\nu_X'(x)}}}^\alpha\\
            &\overset{(a)}{\leq} \frac{1}{\alpha - 1}\log \bE_{y\sim \nu_Y'}\bs{\bE_{x\sim \nu_X'}\br{\frac{\nu_{Y|x}(y)\nu_X(x)}{\nu_{Y|x}'(y)\nu_X'(x)}}^\alpha}\\
            &\overset{(b)}{=}\frac{1}{\alpha - 1}\log \bE_{x\sim \nu_X'}\bs{\frac{\nu_X(x)}{\nu_X'(x)}}^\alpha = \alphaRDP{\cA(S)}{\cA(S')}
        \end{aligned}
    \end{equation}
    where $(a)$ follows from Jensen's inequality and the convexity of $h(x) = x^\alpha$ for $\alpha \geq 2$ and $x \geq 0$, and $(b)$ follows from the fact that $\nu_{Y|x} = \nu_{Y|x}'$ for fixed $x$. This completes the proof. 
\end{proof}

We present a general version of privacy amplification by subsampling for SRDP in~\Cref{lem:full-pas-srdp}. We consider a subsampling mechanism that uniformly selects $B$ elements from a dataset with replacement. \Cref{lem:full-pas-srdp} implies that the SRDP parameter of a SRDP algorithm decreases by $O(1/\sqrt{n})$ when $B = 1$ for datasets of size $n$. 

\begin{lem}[Privacy amplification by subsampling of SRDP] \label{lem:full-pas-srdp}
For $B \geq 1$, let $\pi: \cX^n\rightarrow \cX^B$ be the subsampling mechanism that samples $B$ elements from the dataset $S$ of size $n$ uniformly at random. Let $\cA$ be a randomized algorithm that is $(\alpha, \varepsilon(\alpha, \sdpL))$-SRDP. For $\alpha, \sdpL$ such that $\varepsilon(\alpha, \sdpL)\leq \frac{1}{\alpha-1}$ and for any integer $\frac{1}{B}\leq k\leq \frac{n-1}{B}$, $\cA\circ \pi$ satisfies $(\alpha, \varepsilon'(\alpha, \sdpL))$-SRDP, where 
\[\varepsilon'(\alpha, \sdpL) = 2\br{1-\frac{kB-1}{n}}^B\varepsilon\br{\alpha, \frac{\sdpL}{k}} + 2\br{1-\br{1-\frac{kB-1}{n}}^B}\varepsilon\br{\alpha, \sdpL}. \]
\end{lem}

\begin{proof}
    Let $S, S'\subset \cL$ with $d_{12}(S, S')\leq \sdpL$. First, order the points in $S, S'$ as $S = \{z_i\}_{i = 1}^n$ and $S' = \{z_i'\}_{i = 1}^n$ such that $\sum_{i = 1}^n\norm{z_i - z_i'}_2\leq \sdpL$. 
    
    Let $m$ be a fixed integer. Let $\cI = \{i_1, i_2, \ldots \}$ be the set of indices such that for any $j\in \cI$, $\norm{z_{j} - z_{j}'}\geq \frac{\sdpL}{m}$ holds. For $1\leq m \leq n-1$, if $d_{12}(S, S')\leq \sdpL$, there can be at most $m-1$ indices in $\cI$. 
    
    Thus, for an index $i$ sampled from $[n]$ uniformly at random, we have \begin{equation}
        \label{eq:pas1}
        \bP_{i\sim \text{Unif}([n])}\bs{\norm{z_i - z_i'}\leq \frac{\sdpL}{m}} = 1- \bP_{i\sim \text{Unif}([n])}\bs{\norm{z_i - z_i'}\geq \frac{\sdpL}{m}} \geq 1 - \frac{k-1}{n}.
    \end{equation}
    Let $J = \{i_1, ..., i_B\}$ be a set where each index $i_j$ is sampled independently and uniformly from $[n]$. For any integer $1\leq m \leq n-1$, we compute the probability that $d_{12}(S_J, S'_J)\leq \frac{\sdpL}{k}$. 
    \begin{equation}\label{eq:pas2}
        \bP_{J\sim \text{Unif}([n])^B}\bs{\sum_{i\in J}\norm{z_i - z_i'}_2\leq \frac{B\sdpL}{m}}\overset{(a)}{\geq} \bP_{J\sim \text{Unif}([n])^B}\bs{\forall i\in J, \norm{z_i - z_i'}_2\leq \frac{\sdpL}{m}} \overset{(b)}{\geq} \br{1-\frac{m-1}{n}}^B
    \end{equation}
    where step (a) follows as $\forall i\in J, \norm{z_i - z_i'}_2\leq \frac{\sdpL}{m}$ implies $\sum_{i\in J}\norm{z_i - z_i'}_2\leq \frac{B\sdpL}{m}$, step (b) follows by~\Cref{eq:pas1} and the independence of each index in $J$. 

    Replacing $m = kB$, we have for $1\leq kB\leq n-1$, \ie for $\frac{1}{B}\leq k \leq \frac{n-1}{B}$,
    \begin{equation}\label{eq:pas3}
        \bP_{J\sim \text{Unif}([n])^B}\bs{\sum_{i\in J}\norm{z_i - z_i'}_2\leq \frac{\sdpL}{k}}\geq \br{1-\frac{kB-1}{n}}^B
    \end{equation}

    Next, we apply the weak convexity of Rényi divergence (\Cref{lem:weak-convexity-renyi-divergence}) to get the desired bound.  
    \begin{lemL}[Weak convexity of Renyi divergence, Lemma 25 in~\citet{feldman18iteration}]\label{lem:weak-convexity-renyi-divergence}
        Let $\mu_1, \mu_2$ and $\upsilon_1, \upsilon_2$ be probability distributions over the same domain such that $\alphaRDP{\mu_i}{\upsilon_i}\leq \frac{c}{\alpha -1}$ for $i\in \{1, 2\}$ and $c\in (0, 1]$. For $\beta \in (0, 1)$, \[\alphaRDP{\beta\mu_1 + (1-\beta)\mu_2}{\beta\upsilon_1 + (1-\beta)\upsilon_2}\leq (1+c)\bs{\beta \alphaRDP{\mu_1}{\upsilon_1} + (1-\beta)\alphaRDP{\mu_2}{\upsilon_2}}. \]
    \end{lemL}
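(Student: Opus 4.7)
My plan is to reduce the lemma to a one-variable analytic inequality by exploiting joint convexity of the Rényi ``moment'' functional, and then verify the reduced inequality by elementary bounds on $\exp$ and $\log$.

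First, fix $\alpha > 1$ and set $M_i = \int \mu_i^{\alpha}\,\upsilon_i^{1-\alpha}$ so that $D_\alpha(\mu_i \| \upsilon_i) = \tfrac{1}{\alpha-1}\log M_i$. Let $a_i := \log M_i$; by non-negativity of Rényi divergence together with the assumed upper bound, $a_i \in [0, c]$. Denote by $M$ the analogous moment for the two mixtures, so the lemma reduces to bounding $\log M$ in terms of $m := \beta a_1 + (1-\beta) a_2$.

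The first key step is to observe that $(x, y) \mapsto x^{\alpha} y^{1-\alpha}$ is jointly convex on $(0,\infty)^2$ for $\alpha > 1$. I would verify this by a direct Hessian calculation: its determinant is identically zero while the trace is strictly positive, so the Hessian is positive semidefinite. Since joint convexity passes to integrals, $(P, Q) \mapsto \int P^{\alpha} Q^{1-\alpha}$ is jointly convex, yielding
\[
    M \leq \beta M_1 + (1-\beta) M_2 = \beta e^{a_1} + (1-\beta) e^{a_2}.
\]

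What remains is the scalar inequality $\log\!\bigl(\beta e^{a_1} + (1-\beta) e^{a_2}\bigr) \leq (1+c)\, m$ for $a_1, a_2 \in [0, c]$ and $c \in (0, 1]$. I would chain three elementary bounds: (i) the chord inequality $e^{t} \leq 1 + \tfrac{e^{c}-1}{c}\, t$ on $[0, c]$, which follows from convexity of $\exp$; (ii) $\log(1+x) \leq x$; and (iii) the one-variable estimate $\tfrac{e^{c}-1}{c} \leq 1+c$ for $c \in (0, 1]$, which is equivalent to $e^{c} \leq 1 + c + c^{2}$ and follows by checking that $g(c) := 1+c+c^{2} - e^{c}$ vanishes at the origin with non-negative derivative throughout $[0,1]$. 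Concatenating (i)--(iii) yields $\log M \leq (1+c)\, m$; dividing by $\alpha - 1$ is exactly the claimed bound.

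The main obstacle is the joint convexity of $(x, y) \mapsto x^{\alpha} y^{1-\alpha}$ for $\alpha > 1$; this is a classical but non-obvious fact underlying several convexity properties of $f$-divergences, and everything else is a short chain of standard inequalities. The restriction $c \leq 1$ enters the argument precisely through the estimate $(e^{c}-1)/c \leq 1+c$, which explains why the lemma degrades for large-divergence regimes and is stated only in the ``small-$c$'' regime relevant for privacy amplification.
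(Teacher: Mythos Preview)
Your proof is correct. The paper itself does not supply a proof of this statement: it is quoted as Lemma~25 of Feldman et al.\ (2018) and invoked as a black box in the privacy-amplification arguments, so there is no in-paper proof to compare against. Your route---joint convexity of $(x,y)\mapsto x^{\alpha}y^{1-\alpha}$ to control the mixture moment, then the chord bound $e^{t}\le 1+\tfrac{e^{c}-1}{c}\,t$ on $[0,c]$, $\log(1+x)\le x$, and $(e^{c}-1)/c\le 1+c$ for $c\in(0,1]$---is essentially the standard argument and matches the one in the cited reference. One cosmetic point: the claim that $g(c)=1+c+c^{2}-e^{c}$ has non-negative derivative on $[0,1]$ deserves one extra line, since $g'(0)=0$ and $g'$ is not monotone; the cleanest justification is that $g'$ is concave (as $g'''<0$), vanishes at $0$, and is positive at $1$, hence non-negative on the whole interval.
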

    Let $\cE$ denote the event that the $L_{12}$ distance between $S_J$ and $S'_J$ is smaller than $\frac{\sdpL}{k}$, \ie $\sum_{i\in J}\norm{z_i - z_i'}_2\leq \frac{\sdpL}{k}$, for $J\sim \text{Unif}([n])^B$. Then, \Cref{eq:pas3} implies that the probability of $\cE$ is at least $\br{1-\frac{kB - 1}{n}}^B$. 
    
    Let $\mu_1, \upsilon_1$ be the distribution of $\cA(S_J)$ and $\cA(S'_J)$ conditional on $\cE$ occurring, and let $\mu_2, \upsilon_2$ be the distribution of $\cA(S_J)$ and $\cA(S'_J)$ conditional on $\cE$ does not occur. By our assumption on $\sdpL, \alpha$, $\varepsilon(\alpha, \sdpL)\leq \frac{1}{\alpha - 1}$ and $c= 1$. Applying~\Cref{lem:weak-convexity-renyi-divergence} with $c = 1$ in step (a), 

    \begin{equation*}
    \begin{aligned}
        \alphaRDP{\cA(S_J)}{\cA(S'_J)} &= \alphaRDP{\bP\bs{\cE}\cA({S_J}{|\cE}) + \bP\bs{\neg\cE}\cA(S_J{|\neg\cE})}{\bP\bs{\cE}\cA(S'_J{|\cE}) + \bP\bs{\neg\cE}\cA(S'_J{|\neg\cE})}\\
        &\overset{(a)}{\leq} 2\bP\bs{\cE}\varepsilon\br{\alpha, \frac{\sdpL}{k}} + 2\bP\bs{\neg \cE}\varepsilon\br{\alpha, \sdpL}\\
        &\overset{(b)}{\leq} 2 \br{1 - \frac{kB - 1}{n}}^B \varepsilon\br{\alpha, \frac{\sdpL}{k}} + 2\br{1-\br{1-\frac{kB - 1}{n}}^B} \varepsilon(\alpha, \sdpL),
    \end{aligned}
    \end{equation*}

    In step (b), we note that $2\bP\bs{\cE}\varepsilon\br{\alpha, \frac{\sdpL}{k}} + 2\bP\bs{\neg \cE}\varepsilon\br{\alpha, \sdpL}$ increases with $\bP\bs{\cE}$ by the fact that $\varepsilon\br{\alpha, \frac{\sdpL}{k}}\leq \varepsilon\br{\alpha, \sdpL}$. Then, we obtain the upper bound by substitute in the lower bound of $\bP\bs{\cE}$ from~\Cref{eq:pas3}. 
    This completes the proof. 
\end{proof}

However, under additional structural assumption on the pair of datasets that needs to remain indistinguishable, we can establish a more effective amplification of smooth RDP through subsampling. In~\Cref{app:gradient-based-methods}, we present the additional assumption with~\Cref{assump:dp-sgd-subsampling}. We also apply the stronger amplification in the proof of overall privacy guarantees for DP-SGD with subsampling.


\subsection{Proof of~\Cref{thm:sdp-parameters}(Derivation of~\Cref{tab:sdp-parameters})}\label{app:srdp-table}
In this section, we explore results regarding the SRDP parameters for various DP mechanisms. We start with the formal definitions of two fundamental assumptions, Lipschitzness and smoothness, in \Cref{defn:lipschitzness,defn:smoothness}. Then, we provide the proof of~\Cref{thm:sdp-parameters} (SRDP parameters in~\Cref{tab:sdp-parameters}). For clarity, we present the proof for each private mechanism separately in~\Cref{app:sdp-output-perturbation-sampling,app:gradient-based-methods}.

\begin{defn}[Lipschitzness]\label{defn:lipschitzness}
   A function $f: \cK \rightarrow \reals$ is $L$-Lipschitz over the domain $\cK\subset \cX$ with respect to the distance function $d: \cX\times \cX\rightarrow \reals_+$ if for any $X, X'\in \cK$, $\abs{f(X) - f(X') }\leq Ld(X, X')$. 
\end{defn}
\begin{defn}[Smoothness]\label{defn:smoothness}
    A loss function $\ell: \Theta\times \cK\to \reals_+$ is $\mu$-smooth if for all $\theta\in \Theta$ and $x, x'\in \cK$, $\norm{\nabla_\theta \ell(\theta, x) - \nabla_\theta \ell(\theta, x)}_2\leq \mu\norm{x - x'}_2$. 
    
\end{defn}

We note that 
\Cref{defn:smoothness} is different from the usual definition of smoothness in the literature that requires $\norm{\nabla_\theta \ell(\theta, x) - \nabla_\theta \ell(\theta', x)}_2\leq \mu\norm{\theta - \theta'}_2$ for all $x\in \cK$ and $\theta, \theta'\in \Theta$ \citep{chaudhuri2011differentially,feldman18iteration}. However, many loss functions, including square loss and logistic loss, are smooth on common model classes based on~\Cref{defn:smoothness}. 

\subsubsection{Output Perturbation and Random Sampling methods }\label{app:sdp-output-perturbation-sampling}
This section provides the detailed theorems and proofs for Smooth RDP parameters of output perturbation methods (Gaussian and Laplace mechanism) and random sampling methods (Exponential mechanism) in~\Cref{thm:full-gaussian-sdp,thm:full-laplace-sdp,thm:full-exponential-sdp} respectively, as stated in~\Cref{tab:sdp-parameters}.

\begin{thm}[\Cref{thm:sdp-parameters} for Gaussian mechanism]\label{thm:full-gaussian-sdp}
    For an $L$-Lipschitz function $f$ with global sensitivity $\Delta_f$ and the privacy parameter $\varepsilon > 0$, let $\cM_G(S) = f(S) + \cN\br{0, \nicefrac{\Delta_f^2}{\varepsilon^2}}$ denote the Gaussian mechanism.  For a dataset collection $\cL$ and for $\sdpL > 0$, $\cM_G$ is $\br{\alpha, \frac{\alpha\varepsilon^2}{2}}$-RDP and $\br{\alpha, \frac{\alpha L^2\sdpL^2\varepsilon^2}{2}}$-SRDP for $\alpha \geq 1$. 
\end{thm}
\begin{proof}
    We first show that $\cM_G$ is $\br{\alpha, \frac{\alpha\varepsilon^2}{2}}$-RDP. For any neighboring datasets $S, S'$ with $d_H(S, S') = 1$, 
    \begin{equation*}
    \begin{aligned}
          \alphaRDP{\cM_G(S)}{\cM_G(S')} &= \alphaRDP{f(S) + \cN\br{0, \nicefrac{\Delta_f^2}{\varepsilon^2}}}{f(S') + \cN\br{0, \nicefrac{\Delta_f^2}{\varepsilon^2}}}\\
          &= \alphaRDP{\cN\br{f(S), \nicefrac{\Delta_f^2}{\varepsilon^2}} }{\cN\br{f(S'), \nicefrac{\Delta_f^2}{\varepsilon^2}}}
    \end{aligned}
    \end{equation*}

    We apply~\Cref{lem:gaussian-distributions-renyi-divergence} to upper bound the Rényi divergence between two Gaussian random variable with same variance. 
    \begin{lemL}[Corrolary 3 in \citet{mironov2017renyi}]\label{lem:gaussian-distributions-renyi-divergence}
        For any two Gaussian distributions with the same variance $\sigma^2$ but different means $\mu_0, \mu_1$, denoted by $\cN\br{\mu_0, \sigma^2}$ and $\cN(\mu_1, \sigma^2)$, the following holds, 
        \[D_\alpha\br{\cN(\mu_0, \sigma^2)||\cN(\mu_1, \sigma^2)}\leq \frac{\alpha\br{\mu_0 - \mu_1}^2}{2\sigma^2}.\]
    \end{lemL}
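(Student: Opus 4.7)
The plan is to compute the R\'enyi divergence in closed form directly from its definition; in fact the displayed inequality is attained with equality, so it suffices to derive the exact expression $D_\alpha(\cN(\mu_0,\sigma^2)||\cN(\mu_1,\sigma^2)) = \frac{\alpha(\mu_0-\mu_1)^2}{2\sigma^2}$, which immediately implies the claimed bound.

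Concretely, I would begin by rewriting \Cref{defn:renyi-divergence} in the equivalent form $D_\alpha(P||R) = \frac{1}{\alpha-1}\log \int p(x)^\alpha\, r(x)^{1-\alpha}\,dx$, where $p$ and $r$ denote the two Gaussian densities. Because both share the same normalizing constant $(2\pi\sigma^2)^{-1/2}$, the integrand reduces to $(2\pi\sigma^2)^{-1/2}$ times $\exp\!\left(-\frac{\alpha(x-\mu_0)^2 - (\alpha-1)(x-\mu_1)^2}{2\sigma^2}\right)$. The main algebraic step is then to complete the square in $x$: the quadratic inside the exponent has leading coefficient $1$, so it can be rewritten as $(x-\mu^\star)^2$ shifted by a constant, where $\mu^\star = \alpha\mu_0 - (\alpha-1)\mu_1$. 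A short but careful computation shows that the leftover constant is exactly $\alpha(\alpha-1)(\mu_0-\mu_1)^2$; the identity $(\mu^\star)^2 - \alpha\mu_0^2 + (\alpha-1)\mu_1^2 = \alpha(\alpha-1)(\mu_0-\mu_1)^2$ is the only non-trivial calculation, and it follows by direct expansion of $(\alpha\mu_0 - (\alpha-1)\mu_1)^2$ and collecting like terms.

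Once the square is completed, the remaining integral over $x$ is that of a shifted Gaussian density with the same variance $\sigma^2$, which integrates to $1$. Hence $\int p(x)^\alpha r(x)^{1-\alpha}\,dx = \exp\!\left(\frac{\alpha(\alpha-1)(\mu_0-\mu_1)^2}{2\sigma^2}\right)$, and taking the logarithm and dividing by $\alpha-1$ yields the stated bound (in fact with equality). There is no substantive obstacle here: both Gaussians have full support on $\mathbb{R}$, so the absolute-continuity condition $\operatorname{supp}(P)\subseteq \operatorname{supp}(R)$ required by the R\'enyi divergence is automatic for any $\alpha > 1$, and the only thing to watch is careful book-keeping when expanding the quadratic in the exponent.
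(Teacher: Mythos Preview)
Your proof is correct and is the standard direct computation. The paper does not prove this lemma itself; it simply cites it as Corollary~3 of \citet{mironov2017renyi} (note the paper's convention that alphabetically-labelled lemmas are quoted results from the literature). Your argument---writing $D_\alpha(P\|R)=\frac{1}{\alpha-1}\log\int p^\alpha r^{1-\alpha}$, completing the square in the exponent to extract the constant $\alpha(\alpha-1)(\mu_0-\mu_1)^2$, and integrating the remaining Gaussian to $1$---is exactly how the cited reference establishes the result, and as you observe it in fact yields equality rather than merely the inequality stated here.
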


    Hence, 
    \begin{equation*}
             \alphaRDP{\cM_G(S)}{\cM_G(S')} \overset{(a)}{\leq} \sup_{(S, S')\in \cS} \frac{\alpha\br{f(S) - f(S')}^2}{2\nicefrac{\Delta_f^2}{\varepsilon^2}} \overset{(b)}{\leq} \frac{\alpha\varepsilon^2}{2},
    \end{equation*}
    where $(a)$ follows from \Cref{lem:gaussian-distributions-renyi-divergence} and $(b)$ follows due to the fact that $\abs{f(S) - f(S')}\leq \Delta_f$ by the definition of global sensitivity. 
    
    Then, we show that $\cM_G$ is $\br{\alpha, \frac{\alpha L^2\sdpL^2\varepsilon^2}{2}}$-SRDP over the dataset collection $\cL$. Specifically, we will show that for any $\sdpL > 0$, for any two datasets $S, S'\in \cL$ with $d_{12}(S, S')\leq \sdpL$, $\alphaRDP{\cA(S)}{\cA(S')} \leq \frac{\alpha L^2\sdpL^2\varepsilon^2}{2\Delta_f^2}$. Let $S, S'\in \cL$ such that $d_{12}(S, S')\leq \sdpL$, we have 
      \begin{equation}
          \begin{aligned}
              \alphaRDP{\cM_G(S)}{\cM_G(S')} &= \alphaRDP{\cN\br{f(S), \nicefrac{\Delta_f^2}{\varepsilon^2}}}{\cN\br{f(S'), \nicefrac{\Delta_f^2}{\varepsilon^2}}}\\
              &\overset{(a)}{\leq}\frac{\alpha \br{f(S) - f(S')}^2}{2\nicefrac{\Delta_f^2}{\varepsilon^2}}\\
              &\overset{(b)}{\leq}\frac{\alpha\br{L\norm{S - S'}_F}^2\varepsilon^2}{2\Delta_f^2} \overset{(c)}{\leq} \frac{\alpha L^2\sdpL^2\varepsilon^2}{2\Delta_f^2}
          \end{aligned}
      \end{equation}
    where step (a) follows by~\Cref{lem:gaussian-distributions-renyi-divergence}, step (b) follows by the Lipschitzness of the function $f$ with respect to the Frobenius norm, and step (c) follows by the fact that Frobenius norm is always smaller than $L_{12}$ norm, \ie $\norm{S- S'}_F \leq d_{12}(S, S')$. 
\end{proof}
\begin{thm}[\Cref{thm:sdp-parameters} for Laplace mechanism]\label{thm:full-laplace-sdp}
For an $L$-Lipschitz function $f$ with global sensitivity $\Delta_f$ and the privacy parameter $\varepsilon$, let $\cM_L(S) = f(S) + Lap\br{\nicefrac{\Delta_f}{\varepsilon}}$ denote the Laplace mechanism. For a dataset collection $\cL$ and for $\sdpL > 0$, $\cM_L$ is $\br{\alpha, \varepsilon}$-RDP and $\br{\alpha, 
\frac{L\sdpL}{\Delta_f}\varepsilon}$-SRDP for $\alpha \geq 1$. 
\end{thm}

\begin{proof}
Laplace mechanism $\cM_L$ is $(\varepsilon, 0)$-DP~\Citep{dwork06dp}. By the equivalence of $(\varepsilon, 0)$-DP and $(\infty, \varepsilon)$-RDP~\citep{mironov2017renyi}, $\cM_L$ is $(\infty, \varepsilon)$-RDP. This implies that $\cM_L$ is $(\alpha, \varepsilon)$-RDP for any $1 \leq \alpha \leq \infty$ by the monotonicity of Rényi divergence~(\Cref{lem:monotonicity-RDP}). 
    \begin{lemL}[Monotonicity of Rényi divergence \cite{mironov2017renyi}]\label{lem:monotonicity-RDP}
    For $1 \leq \alpha < \beta$, for any two distributions $P, Q$, the following holds \[D_\alpha(P||Q)\leq D_\beta(P||Q). \]
    \end{lemL}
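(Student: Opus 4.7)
The plan is to realise $D_\alpha(P\|R)$ as a normalised secant slope of a convex function of the order parameter, and then exploit the elementary fact that secant slopes of a convex function through a fixed point are monotone non-decreasing. The only work is to identify the right convex function.

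First, I would rewrite the divergence in exponential form. Let $Z = \log(P(Y)/R(Y))$ with $Y\sim R$, and define the cumulant generating function
\[
\psi(t) \;=\; \log \mathbb{E}_R\!\left[e^{tZ}\right] \;=\; \log \mathbb{E}_R\!\left[\left(\frac{P(Y)}{R(Y)}\right)^t\right].
\]
By Definition~\ref{defn:renyi-divergence} we then have $D_\alpha(P\|R) = \psi(\alpha)/(\alpha-1)$ for $\alpha>1$. The crucial observation is that $\psi(1) = \log \mathbb{E}_R[P/R] = \log 1 = 0$, so in fact
\[
D_\alpha(P\|R) \;=\; \frac{\psi(\alpha)-\psi(1)}{\alpha - 1},
\]
which is precisely the slope of the secant of $\psi$ joining the points $(1,\psi(1))$ and $(\alpha,\psi(\alpha))$.

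Second, I would establish that $\psi$ is convex. This is a standard log-MGF computation: for $s,t\in\mathbb{R}$ and $\lambda\in[0,1]$, Hölder's inequality with conjugate exponents $1/\lambda$ and $1/(1-\lambda)$ gives
\[
\mathbb{E}_R\!\left[e^{(\lambda s+(1-\lambda)t)Z}\right] \;\leq\; \mathbb{E}_R\!\left[e^{sZ}\right]^{\lambda} \mathbb{E}_R\!\left[e^{tZ}\right]^{1-\lambda},
\]
and taking logs yields $\psi(\lambda s + (1-\lambda)t) \leq \lambda\psi(s) + (1-\lambda)\psi(t)$.

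Third, I would combine the two ingredients. Fix $1 < \alpha < \beta$ and write $\alpha = (1-\lambda)\cdot 1 + \lambda\cdot\beta$ with $\lambda = (\alpha-1)/(\beta-1)\in(0,1)$. Convexity of $\psi$ together with $\psi(1)=0$ gives
\[
\psi(\alpha) \;\leq\; (1-\lambda)\psi(1) + \lambda\psi(\beta) \;=\; \tfrac{\alpha-1}{\beta-1}\psi(\beta),
\]
and dividing by $\alpha-1>0$ yields $D_\alpha(P\|R) \leq D_\beta(P\|R)$, as desired.

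The remaining boundary case is $\alpha=1$, where $D_1(P\|R)$ is defined (as the continuous limit) by $D_{\mathrm{KL}}(P\|R) = \mathbb{E}_P[\log(P/R)]$. Noting that this equals the right derivative $\psi'(1^+)$, convexity of $\psi$ implies $\psi'(1^+)\leq (\psi(\beta)-\psi(1))/(\beta-1) = D_\beta(P\|R)$ for all $\beta>1$, which closes the argument. I do not anticipate a genuine obstacle here: once the reduction to secant slopes of a convex log-MGF is made, the claim is a two-line application of Hölder's/Jensen's inequality, and the only subtlety is handling the $\alpha = 1$ endpoint via the standard limit convention.
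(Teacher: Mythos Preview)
Your argument is correct and is the standard proof of monotonicity of R\'enyi divergence via convexity of the log-moment-generating function. Note, however, that the paper does not actually prove this lemma: it is quoted as a known result from \cite{mironov2017renyi} and used as a black box, so there is no ``paper's own proof'' to compare against.
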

    
    Next, we will show $\cM_L$ is $(\alpha, \nicefrac{L\sdpL\varepsilon}{\Delta_f})$-SRDP over the set $\cL$. Specifically, we first show that $\cM_L$ is $(\infty, \nicefrac{L\sdpL\varepsilon}{\Delta_f}))$-SRDP. Using the monotonicity of Rényi divergence (\Cref{lem:monotonicity-RDP}), we can propagate this property to any $\alpha < \infty$. 

    Let $h$ be any output in the output space of $f$. For any two datasets $S, S'\in \cL$ with $d_{12}(S, S')\leq \sdpL$, 
    \begin{equation}\label{eq:laplace-prob-upper-bound}
        \begin{aligned}
            \frac{\bP(\cM_L(S) = h)}{\bP(\cM_L(S') = h)} &= \frac{\frac{\varepsilon}{\Delta_f}\exp\br{-\frac{\abs{h - f(S)}\varepsilon}{\Delta_f}}}{\frac{\varepsilon}{\Delta_f}\exp\br{-\frac{\abs{h - f(S')}\varepsilon}{\Delta_f}}}\\
           &\leq \exp\br{\frac{\varepsilon}{\Delta_f}\br{\abs{-f(S')  + f(S)}}} \\
           &\overset{(a)}{\leq}\exp\br{\frac{\varepsilon}{\Delta_f}L\norm{S - S'}_F} \overset{(b)}{\leq} \exp\br{\frac{L\sdpL\varepsilon}{\Delta_f}}. 
        \end{aligned}
    \end{equation}
     where $(a)$ follows by the $L$-Lipschitzness of the $f$ with respect to Frobenius norm and $(b)$ follows from the fact that $\norm{S - S'}_F\leq d_{12}(S, S') \leq \sdpL$. 

     This implies the output distributions have bounded infinite Rényi divergence, \ie
    \begin{equation}
        \label{eq:laplace-sdp-bounded-infinite-renyi-divergence}
        D_\infty(\cM_L(S)||\cM_L(S')) = \max_{h\in \cH}\log\frac{\cP(\cM_L(S) = h)}{\cP(\cM_L(S') = h)} \leq \frac{L\sdpL\varepsilon}{\Delta_f},
    \end{equation}
    where the last inequality follows from~\Cref{eq:laplace-prob-upper-bound}. 

    By Monotocity of Rényi divergence~(\Cref{lem:monotonicity-RDP}), for any $\alpha\geq 1$, the $\alpha$-Rényi divergence between the output distribution of Laplace mechanism on any two datasets $S, S'\in \cL$ with $d_{12}(S, S')\leq \sdpL$ is also bounded by $\nicefrac{L\sdpL\varepsilon}{\Delta_f}$. This concludes the proof. 

\end{proof}
\begin{thm}[\Cref{thm:sdp-parameters} for Exponential mechanism]\label{thm:full-exponential-sdp}
    For an $L$-Lipschitz score function $Q$ with global sensitivity $\Delta_Q$, privacy parameter $\varepsilon$, and a dataset collection $\cL$, the exponential mechanism, and for any $\sdpL > 0$, $\cM_E$ is $\br{\alpha, \varepsilon}$-RDP and $\br{\alpha, \nicefrac{L\sdpL\varepsilon}{\Delta_Q}}$-SRDP over $\cL$ for any $\alpha \geq 1$.  
\end{thm}
\begin{proof}
    The proof for RDP of Exponential mechanism is similar to that of Laplace mechanism. As the exponential mechanism $\cM_E$ is $(\varepsilon, 0)$-DP~\citep{dwork06dp}. By the equivalence of $(\varepsilon, 0)$-DP and $(\infty, \varepsilon)$-RDP~\citep{mironov2017renyi}, $\cM_E$ is $(\infty, \varepsilon)$-RDP. By the monotonicity of Rényi divergence (\Cref{lem:monotonicity-RDP}), $\cM_E$ is also $(\alpha, \varepsilon)$-RDP for any $\alpha \geq 1$. 

    To show that $\cM_E$ is $\br{\alpha, \frac{ L\sdpL\varepsilon}{\Delta_Q}}$-SRDP over the set $\cL$, we first show that $\cM_E$ is $(\infty, \nicefrac{L\sdpL\varepsilon}{\Delta_Q})$-SRDP over $\cL$ and then apply the monotonicity of Rényi divergence (\Cref{lem:monotonicity-RDP}). 
    
    For any $S, S'\in \cL$ such that $d_{12}(S, S')\leq \sdpL$ and any output $h$ in the output space,

    \begin{equation}\label{eq:exponential-mechanism-prob-ratio}
        \begin{aligned}
            \frac{\bP(\cM_E(S) = h)}{\bP\br{\cM_E(S') = h}}
            &= \frac{\exp\br{\frac{Q(S, h)\varepsilon}{2\Delta_Q}}}{\exp\br{\frac{Q(S', h)\varepsilon}{2\Delta_Q}}}\frac{\int_{h\in \cH}\exp\frac{Q(S', h)\varepsilon}{2\Delta_Q}dh}{\int_{h\in \cH}\exp\frac{Q(S, h)\varepsilon}{2\Delta_Q}dh}\\
            &\leq \frac{\exp\br{\frac{Q(S, h)\varepsilon}{2\Delta_Q}}}{\exp\br{\frac{Q(S', h)\varepsilon}{2\Delta_Q}}} \frac{\int_{h\in \cH}\exp\frac{Q(S', h) - Q(S, h)\varepsilon}{2\Delta_Q}\exp\frac{Q(S, h)\varepsilon}{2\Delta_Q}dh}{\int_{h\in \cH}\exp\frac{Q(S, h)\varepsilon}{2\Delta_Q}dh}\\
            &\leq\exp\br{\frac{2\max_{h\in \cH}\abs{Q(S, h ) - Q(S', h)}\varepsilon}{2\Delta_Q}}\overset{(a)}{\leq} \exp\br{\frac{L\sdpL\varepsilon}{\Delta_Q}}
        \end{aligned}
    \end{equation}
    where step (a) follows by the $L$-Lipschitzness of the score function $Q(S, h)$ for all $h\in \cH$ and the fact that $\norm{S - S'}_F\leq d_{12}(S, S')\leq \sdpL$. 
    
    This implies bounded infinite order Rényi divergence between the two output distributions, \ie

    \begin{equation}
        \label{eq:exponential-sdp-bounded-infinite-renyi-divergence}
        D_\infty(\cM_E(S)||\cM_E(S')) = \max_{h\in \cH}\log\frac{\cP(\cM_E(S) = h)}{\cP(\cM_E(S') = h)} \leq \frac{L\sdpL\varepsilon}{\Delta_Q},
    \end{equation}
    where the last inequality follows from~\Cref{eq:exponential-mechanism-prob-ratio}. 

    As shown in~\Cref{eq:exponential-sdp-bounded-infinite-renyi-divergence}, $\cM_E$ is $(\infty, \nicefrac{L\sdpL\varepsilon}{\Delta_Q})$-SRDP. This implies that $\cM_E$ is $(\alpha, \nicefrac{L\sdpL\varepsilon}{\Delta_Q})$-SRDP for any $\alpha \geq 1$ by monotonicity of Rényi divergence (\Cref{lem:monotonicity-RDP}). This completes the proof. 
\end{proof}

\subsubsection{Gradient-based methods}\label{app:gradient-based-methods}

In this section, we start with providing the formal definitions of gradient-based methods mentioned in~\Cref{sec:private-mechanisms} and evaluated in~\Cref{tab:sdp-parameters}, including DP-GD, DP-SGD with subsampling, and DP-SGD with iteration. Then, we provide the detailed theorems and proofs for their Smooth RDP parameters, as stated in~\Cref{tab:sdp-parameters}, in~\Cref{thm:full-dpgd-sdp,full-sgdsamp-sdp,thm:dp-sgd-iteration-srdp-parameter}.


\paragraph{DP-GD}Given a $L$-Lipschitz loss function $\ell$, DP-GD \citep{bassily2014private, song21Evading}, denoted by $\cA_{\mathrm{DP-GD}}$ starts from some random initialization $w_0$ in the parameter space $\cH$ and conducts projected gradient descent for $T$ iterations as $w_{t + 1} =\Pi_{\cH}(w_t - \eta \tilde{g}_t)$ with the noisy gradient on the whole dataset $\tilde{g}_t$ defined as $\tilde{g}_t = \frac{1}{n}\sum_{i = 1}^n\nabla_w \ell(w_{t-1}, x) + \cN(0, \sigma^2)$, and outputs the average parameter over the $T$ iterations $\frac{1}{T}\sum_{t = 1}^T w_t$. 

\paragraph{DP-SGD with subsampling}Another commonly used method is DP-SGD with subsampling \citep{Abadi16dpsgd, bassily2014private}, denoted as $\DPSGDsamp
$. In each gradient descent step, DP-SGD with subsampling first draws a uniform subsample of size $B$ from the dataset without replacement. The gradient update is then performed by adding noise to the average gradient derived from this subsample. In contrast, DP-GD computes the average gradient of the entire dataset for its updates.

\paragraph{DP-SGD with iteration} Differentially Private Fixed Gradient Descent (DP-FGD), denoted by $\DPFGD{(T)}$, is a variant of DP-SGD. It processes the data points in a fixed order --- the gradient at the $t^{\text{th}}$ step is calculates using the $t^{\text{th}}$ point in the dataset--- and outputs the parameter obtained after the $T^{\text{th}}$ iteration. \emph{DP-SGD with iteration}~\citep{feldman18iteration}, denoted as $\DPSGDiter$, uses DP-FGD as a base procedure. It takes an extra parameter $\maxD$ and first uniformly samples an integer $T$ from $\bs{n-\maxD + 1}=\bc{1,\ldots,n-\maxD + 1}$ and releases the output from $\DPFGD{(T)}$, \ie, it releases the result from the $T^{\text{th}}$ iteration. While DP-SGD with iteration cannot take advantage of privacy amplification by subsampling for its privacy analysis, it relies on privacy amplification with iteration \citep{feldman18iteration} to achieve a comparable privacy guarantee to that of $\DPSGDsamp$.  


\begin{thm}[\Cref{thm:sdp-parameters} for DP-GD]\label{thm:full-dpgd-sdp}
    For an $L$-Lipschitz and $\mu$-smooth loss function $\ell$, privacy parameter $\varepsilon$, dataset collection $\cL$, and $\sigma = \frac{L\sqrt{T}}{\varepsilon n}$, $\DPGD$ is $\br{\alpha, \frac{\alpha \varepsilon^2}{2}}$-RDP and $\br{\alpha, \frac{\alpha\mu^2 \sdpL^2\varepsilon^2}{2L^2}}$-SRDP over $\cL$ for any $\alpha\geq 1, \sdpL > 0$. 
\end{thm}

\begin{proof}
   We denote each noisy gradient descent step as $\theta_{t+1} = \theta_t - \eta g(\theta_t, S)$ where $g(\theta_t, S) = \frac{1}{n}\sum_{i = 1}^n\nabla_\theta\ell(\theta, S_i) + \cN(0, \nicefrac{L\sqrt{T}}{\varepsilon n})$ is the noisy gradient. To show that each gradient descent step is RDP, it suffices to show the gradient operator $g$ is $(\alpha, \nicefrac{\alpha\varepsilon^2}{2})$-RDP by the fact that RDP is preserved by post-processing (\Cref{lem:post-processing-RDP}). 

   \begin{lemL}[Post-processing of RDP \citep{mironov2017renyi}]
       \label{lem:post-processing-RDP}
       Let $\alpha > 0, \varepsilon: \reals \times \reals \to \reals$, and $f$ be an arbitrary algorithm. For any $\ell > 0$, if $\cA$ is $(\alpha, \varepsilon(\alpha))$-RDP, then $f\circ\cA$ is $(\alpha, \varepsilon(\alpha))$-RDP. 
   \end{lemL}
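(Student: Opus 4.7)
The plan is to specialize the Jensen-inequality argument already used in the proof of post-processing for SRDP (\Cref{lem:properties-srdp}) to pairs of datasets at Hamming distance $1$. Concretely, I would fix any neighbors $S, S'$ with $d_H(S, S') = 1$, let $\nu_X, \nu_X'$ denote the output distributions of $\cA(S), \cA(S')$, and let $\nu_Y, \nu_Y'$ denote the distributions of $Y = f(\cA(S))$ and $Y' = f(\cA(S'))$. Since $f$ does not see the dataset and acts only on the output of $\cA$, the conditional law $\nu_{Y|x}$ of $Y$ given $\cA = x$ is the same whether the underlying dataset is $S$ or $S'$; call this common Markov kernel $\kappa(y \mid x)$.

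Next, I would marginalize, writing $\nu_Y(y) = \bE_{x\sim \nu_X}[\kappa(y\mid x)]$ and $\nu_Y'(y) = \bE_{x\sim \nu_X'}[\kappa(y\mid x)]$, and then re-express $\nu_Y/\nu_Y'$ as an expectation under $\nu_X'$ of the importance weights $\nu_X(x)/\nu_X'(x)$ against the kernel $\kappa(y\mid x)$. Applying Jensen's inequality to the convex map $u\mapsto u^\alpha$ (valid since $\alpha > 1$) pulls the $\alpha$-power inside the outer expectation, after which integrating $y$ out against $\kappa(y\mid x)$ (which has unit mass) collapses the $y$-dependence entirely, yielding
\[
D_\alpha(\nu_Y \,\|\, \nu_Y') \;\leq\; \frac{1}{\alpha - 1}\log \bE_{x\sim \nu_X'}\!\left[\frac{\nu_X(x)}{\nu_X'(x)}\right]^\alpha \;=\; D_\alpha(\cA(S) \,\|\, \cA(S')) \;\leq\; \varepsilon(\alpha),
\]
where the final inequality is the RDP hypothesis on $\cA$. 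Taking the supremum over all neighboring pairs $(S, S')$ then shows that $f\circ\cA$ is $(\alpha, \varepsilon(\alpha))$-RDP.

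The main (and essentially only) subtlety is the Jensen step: one must check that $u\mapsto u^\alpha$ is convex on the nonnegative reals for $\alpha > 1$, and one must justify interchanging the order of integration, which uses the independence of the randomness in $f$ from that in $\cA$ so that $\kappa(y\mid x)$ is a bona fide Markov kernel independent of the input dataset. Once that is granted, the argument is the standard data-processing inequality for R\'enyi divergence. Indeed, this lemma can be viewed as the SRDP post-processing result of \Cref{lem:properties-srdp} specialised to the $d_H = 1$ neighborhood in place of the $d_{12} \leq \sdpL$ neighborhood, so no substantially new ideas are required; the $\ell$ appearing in the statement is vestigial and plays no role in the RDP setting.
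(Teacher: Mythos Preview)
Your proposal is correct and is exactly the approach the paper takes: the paper does not give a separate proof of this cited lemma, but the Jensen-inequality argument you outline is verbatim the one used for the SRDP post-processing part of \Cref{lem:properties-srdp}, specialised from $d_{12}\leq\sdpL$ neighbours to Hamming-distance-$1$ neighbours. Your observation that the parameter $\ell$ in the statement is vestigial is also apt.
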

    
    Now, we show the gradient operator $g$ is $\br{\alpha, \frac{\alpha\varepsilon^2}{2}}$- RDP. By the Lipschitzness of the loss function, for all $\theta, S_i$, we have $\nabla_\theta(\ell(\theta, S_i))\leq L$. Thus, the global sensitivity of the average gradient $\frac{1}{n}\sum_{i = 1}^n\nabla_\theta \ell(\theta, S_i)$ is upper bounded by $\frac{L}{n}$. For any neighboring datasets $S, S'$ such that $d_H(S, S') = 1$, for any $\theta$,

    \begin{equation}
           \alphaRDP{g(\theta, S)}{g(\theta, S')}\overset{(a)}{\leq} \frac{\alpha \br{\frac{1}{n}\sum_{i = 1}^n \nabla_\theta\ell(\theta, S_i) - \frac{1}{n}\sum_{i = 1}^n \nabla_\theta\ell(\theta, S_i')}^2}{2\frac{L^2T}{\varepsilon^2 n^2}} \overset{(b)}{\leq} \frac{2\alpha\varepsilon^2}{T}
    \end{equation}
    where step (a) follows by the Rényi divergence of Gaussian mechanism (\Cref{lem:gaussian-distributions-renyi-divergence}), and step (b) follows by the fact that the sensitivity of each gradient estimation is $\frac{L}{n}$. This implies that each gradient estimation $g$ and thus, each gradient descent step, are $(\alpha, \nicefrac{2\alpha\varepsilon^2}{2
    T})$-RDP. Applying the composition theorem of RDP (\Cref{lem:composition-rdp}), we can show that DP-GD with $T$ gradient descent step is  $\br{\alpha, {2\alpha \varepsilon^2}}$-RDP. 
    \begin{lemL}[Composition of RDP, Proposition 1 in~\citet{mironov2017renyi}]
        \label{lem:composition-rdp}
        For $\alpha \geq 1$, let $f$ be $(\alpha, \varepsilon_1)$-RDP and $g$ be $(\alpha, \varepsilon_2)$-RDP. Then, the mechanism $(f, g)$ is $(\alpha, \varepsilon_1 + \varepsilon_2)$-RDP. 
    \end{lemL}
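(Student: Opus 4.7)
The plan is to prove the composition bound by factorizing the joint output distribution via the chain rule and then invoking the per-coordinate RDP guarantees. Fix two neighboring datasets $S, S'$ with $d_H(S, S') = 1$. I interpret $(f, g)$ as an adaptive composition (so that $g$ may take $f$'s output as auxiliary input; the purely non-adaptive case, where $f$ and $g$ operate on the same dataset with independent randomness, is strictly easier and is subsumed by the same argument). Write the joint density of $(f(S), g(S, f(S)))$ as $P(y_1, y_2) = P_f(y_1)\, P_g(y_2 \mid y_1)$ and similarly $Q(y_1, y_2) = Q_f(y_1)\, Q_g(y_2 \mid y_1)$ for the outputs on $S'$, where $P_g(\cdot \mid y_1)$ denotes the law of $g(S, y_1)$ and analogously for $Q_g(\cdot \mid y_1)$.

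The next step is to expand the Rényi divergence using this factorization:
\[
\exp((\alpha-1)\, D_\alpha(P \| Q)) \;=\; \mathbb{E}_{y_1 \sim Q_f}\!\left[\left(\frac{P_f(y_1)}{Q_f(y_1)}\right)^{\!\alpha} \mathbb{E}_{y_2 \sim Q_g(\cdot \mid y_1)}\!\left[\left(\frac{P_g(y_2 \mid y_1)}{Q_g(y_2 \mid y_1)}\right)^{\!\alpha}\right]\right].
\]
For each fixed $y_1$, the map $S \mapsto g(S, y_1)$ is itself $(\alpha, \varepsilon_2)$-RDP by hypothesis, so the inner expectation is bounded by $e^{(\alpha-1)\varepsilon_2}$ \emph{uniformly in $y_1$}. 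Pulling this constant factor out of the outer expectation and applying the $(\alpha, \varepsilon_1)$-RDP property of $f$ to what remains yields $\exp((\alpha-1)\, D_\alpha(P\|Q)) \le e^{(\alpha-1)(\varepsilon_1 + \varepsilon_2)}$; taking $\tfrac{1}{\alpha-1}\log$ gives the claim.

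The main obstacle, and the reason the additivity-of-Rényi-divergence lemma used for the SRDP composition cannot be applied directly, is that $f$ and $g$ share the sensitive dataset, so their outputs are generally \emph{not} independent. The crucial trick is the chain-rule decomposition combined with the uniform (worst-case over inputs) nature of the RDP guarantee: it is precisely this uniformity that licenses pulling the bound $e^{(\alpha-1)\varepsilon_2}$ outside of the outer expectation as a constant factor, even though the conditional distribution $Q_g(\cdot \mid y_1)$ itself depends on $y_1$. The boundary cases $\alpha = 1$ and $\alpha = \infty$ are handled analogously, using the standard chain rule for KL divergence and by replacing expectations with essential suprema of the log-density ratio, respectively.
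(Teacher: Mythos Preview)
Your proof is correct and is essentially the standard argument from \citet{mironov2017renyi}, which is exactly the source the paper cites for this lemma. Note that the paper itself does not give a proof of this statement: it is stated as \Cref{lem:composition-rdp} (an alphabetically-labelled lemma reserved for results quoted from the literature) and invoked without proof inside the argument for \Cref{thm:full-dpgd-sdp}. So there is no ``paper's own proof'' to compare against here; your write-up simply reproduces Mironov's original chain-rule argument, correctly handling the adaptive case (which is indeed what is needed for the DP-GD application, since each gradient step depends on the previous iterate).
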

    
    Then, for any $S, S'\in \cL$ such that $d_{12}(S, S')\leq \sdpL$, we show that the Rényi divergence between the gradient estimate with $S$ and $S'$, $\alphaRDP{g(\theta, S)}{ g(\theta, S')}$, is upper bounded. By the definition of gradient operator $g$ and $\sigma = \frac{L\sqrt{T}}{\varepsilon n}$, we have 
    \begin{equation}
        \begin{aligned}
            \alphaRDP{g(\theta, S)}{g(\theta, S')} &= \alphaRDP{\cN\br{\frac{1}{n}\sum_{i = 1}^n \nabla_\theta \ell(\theta, S_i), \frac{L^2T}{\varepsilon^2n^2}}}{\cN\br{\frac{1}{n}\sum_{i = 1}^n \nabla_\theta \ell(\theta, S_i'), \frac{L^2T}{\varepsilon^2n^2}}}\\
            &\overset{(a)}{\leq} \frac{\alpha \norm{\frac{1}{n}\sum_{i = 1}^n\nabla_\theta \ell(\theta, S_i) - \frac{1}{n}\sum_{i = 1}^n\nabla_\theta\ell(\theta, S_i')}_2^2}{2\frac{L^2 T}{n^2\varepsilon^2}}\\
            &\overset{(b)}{\leq} \frac{\alpha \varepsilon^2 \mu^2 \br{\sum_{i = 1}^n \norm{S_i - S'_i}_2}^2}{2L^2 T} \overset{(c)}{=} \frac{\alpha\varepsilon^2\mu^2\sdpL^2}{2L^2T}
        \end{aligned}
    \end{equation}
    where step (a) follows from~\Cref{lem:gaussian-distributions-renyi-divergence}, step (b) follows from the smoothness assumption of the loss function, \ie $\norm{\nabla_\theta \ell(\theta, S_i) - \nabla_\theta\ell(\theta, S_i')}_2\leq \mu\norm{S_i - S_i'}_2$, and step (c) follows from the definition of $L_{12}$ distance and the fact that $d_{12}(S, S')\leq \sdpL$. 
    
    Applying composition of SRDP (\Cref{lem:properties-srdp}) over the $T$ gradient descent steps concludes the proof. 
\end{proof}


For establishing the overall privacy guarantees for DP-SGD with subsampling and iteration, we introduce two properties of the dataset collection: the inverse point-wise distance (\Cref{assump:dp-sgd-subsampling}) and the maximum distance (\Cref{assump:dp-sgd-iteration}). Then, we present the overall privacy guarantee for DP-SGD with subsampling and iteration in~\Cref{thm:full-dpgd-sdp} and~\Cref{thm:dp-sgd-iteration-srdp-parameter}, with additional assumptions on the inverse point-wise distance and $\sdpL$-constrained maximum distance of the dataset collection.

      \begin{defn}[Inverse point-wise distance of a dataset collection $\cL$]\label{assump:dp-sgd-subsampling}
        Let $k$ be the maximum integer such that for every pair of datasets $S_1 = \bc{S_1^i}_{i = 1}^n\in \cL$ and $S_2 = \bc{S_2^i}_{i = 1}^n\in \cL$ and for all $i\in [n]$, $\norm{S_1^i - S_2^i}_2\leq \frac{d_{12}\br{S_1, S_2}}{k}$. The inverse point-wise distance $\gamma$ of a dataset collection $\cL$ is defined as $\gamma = \frac{n}{k}$.
    \end{defn}
  
     \begin{defn}[$\sdpL$-constrained maximum distance of a dataset collection $\cL$]
        \label{assump:dp-sgd-iteration}
        For $\sdpL>0$, the $\tau$-constrained maximum distance $\maxD$ of a dataset collection $\cL$ is defined as the maximum Hamming distance between any two datasets $S_1, S_2\in \cL$ such that $d_{12}(S_1, S_2) \leq \sdpL$. 
    \end{defn}
    \begin{thm}[\Cref{thm:sdp-parameters} for DP-SGD with subsampling]\label{full-sgdsamp-sdp}
        For any $L$-Lipschitz and $\mu$-smooth loss function $\ell$, privacy parameter $\varepsilon$, dataset collection $\cL$ with inverse point-wise distance $\gamma$, and $\sigma = \Omega\br{\frac{L\sqrt{T}}{\varepsilon n}}$, $\DPSGDsamp$ is $\br{\alpha, \frac{\alpha^2\varepsilon^2}{2}}$-RDP and $\br{\alpha, \frac{\alpha\mu^2\sdpL^2\varepsilon^2\gamma^2}{2L^2}}$-SRDP over $\cL$ for any $\sdpL > 0$ and $1\leq \alpha\leq \min\bc{\frac{\sqrt{T}}{\varepsilon}, \frac{L^2T}{\varepsilon^2 n^2}\log \frac{n^2\varepsilon}{L\sqrt{T}}}$.
    \end{thm}
    \begin{proof}
        We let $\pi_u$ be a subsampling mechanism that uniformly samples one point from the dataset. We denote each gradient descent step as $\theta_{t+1} = \theta_t - \eta g(\theta_t, S)$ where $g(\theta_t, S) = \nabla_\theta\ell(\theta, \pi_u(S)) + \cN(0, \sigma^2)$ represents the gradient estimate. 
        
        We will show that each gradient estimate $g$ is $(\alpha, \nicefrac{\alpha^2\varepsilon^2}{2T})$-RDP and thus, each gradient descent step satisfies RDP with the same parameters by post-processing theorem of RDP (\Cref{lem:post-processing-RDP}). Then, by the composition theorem for RDP (\Cref{lem:composition-rdp}) over the $T$ gradient descent steps, we can conclude that $\DPSGDsamp$ is $(\alpha, \nicefrac{\alpha^2\varepsilon^2}{2})$-RDP.   

        First, for each gradient estimate $g(\theta, \pi_u(S))$, we apply~\Cref{lem:dp-sgd-lemma} to upper bound the Rényi divergence for each gradient step in~\Cref{eq:appl-lem-3-abadi}. 

        \begin{lemL}[Lemma 3 in~\citet{Abadi16dpsgd}]\label{lem:dp-sgd-lemma}
        Suppose that $f: \cX\rightarrow \reals^p$ is a function with $\norm{f(\cdot)}_2 \leq L$.  Assume $\sigma \geq 1$, and let $i \sim \mathrm{Unif}([n])$ represent a uniform random variable over the integers in the set $[n] = {1, 2, \ldots, n}$. Then, for any positive integer $1\leq \alpha \leq \sigma^2 \ln \frac{n}{\sigma}$ and any pair of neighboring datasets $S, S'$, the mechanism $\cM(S) = f(S_i) + \cN(0, \sigma^2I)$ satisfies 
            \begin{equation}\label{eq:dp-sgd-lemma}
                \alphaRDP{\cM(S)}{\cM(S')} \leq \underbrace{\frac{L^2\alpha(\alpha +1)}{n^2(1-\nicefrac{1}{n})\sigma^2}}_{\mathrm{(I)}} + \underbrace{\bigO{\frac{\alpha^3L^3}{n^3\sigma^3}}}_{\mathrm{(II)}}.
            \end{equation}
        \end{lemL}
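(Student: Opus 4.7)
The plan is to exploit the mixture structure induced by uniform index sampling. For any pair of neighbouring datasets $S, S'$ that differ at a single index (WLOG index $1$, with $f(S_1) = \mu_1$ and $f(S'_1) = \mu_2$, $\|\mu_1 - \mu_2\|_2 \leq 2L$), the output distributions of $\cM$ are
\[\mu = (1-q)\,\mu_0 + q\,\cN(\mu_1, \sigma^2 I), \quad \mu' = (1-q)\,\mu_0 + q\,\cN(\mu_2, \sigma^2 I),\]
where $q = 1/n$ and $\mu_0 = \frac{1}{n-1}\sum_{i=2}^n \cN(f(S_i), \sigma^2 I)$ is a common ``background'' mixture. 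By translation/rotation invariance of the Gaussian noise, I can further reduce to the one-dimensional case in which only $\|\mu_1 - \mu_2\|$ (bounded by $2L$) matters, so the problem becomes estimating $\alphaRDP{\mu}{\mu'}$ for a pair of $1$-parameter Gaussian-mixture distributions.

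\textbf{Binomial expansion of the $\alpha$-moment.} For integer $\alpha$, the Rényi divergence is controlled by $\bE_{\mu'}[(\mu/\mu')^\alpha]$. Writing $\mu = \mu' + q(\nu_1 - \nu_2)$ with $\nu_j = \cN(\mu_j,\sigma^2 I)$ gives $\mu/\mu' = 1 + qR$ where $R = (\nu_1 - \nu_2)/\mu'$, so
\[\bE_{\mu'}\!\bigl[(1+qR)^\alpha\bigr] = \sum_{k=0}^{\alpha}\binom{\alpha}{k} q^k\,\bE_{\mu'}[R^k].\]
The $k=0$ term is $1$. Crucially the $k=1$ term vanishes because $\int (\nu_1 - \nu_2)\,dx = 0$. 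Hence the first non-trivial contribution comes from $k=2$, which I will show supplies precisely the leading term $\mathrm{(I)}$ in \eqref{eq:dp-sgd-lemma}, while the tail $k \geq 3$ yields $\mathrm{(II)}$.

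\textbf{Controlling the second moment and the tail.} For the $k=2$ term I have to bound $\int (\nu_1 - \nu_2)^2/\mu'\,dx$. The key inequality is $\mu' \geq (1-q)\mu_0$, after which one reduces (by a convexity/worst-case argument that places $\mu_0$ at a single Gaussian whose mean is far from both $\mu_1$ and $\mu_2$) to evaluating $\int (\nu_1 - \nu_2)^2/\cN(\cdot, \sigma^2 I)\,dx$ explicitly. This integral is a standard Gaussian computation yielding a bound of order $\|\mu_1-\mu_2\|^2/\sigma^2$, so combined with the $\binom{\alpha}{2} q^2$ prefactor it produces $\frac{L^2 \alpha(\alpha+1)}{n^2(1 - 1/n)\sigma^2}$ after absorbing constants. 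For $k \geq 3$, I would bound $\bE_{\mu'}[R^k] = \int (\nu_1 - \nu_2)^k /\mu'^{k-1}\,dx$ by the analogous reduction to a single Gaussian denominator; the resulting integrals are Gaussian moments scaling like $(L/\sigma)^k$, and the hypotheses $\sigma \geq 1$ and $\alpha \leq \sigma^2 \log(n/\sigma)$ guarantee that successive terms in the binomial series shrink by a uniform factor, so the $k \geq 3$ tail sums geometrically to $O(\alpha^3 L^3/(n^3 \sigma^3))$.

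\textbf{Main obstacle.} The delicate step is the $k \geq 3$ analysis: the denominator $\mu'$ is a mixture, which precludes closed-form integration, and any naive uniform lower bound on $\mu'$ that ignores the shared mass $\mu_0$ loses a factor that would spoil the $O(1/n^3)$ scaling. Rigorously justifying the reduction to a worst-case single-Gaussian denominator, and then showing that the resulting geometric tail is dominated by its leading $k=3$ term under the stated hypotheses, is what requires care; for non-integer $\alpha$ one additionally invokes the monotonicity of Rényi divergence in $\alpha$ (\Cref{lem:monotonicity-RDP}) to reduce to the nearest integer order. Once both contributions are assembled, taking $\log/(\alpha-1)$ converts the moment bound into the stated Rényi bound in \eqref{eq:dp-sgd-lemma}.
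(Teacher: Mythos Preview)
This lemma is not proved in the present paper: it is quoted verbatim as an external result (Lemma~3 of \citet{Abadi16dpsgd}), in line with the paper's convention of alphabetic numbering for results taken from the literature. So there is no ``paper's own proof'' to compare against; the relevant benchmark is the original moments--accountant argument of Abadi et~al., and your sketch is exactly that argument: write the two output laws as mixtures sharing a common component, expand the $\alpha$-moment via the binomial theorem, observe that the $k=1$ term vanishes, and control the $k=2$ and tail contributions by Gaussian integrals.

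One point worth flagging. Abadi et~al.\ work with Poisson subsampling and the add/remove neighbouring relation, which lets them \emph{condition} on the shared randomness and reduce to a comparison between a single Gaussian $\nu_0$ and a two-component mixture $(1-q)\nu_0+q\nu$; the background never appears as a genuine mixture in the denominator. In the statement as written here (uniform single-index sampling, replacement neighbouring) the background $\mu_0=\frac{1}{n-1}\sum_{i\ge 2}\cN(f(S_i),\sigma^2 I)$ cannot be conditioned away, so your ``convexity/worst-case'' step is doing real work that is absent from the original proof. It is, however, justifiable: for each fixed $x$, $t\mapsto t^{-(k-1)}$ is convex on $(0,\infty)$, so Jensen applied to the mixing weights gives
\[
\frac{1}{\mu_0(x)^{k-1}}\;\le\;\sum_{i\ge 2}\frac{1}{n-1}\,\frac{1}{\cN(f(S_i),\sigma^2 I)(x)^{k-1}},
\]
whence $\int|\nu_1-\nu_2|^k/\mu_0^{k-1}\,dx\le\max_{\|c\|\le L}\int|\nu_1-\nu_2|^k/\cN(c,\sigma^2 I)^{k-1}\,dx$, a single-Gaussian integral that one computes in closed form. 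This handles both even and odd $k$ (after passing to $|\nu_1-\nu_2|^k$), so the obstacle you identify is real but surmountable, and the rest of your outline---geometric control of the $k\ge3$ tail under $\alpha\le\sigma^2\ln(n/\sigma)$, monotonicity in $\alpha$ for non-integer orders---completes the argument.
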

        
        \noindent For neighboring datasets $S, S'$ and for any $\theta$, each gradient estimate satisfies, 
        \begin{equation}\label{eq:appl-lem-3-abadi}
            \alphaRDP{g(\theta, \pi_u(S)) }{g(\theta, \pi_u(S'))}  \overset{(a)}{\leq}  \frac{c \alpha^2 L^2}{n^2 \sigma^2} \overset{(b)}{\leq} \frac{\alpha^2\varepsilon^2}{2T},
        \end{equation}
        Note that part (II) is smaller than part (I) in~\Cref{eq:dp-sgd-lemma} for $\alpha \leq \frac{\sqrt{T}}{\varepsilon}$. Hence, step (a) follows by application of~\Cref{lem:dp-sgd-lemma} for some positive constant $c \geq 1$. Step (b) follows by choosing $\sigma = \frac{L\sqrt{cT}}{\varepsilon n}$. 
        
        Next, we will show that $\DPSGDsamp$ is $\br{\alpha, \frac{\alpha \sdpL^2\varepsilon^2\mu^2\gamma^2}{2L^2}}$-SRDP. It suffices to show that each gradient estimate is $\br{\alpha, \frac{\alpha \sdpL^2\varepsilon^2\mu^2\gamma^2}{2TL^2}}$-SRDP by the post-processing and composition theorem of SRDP (\Cref{lem:properties-srdp}). 

        Let $k$ be the maximum integer such that for every pair $S, S'\in \cL$ with $d_{12}(S, S')\leq \sdpL$, the point-wise distance is less than $\sdpL/k$. We note that for any $S, S'\in \cL$ with $d_{12}(S, S')\leq \sdpL$, then for any $i\in [n]$, 
        \begin{equation}
            \label{eq:sgd-subsampling-srdp}
            d_{12}(\pi_u(S), \pi_u(S')) = \norm{S_i-S_i'}_2\leq \frac{\sdpL}{k}.
        \end{equation}

        Next, we will upper bound the Rényi divergence between two gradient estimates. Let \(i\sim \mathrm{Unif}\br{\bs{n}}\) be the sampled index. Then, for any $S, S'\in \cL$ such that $d_{12}(S, S')\leq \sdpL$, for any $\theta$,
        \begin{equation}\label{eq:dp-sgd-subsampling-with-assumption}
        \begin{aligned}
            \alphaRDP{g(\theta, \pi_u(S))}{g(\theta, \pi_u(S'))} &\overset{(a)}{\leq} \alphaRDP{g(\theta, S_i)}{g(\theta, S_i')}\\
            &\overset{(b)}{\leq} \frac{\norm{\nabla_\theta\ell(\theta, S_i) - \nabla_\theta\ell(\theta, S_i')}_2^2}{2\sigma^2}\\
            &\overset{(c)}{\leq} \frac{\sdpL^2\alpha \mu^2}{2k^2\sigma^2 }
            \overset{(d)}{\leq}\frac{\alpha\sdpL^2\varepsilon^2\mu^2}{2TL^2} \br{\frac{n}{k}}^2 \overset{(e)}{=}\frac{\alpha\sdpL^2\varepsilon^2\mu^2\gamma^2}{2TL^2} 
        \end{aligned}
    \end{equation}

    In step (a), $i\sim \text{Unif}([n])$ is as defined above. Step (b) follows by the Rényi divergence of Gaussian distributions (\Cref{lem:gaussian-distributions-renyi-divergence}), and step (c) follows by the smoothness assumption of $\ell$, \ie $\norm{\nabla_\theta\ell(\theta, S) - \nabla_\theta\ell(\theta, S')}_2\leq \mu\norm{S - S'}_2$ and~\Cref{eq:sgd-subsampling-srdp}. Step (d) follows by the substitution of $\sigma = \frac{L\sqrt{cT}}{\varepsilon n}$ and $c\geq 1$, and step (e) follows by the definition of inverse point-wise distance $\gamma = \frac{n}{k}$ as specified in~\Cref{assump:dp-sgd-subsampling}. 

    This completes the proof.  
    \end{proof}

    \begin{remark}
        Without~\Cref{assump:dp-sgd-subsampling}, we can employ the weaker subsampling results for SRDP (\Cref{lem:full-pas-srdp}). However, that would results in an extra $n^{3/2}$ factor in the SRDP parameter, as~\Cref{eq:dp-sgd-subsampling-with-assumption} will be replaced with~\Cref{eq:dp-sgd-subsampling-without-assumption}. 
 
    \begin{equation}\label{eq:dp-sgd-subsampling-without-assumption}
        \begin{aligned}
            \alphaRDP{g(\theta, \pi_u(S))}{g(\theta, \pi_u(S'))}&\leq\min_{k\geq 1} \bs{\frac{1}{k^2}\br{1-\frac{k-1}{n}} + \frac{k-1}{n}}\frac{\alpha\mu^2\sdpL^2\varepsilon^2n^2}{L^2T}\\
             &\overset{(a)}{\leq} \br{\frac{1}{2n-3} + \frac{(2n-4)\br{\sqrt{2n-3}-1}}{n(2n-3)}} \frac{\alpha\mu^2\sdpL^2\varepsilon^2n^2}{L^2T}\\
             &\overset{(b)}{\leq}
             \frac{2\mu^2\sdpL^2\varepsilon^2n^{3/2}\alpha}{L^2T}
        \end{aligned}
    \end{equation}
    where $(a)$ is obtained by choosing $k = \sqrt{2n-3}$ and $(b)$ follows by $\br{\frac{1}{2n-3} + \frac{(2n-4)\br{\sqrt{2n-3}-1}}{n(2n-3)}} \leq \frac{2}{\sqrt{n}}$. 
   \end{remark}
   
    \begin{thm}[\Cref{thm:sdp-parameters} for DP-SGD with iteration]\label{thm:dp-sgd-iteration-srdp-parameter}
        For an $L$-Lipschitz and $\mu$-smooth convex loss function $\ell$, let $\beta = \sup_{z\in \cX\times \cY}\norm{\nabla_\theta^2\ell(\theta; z)}$. For any privacy parameter $\varepsilon$, learning rate $\eta\leq 2/\beta$, $\sdpL\geq 0$, dataset collection $\cL$ with $\sdpL$-constrained maximum distance $\maxD$, 
         and $\alpha>1$ such that $\max\bc{L\sqrt{2(\alpha-1)\alpha}, \sdpL L\sqrt{2{(\alpha - 1)\alpha}}} < \sqrt{\frac{2\log n}{n}}\frac{8 L}{\varepsilon}$, assume $\cL$ satisfies that for any $S, S'\in \cL$ with $d_{12}(S, S')\leq \sdpL$, the differing points in $S, S'$ are consecutive, then $\DPSGDiter$ with parameter $\maxD$ and $\sigma = \sqrt{\frac{2\log n}{n}}\frac{8 L}{\varepsilon}$ is $\br{\alpha, \frac{\alpha \varepsilon^2}{2}}$-RDP and $\br{\alpha, \frac{\alpha \sdpL^2\mu^2 n\log \br{n - \maxD +2}}{2(n-\maxD+1)L^2\log n}}$-SRDP. 
    \end{thm}
    \begin{proof}
        We first note that $\DPSGDiter$ is $(\alpha, \frac{\alpha\varepsilon^2}{2})$-RDP following Theorem 26 in~\citet{feldman18iteration} (see \Cref{lem:dp-sgd-iteration-privacy} below for completeness).

        \begin{lemL}[Privacy guarantee of SGD by iteration, Theorem 26 in~\citet{feldman18iteration}]\label{lem:dp-sgd-iteration-privacy}
            Let $\ell$ be an convex $L$-Lipschitz and $\beta$-smooth loss function over $\reals^d$. Then, for any learning rate $\eta\leq \frac{2}{\beta}$, $\alpha > 1$, $\sigma \geq L\sqrt{2(\alpha -1 )\alpha}$, $\DPSGDiter$ satisfies $\br{\alpha, \frac{4\alpha L^2\log n}{n\sigma^2}}$-RDP.  
        \end{lemL}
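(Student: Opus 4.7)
The plan is to follow the Privacy Amplification by Iteration (PABI) framework of Feldman et al., specializing it to noisy projected gradient descent on convex smooth losses. I would structure the proof in three steps.

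\textbf{Step 1: Reduction to a Contractive Noisy Iteration (CNI).} For neighboring datasets $S, S'$ differing at a single index $i^\star$, the DP-FGD trajectories $(\theta_t)$ and $(\theta'_t)$ satisfy the recursion $\theta_{t+1} = \psi_t(\theta_t) + Z_t$ with update map $\psi_t(\theta) = \theta - \eta \nabla_\theta \ell(\theta; S_t)$ and $Z_t \sim \cN(0, \sigma^2 I)$. The maps coincide for $t \neq i^\star$. For $t = i^\star$, the $L$-Lipschitzness of $\ell$ gives $\sup_\theta \norm{\psi_{i^\star}(\theta) - \psi'_{i^\star}(\theta)}_2 \leq 2\eta L$. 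Moreover, for convex $\beta$-smooth $\ell$ with $\eta \leq 2/\beta$, each $\psi_t$ is $1$-Lipschitz (the standard contractivity lemma for gradient descent on convex smooth functions, proved via co-coercivity). This places the trajectories in a CNI with a single one-time shift of magnitude $s = 2\eta L$ at step $i^\star$.

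\textbf{Step 2: Shift-reduction / per-iterate divergence.} The heart of the argument is the PABI lemma: for a CNI perturbed once at step $i^\star$ by a shift of size $s$ followed by $k$ identical contractive noisy iterations, the output Rényi divergence satisfies $D_\alpha(\theta_{i^\star+k} \,\|\, \theta'_{i^\star+k}) \leq \frac{\alpha s^2}{2k\sigma^2}$. I would prove this by a shift-reduction coupling: decomposing the shift into $k$ equal pieces of size $s/k$ absorbed one per step, bounding the divergence contributed by one Gaussian shift of size $s/k$ via $D_\alpha(\cN(\mu,\sigma^2) \| \cN(\mu',\sigma^2)) \leq \frac{\alpha(s/k)^2}{2\sigma^2}$ (Lemma on Gaussian Rényi divergences), and invoking contractivity plus data-processing to propagate each piece without inflation. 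Substituting $s = 2\eta L$ yields $D_\alpha(\theta_T \| \theta'_T) \leq \frac{2\alpha \eta^2 L^2}{(T - i^\star)\sigma^2}$ for any $T > i^\star$.

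\textbf{Step 3: Averaging over the random stopping time.} Since $\DPSGDiter$ releases $\theta_T$ for $T$ drawn uniformly over a window of size $m = \Theta(n)$ ending at $n$, and since the worst case is $i^\star$ at the start of that window, I would apply the weak convexity of Rényi divergence (Lemma 25 of Feldman et al., restated in the excerpt as \Cref{lem:weak-convexity-renyi-divergence}) to the mixture over $T$. The precondition that each per-$T$ divergence lies below $1/(\alpha-1)$ is exactly guaranteed by the hypothesis $\sigma \geq L\sqrt{2\alpha(\alpha-1)}$. This yields
\begin{equation*}
D_\alpha(\cA_{\mathrm{SGD-iter}}(S) \,\|\, \cA_{\mathrm{SGD-iter}}(S')) \;\leq\; \frac{2}{m}\sum_{k=1}^{m}\frac{2\alpha\eta^2 L^2}{k\sigma^2} \;\leq\; \frac{4\alpha\eta^2 L^2 (1+\log n)}{m \sigma^2}.
\end{equation*}
Absorbing $\eta$ into $\sigma$ via the standard rescaling used in PABI and taking $m = n$ recovers the claimed bound $\frac{4\alpha L^2 \log n}{n\sigma^2}$.

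\textbf{Main obstacle.} Step 1 is routine convex analysis and Step 3 is a by-the-book use of weak convexity; the genuine technical content is Step 2, the shift-reduction argument. The difficulty there is that a naive composition would charge the full shift $s$ in a single step, leading to an $\alpha s^2/\sigma^2$ bound with no $1/k$ amplification. Obtaining the crucial $1/k$ factor demands the coupling that spreads the shift uniformly across iterations and exploits both the contractivity of $\psi_t$ and the smoothing property of Gaussian noise, and this is where I would expect the bulk of the work to lie.
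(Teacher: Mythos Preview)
The paper does not prove this statement. It is stated with alphabetical numbering (the paper's convention for results quoted from prior work) and invoked purely as a black-box citation of Theorem~26 in \citet{feldman18iteration}; the surrounding proof of \Cref{thm:dp-sgd-iteration-srdp-parameter} simply says ``We first note that $\DPSGDiter$ is $(\alpha,\alpha\varepsilon^2/2)$-RDP following Theorem~26 in \citet{feldman18iteration}'' and moves on.

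Your three-step outline is a faithful reconstruction of the original Feldman et al.\ argument and is correct in substance: the CNI reduction via contractivity of gradient steps on convex $\beta$-smooth losses, the shift-reduction lemma that spreads the single $2\eta L$ perturbation over $k$ subsequent noisy steps to obtain the $1/k$ amplification, and the weak-convexity averaging over the uniformly random stopping time (with the hypothesis $\sigma\geq L\sqrt{2\alpha(\alpha-1)}$ ensuring the precondition of \Cref{lem:weak-convexity-renyi-divergence}) are exactly the ingredients of that proof. The one imprecision is the bookkeeping of $\eta$: in Step~1 you take $Z_t\sim\cN(0,\sigma^2 I)$ as the additive noise on $\theta$, but with $\psi_t(\theta)=\theta-\eta\nabla\ell$ the noise on $\theta$ is $\eta$ times the gradient noise, i.e.\ $\cN(0,\eta^2\sigma^2 I)$, as in the paper's own CNI $(\theta_0,\{\psi_t\},\cN(0,(\eta\sigma)^2))$. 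Tracking this consistently, the $\eta^2$ in the shift $s^2=4\eta^2L^2$ cancels against the $\eta^2$ in the noise variance, and the final bound $\frac{4\alpha L^2\log n}{n\sigma^2}$ falls out directly without any ``absorbing $\eta$ into $\sigma$'' at the end.
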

        
        For SRDP, we consider datasets $S, S'\in \cL$ such that $d_{12}(S, S')\leq \sdpL$ and with all differing points appearing consecutively. Without loss of generality, we assume the first differing point has index $t$. By the assumption on the $\sdpL$-constrained maximum distance of $\cL$, there are in total $\maxD$ consecutive differing points in $S, S'$, $t \leq n-\maxD + 1$. In the following, we consider two cases: i) $t > T$, and ii) $t\leq T$. 

        As the gradient descent step with $S$ and $S'$ are exactly the same before $t$, in the first case, with $t> T$, we have 
        \[\alphaRDP{\DPSGDiter(S)}{\DPSGDiter(S')} = \alphaRDP{\cA_{FGD}^T(S)}{\cA_{FGD}^T}\leq \alphaRDP{\cA_{FGD}^t(S)}{\cA_{FGD}^t} = 0.\] 

        In the second case, we first employ~\Cref{lem:dp-fgd-guarantee} to upper bound the Rényi divergence of the output of DPFGD at some fixed step $T$. 
        
        \begin{lem}\label{lem:dp-fgd-guarantee}
            For $\tau \geq 0$, let $\cL$ be a dataset collection with dataset size $n$ and $\tau$-constrained maximum distance $\maxD$. Let $\cL$, $\sdpL$ and $\sigma$ be parameters that satisfy the same assumptions as in~\Cref{thm:dp-sgd-iteration-srdp-parameter}. For any two datasets $S, S'\in \cL$ with $d_{12}(S, S')\leq \sdpL$, the algorithm $\DPFGD{T}$ satisfies  
            \[\alphaRDP{\DPFGD{T}(S)}{\DPFGD{T}(S')}\leq \frac{2\alpha\sdpL^2 L}{\sigma^2(n-\tilde{t}+1)},\]
            where $\tilde{t}\leq n - \maxD + 1$ being the index of the first pair of differing points. 
        \end{lem}   
        For some fixed $T$, by~\Cref{lem:dp-fgd-guarantee} the Rényi divergence between the outputs of $\DPSGDiter$ on datasets $S$ and $S'$ is upper bounded by 
        \begin{equation}
            \label{eq:dp-fgd-guarantee1}
            \alphaRDP{\cA_{FGD}^T(S)}{\cA_{FGD}^T(S')}\leq\frac{2\alpha\sdpL^2L}{\sigma^2(T-t+1)}.
        \end{equation}

        Then, as $T$ is a uniform random variable in \(\DPSGDiter(S)\), we can upper bound the R\'enyi divergence at some random time $T$ by the weak convexity of R\'enyi divergence (\Cref{lem:weak-convexity-renyi-divergence}). We note that $t\leq T$ in case (ii), then for all $T$, as $\alpha$ satisfies $\sigma \geq \sdpL L\sqrt{2{(\alpha - 1)\alpha}
        }$, 
        \[\alphaRDP{\cA_{FGD}^T(S)}{\cA_{FGD}^T(S')} \leq \frac{2\alpha\sdpL^2L^2}{\sigma^2\br{T - t + 1}} \leq \frac{2\alpha \sdpL^2L^2}{\sigma^2}\leq \frac{1}{\alpha-1}.\] 
    
        Therefore, we can apply~\Cref{lem:weak-convexity-renyi-divergence} with $c = 1$. For any $t\leq T$,
        \begin{equation}
            \begin{aligned}
                \alphaRDP{\DPSGDiter(S)}{\DPSGDiter(S')}
                &\overset{(a)}{\leq} \frac{2}{n-\maxD+1}\sum_{T\in [n-\maxD+1]}\alphaRDP{\DPFGD{T}(S)}{\DPFGD{T}(S')}\\
                &\overset{(b)}{\leq}\frac{2}{n-\maxD + 1}\sum_{T = t}^{n-\maxD +1}\frac{2\alpha\sdpL^2\mu^2}{\sigma^2\br{T - t + 1}}\\
                &\leq \frac{4\alpha\sdpL^2\mu^2}{(n-\maxD +1)\sigma^2}\log \br{n-\maxD-t+2}\\
                &\leq \frac{4\alpha\sdpL^2\mu^2}{\sigma^2}\frac{\log (n-\maxD+2)}{n-k+1}
            \end{aligned}
        \end{equation}
        where $(a)$ follows applying~\Cref{lem:weak-convexity-renyi-divergence} and $(b)$ follows from~\Cref{eq:dp-fgd-guarantee1} for $T\geq t$. Substituting $\sigma$ concludes the proof. 

    \end{proof}

    \begin{proof}[Proof of~\Cref{lem:dp-fgd-guarantee}]
        Consider two datasets $S, S' \in \cL$ that satisfy the assumptions in~\Cref{lem:dp-fgd-guarantee}. Define the point-wise distance between these datasets as $d_i = \norm{S_i - S_i'}_2$. Let $\tilde{t}$ be the first index where $d_i > 0$. By the assumption that differing points in \(S,S^{\prime}\) are consecutively ordered, it follows that $\tilde{t}\leq n- \maxD + 1$. By the definition of $\sdpL$-constrained maximum distance (\Cref{assump:dp-sgd-iteration}) and the assumptions on the dataset collection $\cL$, we have, 
        \begin{equation}\label{eq:di-definition}
            d_i = \frac{\sdpL}{\maxD}, \quad \forall i \in \{\tilde{t}, \tilde{t}+1, \ldots, \tilde{t}+\maxD\}
        \end{equation}
        and $d_i = 0$ otherwise. 

        
        We will define Contractive Noisy Iteration (CNI) (\Cref{defn:cni}) and construct a CNI that outputs $\DPFGD{T}$ after $T$ steps.

        \begin{defn}[Contractive Noisy Iteration (CNI)]
            \label{defn:cni}
            Given an initial state $\theta_0\in \Theta$, a sequence of contractive functions $\psi_t:\Theta\to\Theta$, and a noise parameter $\sigma > 0$, the Contractive Noisy Iteration $(\theta_0, \bc{\theta_t}, \cN(0, \sigma^2))$ is defined by the following update rule: \[\theta_{t+1} = \psi_{t+1}(\theta_t) + Z_t,\]
            where $Z_t\sim \cN(0, \sigma^2)$. 
        \end{defn}

        For $S, S'\in \cL$, we construct two series of contractive function $\bc{\psi_i}$ and $\bc{\psi_i'}$ as the gradient descent on the $i^{\text{th}}$ data point of $S$ and $S'$ respectively. Formally,
        \begin{equation}\label{defn:psi-psi'}
        \begin{aligned}
            \psi_i(\theta) &= \theta - \eta \nabla_\theta(\theta, S_i)\\
            \psi_i'(\theta) &= \theta - \eta \nabla_\theta(\theta, S_i')
        \end{aligned}
        \end{equation}

        The functions $\psi_i$ and $\psi_i'$ are contractive functions for $\eta \leq 2/\beta$~\citep{Nesterov04}. It follows by the definition of DPFGD that $\theta_T = \DPFGD{T}(S)$ and $\theta_T' = \DPFGD{T}(S')$ are the $T^{\text{th}}$ outputs of the CNIs $(\theta_0, \bc{\psi_t}, \cN(0, \br{\eta\sigma}^2))$ and $(\theta_0, \bc{\psi_t'}, \cN(0, \br{\eta\sigma}^2))$ respectively.

        For these two CNIs, we can apply \Cref{lem:shift-reduction} to upper bound the R\'enyi divergence between their $T^{\text{th}}$ outputs.
        \begin{lemL}[Theorem 22 in~\citet{feldman18iteration} with fixed noise distribution]
            \label{lem:shift-reduction}
            Let $X_T$, $X_T'$ denote the output of two Contractive Noisy Iteration $(X_0, \{\psi_t\}, \cN(0, (\eta\sigma)^2))$ and $(X_0, \{\psi_t'\}, \cN(0, (\eta\sigma)^2))$ after \(T\) steps. Let $s_t = \sup_x \norm{\psi_t(x) - \psi_t'(x)}_2$. Let $a_1, \ldots, a_T$ be a sequence of reals such that $z_t = \sum_{i \leq t}s_i - \sum_{i\leq t}a_i\geq 0$ for all $t < T$ and $z_T = 0$. Then, 
            \[\alphaRDP{X_T}{X_T'}\leq\sum_{i = 1}^T\alphaRDP{\cN(0, \eta^2\sigma^2)}{\cN(a_i, \eta^2\sigma^2)}\]
            
        \end{lemL}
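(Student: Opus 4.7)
The plan is to prove this restatement of Theorem 22 in \citet{feldman18iteration} (specialised to a fixed Gaussian noise law $\cN(0, (\eta\sigma)^2 I)$ at every step) by adopting their framework built around the \emph{shifted Rényi divergence}
\[
  D_\alpha^{(z)}(\mu \,\|\, \nu) \;=\; \inf\bigl\{\, D_\alpha(\mu' \,\|\, \nu) \;:\; W_\infty(\mu, \mu') \leq z \,\bigr\}, \qquad z \geq 0,
\]
where $W_\infty$ denotes the $\infty$-Wasserstein distance. Because $D_\alpha^{(0)} = D_\alpha$ and $z_T = 0$ by assumption, it suffices to establish by forward induction on $t$ the sharper intermediate inequality
\[
  D_\alpha^{(z_t)}(X_t \,\|\, X_t') \;\leq\; \sum_{i=1}^t D_\alpha\!\bigl(\cN(0, \eta^2\sigma^2) \,\|\, \cN(a_i, \eta^2\sigma^2)\bigr).
\]
The base case $t = 0$ is immediate because $X_0 = X_0'$ and $z_0 \geq 0$, forcing both sides to vanish.

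The inductive step rests on two auxiliary facts I would state and prove first. (i) \emph{Contraction}: for any $1$-Lipschitz map $\psi$, $D_\alpha^{(z)}(\psi_\#\mu \,\|\, \psi_\#\nu) \leq D_\alpha^{(z)}(\mu \,\|\, \nu)$. Both $\psi_t$ and $\psi_t'$ are non-expansive in our setting since they are gradient steps on a convex $\beta$-smooth loss with $\eta \leq 2/\beta$. (ii) \emph{Shift reduction under Gaussian convolution}: for $N \sim \cN(0, \eta^2 \sigma^2 I)$ and any shift $a$,
\[
  D_\alpha^{(z)}\!\bigl(\mu \ast N \,\|\, \nu \ast N\bigr) \;\leq\; D_\alpha^{(z + |a|)}(\mu \,\|\, \nu) \;+\; D_\alpha\!\bigl(\cN(0, \eta^2\sigma^2) \,\|\, \cN(a, \eta^2\sigma^2)\bigr).
\]

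To execute the inductive step I expand $X_{t+1} = \psi_{t+1}(X_t) + N_{t+1}$ and $X_{t+1}' = \psi_{t+1}'(X_t') + N_{t+1}$. The hypothesis $s_{t+1} = \sup_x \|\psi_{t+1}(x) - \psi_{t+1}'(x)\|_2$ guarantees that the laws of $\psi_{t+1}(X_t')$ and $\psi_{t+1}'(X_t')$ lie within $W_\infty$ distance $s_{t+1}$, so monotonicity of $D_\alpha^{(\cdot)}$ lets me replace $\psi_{t+1}'(X_t')$ by $\psi_{t+1}(X_t')$ at the cost of enlarging the shift budget by $s_{t+1}$. Applying shift reduction (ii) with shift $a_{t+1}$ extracts the per-step Gaussian penalty $D_\alpha(\cN(0, \eta^2\sigma^2) \,\|\, \cN(a_{t+1}, \eta^2\sigma^2))$ while reducing the shift budget by $a_{t+1}$, and contraction (i) under $\psi_{t+1}$ pulls the residual divergence back to $X_t$ versus $X_t'$. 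The bookkeeping reproduces exactly the recursion $z_{t+1} = z_t + s_{t+1} - a_{t+1}$, with the hypothesis $z_t \geq 0$ for $t < T$ ensuring that every shifted divergence invoked along the chain is well-defined. Summing the per-step costs from $t=1$ to $t=T$ and using $z_T = 0$ closes the induction and yields the lemma.

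The hard part will be establishing the shift-reduction property (ii) cleanly, since it requires a coupling argument that simultaneously realises the $W_\infty$ near-infimum in $D_\alpha^{(z+|a|)}$ and exploits the joint Gaussianity of the convolution so that the two contributions split additively with no cross term. The technical device, in the style of Feldman et al., is a variational characterization of the Rényi divergence combined with the fact that Gaussians of equal variance translate cleanly under mean shifts, so the shift $a$ can be pushed entirely onto the noise coordinate without disturbing the Wasserstein coupling of $\mu$ and $\nu$. Contraction is by comparison routine: a $1$-Lipschitz pushforward only shrinks Wasserstein distances, which tightens the feasible set inside the infimum defining $D_\alpha^{(z)}$.
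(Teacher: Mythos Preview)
The paper does not supply its own proof of this lemma: it is labelled with the alphabetic scheme the authors reserve for results quoted from the literature and is simply imported from \citet{feldman18iteration} without argument. Your proposal correctly reconstructs the shifted-R\'enyi-divergence induction (contraction of $D_\alpha^{(z)}$ under non-expansive maps, plus the shift-reduction lemma for Gaussian convolution) that constitutes the original proof in that reference, so there is no separate argument in the present paper to compare against.
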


        
        Following the definition in~\Cref{lem:shift-reduction}, for the two contractive noisy maps $(\theta_0, \bc{\psi_t}, \cN(0, \br{\eta\sigma}^2))$ and $(\theta_0, \bc{\psi_t'}, \cN(0, \br{\eta\sigma}^2))$, we define $s_i$ as   
        \begin{equation}
            \label{eq:contractive-map-gradient-descent}
            \begin{aligned}
                s_i &= \sup_{\theta}\norm{\psi_i(\theta) - \psi_i'(\theta)}_2 \\
                &\overset{(a)}{\leq} \sup_{\theta}\norm{\theta - \eta g(\theta, S_i) - \theta + \eta g(\theta, S_i')}_2\\
                &\overset{(b)}{\leq} \eta \mu d_i \overset{(c)}{=} \begin{cases} 
                      \frac{\eta \mu\sdpL}{\maxD} & i\in \{\tilde{t}, \tilde{t}+1, \ldots, \tilde{t}+\maxD\}\\
                      0 & \text{Otherwise}
                   \end{cases}
            \end{aligned}
        \end{equation}
        where step (a) follows by~\Cref{defn:psi-psi'}, step (b) follows by the smoothness assumption on the loss function $\ell$, \ie $\norm{\nabla_\theta\ell(\theta, S_i) - \nabla_\theta\ell(\theta, S_i')}_2\leq \mu \norm{S_i - S_i'}_2$ for any $\theta, S_i, S_i'$. Step (c) follows by~\Cref{eq:di-definition}.

        By choosing $a_i = 0$ for all $i < \tilde{t}$ and $a_i = \frac{\eta\sdpL L}{T - \tilde{t} + 1}$ for $i \geq \tilde{t}$, we have that for all $t \leq T-\maxD+1$, $z_t = \sum_{i\leq t}s_i - \sum_{i \leq t} a_i \geq 0$ and $z_T = 0$. This allows for the application of \Cref{lem:shift-reduction}, 
        \begin{equation}\label{eq:dp-sgd-iteration}
            \begin{aligned}
            \alphaRDP{\DPFGD{T}(S)}{\DPFGD{T}(S)} & \overset{(a)}{\leq} \frac{2\alpha}{\eta^2 \sigma^2}\sum_{i = 1}^T a_i^2\\
            &\overset{(b)}{\leq} \frac{2\alpha}{\eta^2\sigma^2}\sum_{i = \tilde{t}}^T\frac{\sdpL^2\eta^2\mu^2}{(T-\tilde{t}+1)^2}\\
            &= \frac{2\alpha\sdpL^2\mu^2}{\sigma^2\br{T - \tilde{t} + 1}}
            \end{aligned}
        \end{equation}
    where step (a) follows from the application~\Cref{lem:shift-reduction} and R\'enyi divergence of two Gaussian distributions (\Cref{lem:gaussian-distributions-renyi-divergence}). Step (b) follows by the definition of $a_i$. 
    \end{proof}

\clearpage

\section{Proof of Meta-Theorem (\Cref{thm:general-thm-training-only}) }
\label{app:sec3-proofs}

\generalTrainingOnly*
\begin{proof}
Consider two neighboring datasets $S_1$ and $S_2$, where $S_1 = S\cup \{z_1\}$ and $S_2 = S \cup\{z_2\}$. Let $\pi_1$ and $\pi_2$ be the pre-processing functions output by the pre-processing algorithm $\pi$ on $S_1$ and $S_2$ respectively. Our objective is to establish an upper bound on the Rényi divergence between the output distribution of $\cA$ on the pre-processed dataset $\pi_1(S_1)$ and $\pi_2(S_2)$, \ie~$\alphaRDP{\cA(\pi_1(S_1))}{\cA(\pi_2(S_2))}$ and $\alphaRDP{\cA(\pi_2(S_2))}{\cA(\pi_1(S_1))}$. 

In the following, we first derive an upper bound on the R\'enyi divergence between $\cA(\pi_1(S_1))$ and $\cA(\pi_2(S_2))$. To do so, we construct a new dataset $\tilde{S}$ using the components of $\pi_1(S_1)$ and $\pi_2(S_2)$ as indicated in~\Cref{fig:thm1-proof-sketch}, \ie $\tilde{S} = \pi_1(S_1)\cup \pi_2(S_2)$. Then, using the same approach, we will upper bound the R\'enyi divergence between $\cA(\pi_2(S_2))$ and $\cA(\pi_1(S_1))$.

\begin{figure}[t]
    \centering
    \includegraphics[width = 0.5\linewidth]{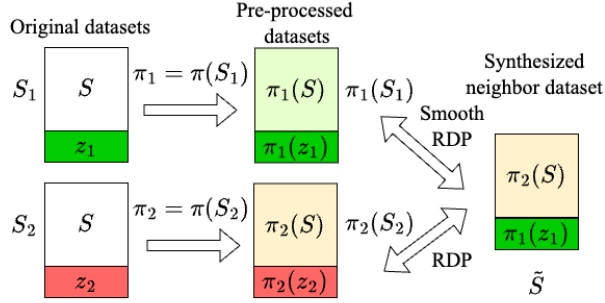}
    \caption{Illustration of the privacy analysis}
    \label{fig:thm1-proof-sketch}
\end{figure}



By contruction, $\tilde{S}$ and $\pi_2(S_2)$ are neighboring datasets, and that the $L_{12}$ distance between $\tilde{S}$ and $\pi_2(S_2)$ is upper bounded by $\Delta_2\Delta_\infty$. Using the RDP property of algorithm $\cA$ we upper bound the divergence between $\cA(\tilde{S})$ and $\cA(\pi_2(S_2))$, 
\begin{equation}
    \label{eq:general-results-rdp-srdp-bounds-1}
        \alphaRDP{\cA(\pi_1(S_1))}{\cA(\tilde{S})}\leq \tilde{\varepsilon}(\alpha, \Delta_\infty\Delta_2).
\end{equation}
Similarly, using the SRDP property of the algorithm $\cA$ over $\cL$, we upper bound the divergence between $\cA(\tilde{S})$ and $\cA(\pi_1(S_1))$,

\begin{equation}
    \label{eq:general-results-rdp-srdp-bounds-2}
        \alphaRDP{\cA(\tilde{S})}{\cA(\pi_2(S_2))} \leq \varepsilon (\alpha).
\end{equation}

\noindent Now, we combine~\Cref{eq:general-results-rdp-srdp-bounds-1,eq:general-results-rdp-srdp-bounds-2} using the weak triangle inequality of Rényi divergence (\Cref{lem:triangle-inequality-renyi-divergence}), to upper bound the R\'enyi divergence between \(\cA(\pi_1(S_1))\) and \(\cA(\pi_2(S_2)\).
\begin{lemL}[Triangle inequality of Rényi divergence \cite{mironov2017renyi}]
\label{lem:triangle-inequality-renyi-divergence}
Let $\mu_1, \mu_2, \mu_3$ be distributions with the same support. Then, for $\alpha > 1$, $p, q > 1$ such that $\frac{1}{p} + \frac{1}{q} = 1$, it holds that \[D_\alpha(\mu_1||\mu_2) \leq \frac{\alpha - \nicefrac{1}{p}}{\alpha - 1}D_{p\alpha}(\mu_1||\mu_3) + D_{q(\alpha - \nicefrac{1}{p})}(\mu_3||\mu_2).\]
\end{lemL}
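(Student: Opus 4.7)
The plan is to derive the stated inequality by a single application of Hölder's inequality to the Rényi divergence integrand, followed by identifying the two resulting factors as Rényi divergence moments. By the definition in~\Cref{defn:renyi-divergence} (writing densities with respect to a common dominating measure), we have
\[
D_\alpha(\mu_1\|\mu_2) \;=\; \frac{1}{\alpha-1}\log\int \mu_1(x)^\alpha\,\mu_2(x)^{1-\alpha}\,dx,
\]
so it suffices to bound $\int \mu_1^\alpha\mu_2^{1-\alpha}\,dx$ in terms of two integrals, one against $\mu_3$ in place of $\mu_2$, and one between $\mu_3$ and $\mu_2$.

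The key step is to insert $\mu_3$ with a free exponent and split the integrand as a product chosen so that Hölder's inequality gives exactly Rényi-type integrals. I would write
\[
\mu_1(x)^\alpha\,\mu_2(x)^{1-\alpha} \;=\; \bigl[\mu_1(x)^{\alpha}\,\mu_3(x)^{\beta}\bigr]\cdot\bigl[\mu_3(x)^{-\beta}\,\mu_2(x)^{1-\alpha}\bigr],
\]
and apply Hölder with conjugate exponents $p,q$ (so $1/p+1/q=1$). Matching the first factor raised to $p$ with a Rényi integrand of the form $\mu_1^{\alpha'}\mu_3^{1-\alpha'}$ forces $\alpha'=p\alpha$ and $\beta = 1/p-\alpha$. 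A short check using $q = p/(p-1)$ then shows that the second factor raised to $q$ equals $\mu_3^{q(\alpha-1/p)}\mu_2^{1-q(\alpha-1/p)}$, i.e.\ a Rényi integrand of order $q(\alpha-1/p)$ between $\mu_3$ and $\mu_2$. Finding this splitting is the main technical obstacle; everything else is bookkeeping.

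Having applied Hölder, I would take logarithms and divide by $\alpha-1$ to obtain
\[
D_\alpha(\mu_1\|\mu_2) \;\leq\; \frac{p\alpha-1}{p(\alpha-1)}\,D_{p\alpha}(\mu_1\|\mu_3) \;+\; \frac{q(\alpha-1/p)-1}{q(\alpha-1)}\,D_{q(\alpha-1/p)}(\mu_3\|\mu_2),
\]
where the coefficient of each Rényi divergence arises from the $1/p$ (resp.\ $1/q$) exterior power of Hölder combined with the $1/(\alpha-1)$ normalization in the Rényi definition. Finally, I would simplify the coefficients: the first simplifies directly to $(\alpha-1/p)/(\alpha-1)$, and for the second, expanding the numerator gives $q(\alpha-1/p)-1 = q\alpha - q/p - 1 = q(\alpha-1)$ using $1/p+1/q=1$ (equivalently $q/p = q-1$), so the second coefficient equals $1$. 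This yields exactly the inequality in the lemma statement. The only case to verify separately is the common support assumption, which ensures all integrals are well-defined and the Hölder step is tight enough not to require limiting arguments.
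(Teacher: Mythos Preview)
The paper does not prove this lemma itself; it is stated as a known result cited from \citet{mironov2017renyi}. Your proof is correct and is precisely the standard argument: insert $\mu_3$ via a free exponent, apply H\"older's inequality with conjugate exponents $p,q$, and verify that the resulting two factors are Rényi moments of orders $p\alpha$ and $q(\alpha-1/p)$, with the coefficient of the second term collapsing to $1$ via $q/p=q-1$.
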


\noindent Using~\Cref{lem:triangle-inequality-renyi-divergence}, for any $c_1 > 1$, we have
\begin{equation}
    \label{eq:general-results-part1}
    \begin{aligned}
        D_\alpha(\cA(\pi_1(S_1))||\cA(\pi_2(S_2))) &\leq \frac{\alpha - \frac{1}{c_1}}{\alpha - 1}D_{c_1\alpha}(\cA(\pi_1(S_1))|\cA(\tilde{S})) + D_{\frac{c_1\alpha - 1}{c_1-1}}(\cA(\tilde{S})||\cA(\pi_2(S_2)))\\
        &\overset{(a)}{\leq}\frac{\alpha c_1 - 1}{c_1(\alpha - 1)}\tilde{\varepsilon}(\alpha c_1, \Delta_\infty\Delta_2) + \varepsilon\br{\frac{c_1\alpha - 1}{c_1 - 1}}
    \end{aligned}
\end{equation}
where step (a) follows from~\Cref{eq:general-results-rdp-srdp-bounds-1,eq:general-results-rdp-srdp-bounds-2}. 

Similarly, by constructing a dataset $\tilde{S}$ consisting of $\pi_1(S)$ and $\pi_2(z_2)$, we upper bound $\alphaRDP{\cA(\pi_2(S_2))}{\cA(\tilde{S})}$ and $\alphaRDP{\cA(\tilde{S})}{\cA(\pi_1(S_1))}$ with the SRDP and RDP property in a similar manner as~\Cref{eq:general-results-rdp-srdp-bounds-1,eq:general-results-rdp-srdp-bounds-2}. Applying~\Cref{lem:triangle-inequality-renyi-divergence}, we can show that for any $c_2 > 1$, 

\begin{equation}
    \label{eq:general-results-part2}
        D_\alpha(\cA(\pi_2(S_2))||\cA(\pi_1(S_1)))\leq \frac{\alpha c_2- 1}{c_2(\alpha - 1)}\varepsilon(\alpha c_2) + \tilde{\varepsilon}\br{\frac{c_2\alpha - 1}{c_2 - 1}, \Delta_\infty\Delta_2}
\end{equation}

Combining~\Cref{eq:general-results-part1} and~\Cref{eq:general-results-part2} concludes the proof. 
\end{proof}


\clearpage

\section{Proofs for sensitivity of pre-processing algorithms (\Cref{sec:sensitivity-preproc})}
\label{app:sec-diff-processing}

In this section, we bound the sensitivity of pre-processing algorithms discussed in~\Cref{sec:sensitivity-preproc}.

\subsection{Sensitivity analysis of deduplication and quantization}
\deduplicationSensitivities*
\begin{proof}
    Consider two neighboring datasets $S_1, S_2\in \cL$. Without loss of generality, let $S = S_1\cap S_2$, $z_1 = S_1\setminus S$ and $z_2 = S_2\setminus S$, similar to the notations of original datasets in~\Cref{fig:thm1-proof-sketch}.

    \paragraph{Sensitivity analysis of deduplication} As the data space $\cX$ is bounded by $1$, it is obvious that the upper bound on $L_2$ sensitivity of $\pi_\eta^d$ is $1$. 
    
    To bound the \(L_{\infty}\) sensitivity, we first recall the definition of a ``good'' cluster. For a dataset \(S\) and a point \(x \in S\), define \(B(x, \eta; S) = \{\tilde{x} \in S: \norm{\tilde{x}-x}_2 \leq \eta\}\), a ball of radius \(\eta\) around \(x\). A point $x$ is the centroid of a \textit{good cluster} if \(B(x, \eta; S) = B(x, 3\eta; S)\). The set of all good clusters in a dataset $S$ is denoted by \(B(S) = \bc{B_i}_{i = 1}^{m}\)  where \(B_i := B(x_i, \eta; S)\) for all $x_i \in S$ satisfying \(B(x_i, \eta; S) = B(x_i, 3\eta; S)\).

    We will first prove that the difference between the datasets $\pi_{\eta, S_1}^d(S)$ and $\pi_{\eta, S}^d(S)$ is the ball $B(z_1, \eta;  S_1)$. Similarly, we show that the maximum difference between $\pi_{\eta, S_2}^d(S)$ and $\pi_{\eta, S}^d(S)$ is $B(z_2, \eta; S_2)$. Taking supremum over all neighboring datasets in $\cL$, these two results imply that the $L_\infty$ sensitivity of deduplication is upper bounded by twice the size of the largest good cluster in any dataset $S\in \cL$. 
    
    To calculate that the maximum difference between the datasets $\pi_{\eta, S_1}^d(S)$ and $\pi_{\eta, S}^d(S)$, we assume without loss of generality that there exists $x_1, \ldots, x_k\in S$ satisfying $\norm{z_1-x_i}\leq \eta$ for all $i\in [k]$. We consider the following cases: 

    \paragraph{Case I: $x_1, \ldots x_k$ are not in any good cluster centered at some $c\in S$.} We will discuss the two sub-cases: one where the point $z_1$ forms the centroid of a good cluster, and another where it does not. 
    
    If the point $z_1$ is the centroid of a good cluster, then $x_1, \ldots x_k$ are the only points in the good cluster $B(z_1, \eta; S)$ and will be removed by $\pi_{\eta, S_1}^d$. In contrast, in Case I, $x_1, \ldots, x_k$ will not be removed by $\pi_{\eta, S}^d$. Hence, the difference between $\pi_{\eta, S_1}^d(S)$ and $\pi_{\eta, S}^d(S)$ is $\bc{x_1, \ldots, x_k}\subset B(z_1, \eta; S)$. 

    If the point $z_1$ is not the centroid of a good cluster, then $z_1$ is not in any good cluster. We will prove this claim by contradiction. Assume $z_1$ is in a good cluster centered at some $c\in S$. Then, for any $x_i$, $i\in [k]$, 
    \begin{equation}
        \norm{x_i - c}_2\leq \norm{x_i - z_1}_2 + \norm{z_1 - c}_2\leq 2\eta. 
    \end{equation}

    If $\norm{x_i - c}_2 \leq \eta$, then $x_i$ is also in the good cluster around $c\in S$, contradicting the assumption that none of $\bc{x_1, \ldots x_k}$ is in any good cluster. On the other hand, if $\eta \leq \norm{x_i - c}_2 \leq 2\eta$, then $B(c, 3\eta; S)$ cannot be a good cluster. 

    Therefore, we have shown that when $x_1, \ldots x_k$ are not in any good cluster and the point $z_1$ is not the centroid of a good cluster, then none of $\bc{x_, \ldots, x_k, z_1}$ is in a good cluster. In this case, $\pi_{\eta, S}^d(S) = \pi_{\eta, S_1}^d(S)$. 

    \paragraph{Case II: There exists some point $x\in\bc{x_1, \ldots, x_k}$ in a good cluster centered at $c\in S$, \ie $x\in B(c, \eta; S)$ and $B(c, \eta; S)$ is a good cluster.} We first consider the effect of $\pi_{\eta, S}^d$ and $\pi_{\eta, S_1}^d$ on the single point $x$. Note that \begin{equation}
        \norm{z_1 - c}_2 \leq \norm{z_1 - x}_2 + \norm{x - c}_2 \leq 2\eta. 
    \end{equation}
    If $\norm{z_1 - c}_2\leq \eta$, $B(c, \eta; S_1) = B(c, \eta; S\cup \bc{z_1})$ is also a good cluster. Then, $\pi_{\eta, S_1}^d$ and $\pi_{\eta, S_1}^d$ has the same effect on $x$. 
    
    If $\eta \leq \norm{z_1 - c}_2\leq 2\eta$, then $z_1 \in B(c, 3\eta; S_1)$ but $z_1\notin B(c, \eta; S_1)$. This implies $B(c, 3\eta; S_1) \neq B(c, \eta; S_1)$, and $B(c, \eta; S_1)$ is not a good cluster. Therefore, $\pi_{\eta, S}^d$ removes the point $x$, while $\pi_{\eta, S_1}^d$ does not. In this case, $x$ is a different point between $\pi_{\eta, S}^d(S)$ and $\pi_{\eta, S_1}^d(S)$. 

    We note that there are at most $k$ different points between $\pi_{\eta, S}^d(S)$ and $\pi_{\eta, S_1}^d(S)$, when all points $\bc{x_1, \ldots, x_k}$ are in some good cluster $B(c_i, \eta; S)$ for $c_i\in S$ and $z_1$ is selected such that none of $B(c_i, \eta; S_1)$ remains to be a good cluster. In this case, the difference between $\pi_{\eta, S}^d(S)$ and $\pi_{\eta, S_1}^d(S)$ is $\bc{x_1, \ldots, x_k}\subseteq B(z_1, \eta; S)$. 
    
    Following a similar argument, we can show that the maximum number of different points between $\pi_{\eta, S_2}^d (S)$ and $\pi_{\eta, S}^d(S)$ is $k$ and this maximum set of different points is a subset of $B(z_2, \eta; S_2)$. This concludes the proof for the sensitivity of deduplication. 

    \paragraph{Senstivity analysis of quantization} The analysis of $L_\infty$ sensitivity of quantization is the same as that of deduplication. To get the $L_2$ sensitivity of quantization, we consider two cases. If a point is in a good cluster, then quantization process change this point to the centroid of the cluster. The $L_2$ distance incurred by the pre-processing is upper bounded by $\eta$ by the definition of $\eta$-quantization. If a point is not in a good cluster, it remains unchanged after quantization. Combining the two cases, the $L_2$ sensitivity of quantization is upper bounded by $\eta$. 

\end{proof}

\label{app:imputation-tech}

\subsection{Sensitivity analysis of model-based imputation}\label{app:imputation-tech}

In this section, we first provide a general result on the sensitivity for any model-based imputation method. We then introduce specific imputation methods, including mean imputation (\Cref{prop:imputation-mean-sensitivity} in the main text), median imputation, trimmed mean imputation, and linear regression. We summarize their sensitivity results in~\Cref{tab:imputation-sensitivity}. 

For a given model $f$, the imputation algorithm $\pi_f$ first generates an imputation function $f_S$ by fitting the model $f$ to a dataset $S$. Then, it replaces each missing value in the dataset with the prediction based on the imputation function $f_S$. The $L_2$ and $L_\infty$ sensitivity of model-based imputation is presented in~\Cref{prop:imputation-sensitivity}.

Several widely used models for imputation include mean, median, trimmed mean, and linear regression, described as follows. Mean imputation replaces the missing values in the $j^{\text{th}}$ feature with the empirical mean of the available data for that feature. On the other hand, median imputation replaces each missing value with the median of the non-missing points in the corresponding feature. The trimmed mean estimator with parameter $m$ is an interpolation between the mean and median. It estimates each missing value by computing the mean after removing the $m$ smallest and largest points from the remaining points of the feature that is not missing. The above methods address the missing values at a feature $j$ using data from that feature alone. In contrast, linear regression also employs the information of the other features of the missing point. Specifically, it estimates the missing value by the prediction of a linear regression on the $j^{\text{th}}$ feature using some or all other features in the dataset. 

Many imputation methods, such as median and linear regression, do not have bounded global sensitivity even when the instance space is bounded \citep{Alabi22simplelr}. However, the local sensitivity of these methods are usually bounded on well-behaved datasets. Given a collection of well-behaved datasets $\cL$, the $L_2$ sensitivity over the collection is the upper bound on the local sensitivity of all datasets $S\in \cL$. We exploit this property to find the $L_2$ sensitivity of the aforementioned imputation methods. We introduce additional notation and provide a summary of the $L_2$ sensitivity for different methods in~\Cref{tab:imputation-sensitivity}.

\begin{restatable}[Sensitivity of model-based imputation]{proposition}{imputationSensitivity}\label{prop:imputation-sensitivity}
    For a dataset collection $\cL$, the $L_\infty$ sensitivity of model-based imputation $\pi_f$ over $\cL$ is the maximum number of missing values present in any dataset in $\cL$. Furthermore, the $L_2$-sensitivity of $\pi_f$ over $\cL$ is given by \[\Delta_2(\cL, \pi_f) = \max_{\substack{S', S\in \cL\\ d_{H}(S, S')=1}} \max_{x \in S\cup S'}\sqrt{\sum_{j=1}^d\br{f_S( x_j) - f_{S'}( x_j)}^2},\]
    where $x_j$ denotes the $j^{\text{th}}$ feature of $x$. Specifically, for mean imputation, median imputation, trimmed mean imputation, and regression, we present their sensitivities and corresponding assumptions in~\Cref{tab:imputation-sensitivity}. 
\end{restatable}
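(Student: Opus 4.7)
The plan is to separate the argument into three parts: the generic $L_\infty$ sensitivity bound, the generic $L_2$ sensitivity formula, and the method-specific bounds summarized in Table 3. Throughout, I will write $S, S'\in\cL$ for an arbitrary pair of neighbors with $d_H(S,S')=1$, producing imputation models $f_S$ and $f_{S'}$ respectively.

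For the $L_\infty$ sensitivity, I will use the fact that $\pi_{f,S}(x)$ differs from $x$ only on the missing coordinates of $x$, and agrees with $x$ on every observed feature regardless of $S$. Consequently the only entries of the pre-processed dataset that can change when swapping $S$ for $S'$ are those corresponding to points that carry at least one missing feature. Summing over all such points in $S\cup S'$ and taking the supremum over $\cL$ yields that $\Delta_\infty(\cL,\pi_f)$ is bounded by the largest total count of missing entries in any dataset in $\cL$, as claimed.

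For the generic $L_2$ formula, I will fix an arbitrary $x\in S\cup S'$, let $M(x)\subseteq[d]$ denote its missing feature set, and observe that $\pi_{f,S}(x)$ and $\pi_{f,S'}(x)$ agree coordinatewise on $[d]\setminus M(x)$. Hence
\[
    \norm{\pi_{f,S}(x)-\pi_{f,S'}(x)}_2
    = \sqrt{\sum_{j\in M(x)} \br{f_S(x_j)-f_{S'}(x_j)}^2}
    \leq \sqrt{\sum_{j=1}^d \br{f_S(x_j)-f_{S'}(x_j)}^2},
\]
and taking suprema over $x$ and over neighboring pairs in $\cL$ gives the claimed closed form for $\Delta_2(\cL,\pi_f)$.

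The third piece is the specialization to each method. For mean imputation, because features lie in $[-1,1]$ and each coordinate mean is computed over at least $n-p$ non-missing samples, replacing one data point perturbs the per-feature mean by at most $\nicefrac{2}{n-p}$; plugging this into the generic formula recovers~\Cref{prop:imputation-mean-sensitivity}. The trimmed mean case is identical after replacing $n-p$ with the post-trim denominator. For the median I will invoke the standard local sensitivity of an order statistic, which equals the largest gap between consecutive order statistics in a neighborhood of the median; that gap is built into the definition of $\cL$. For linear regression imputation I will write $\hat\beta_S = (X_S^\top X_S)^{-1}X_S^\top y_S$, apply the Sherman--Morrison identity to express $\hat\beta_{S'}-\hat\beta_S$ as a rank-one update, and bound its norm via $\lambda_{\min}(X_S^\top X_S)$ uniformly over $\cL$. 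I expect the regression case to be the main obstacle, since without a uniform lower bound on the smallest eigenvalue of the Gram matrix over $\cL$ the perturbation can be unbounded; the key technical step will therefore be to state that spectral condition explicitly in the assumption on $\cL$ and to convert the Sherman--Morrison estimate into a feature-wise bound compatible with the generic $L_2$ formula. The remaining three cases reduce to elementary algebra or to the classical median-sensitivity estimate.
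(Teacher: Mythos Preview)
Your overall plan matches the paper's proof: the same $L_\infty$ argument, the same generic $L_2$ formula, and Sherman--Morrison for the regression case. There is, however, a concrete gap in your mean-imputation step. You propose to bound each per-feature mean perturbation by $2/(n-p)$ and then ``plug this into the generic formula.'' But the generic formula is $\sqrt{\sum_{j=1}^d(f_S(x_j)-f_{S'}(x_j))^2}$, so a coordinate-wise bound of $2/(n-p)$ yields $2\sqrt{d}/(n-p)$, not the dimension-free $2/(n-p)$ claimed in \Cref{prop:imputation-mean-sensitivity}. The paper avoids this by keeping the sum intact: writing $f_S(x_j)-f_{S'}(x_j)=\tfrac{1}{n_j}(z_{1j}-z_{2j})$, pulling out the worst $n_j\ge n-p$, and then using the \emph{Euclidean} bound $\sum_j (z_{1j}-z_{2j})^2=\|z_1-z_2\|_2^2\le 4$ from the unit-ball assumption on $\cX$. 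In other words, the correct ingredient is not ``features lie in $[-1,1]$'' but ``points lie in the unit $\ell_2$ ball,'' and the two are not interchangeable inside an $\ell_2$ sum over coordinates.

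A smaller caveat: the trimmed-mean case is \emph{not} identical to the mean after swapping denominators. When you replace one sample, the identity of the $m$ trimmed points on either side can shift, so the change in the trimmed mean is governed by the spread of the surviving order statistics, not by $\|z_1-z_2\|_2$. This is why the paper's bound is $\bigl(x_{(m)}^{\max}-x_{(n-m)}^{\min}\bigr)/(n-2m-p)$ with a data-dependent numerator rather than a universal constant. Your median and regression sketches are fine and line up with what the paper does.
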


\begin{table*}[t]
    \centering
    \begin{tabular}{C{0.08\textwidth}C{0.51\textwidth}C{0.17\textwidth}C{0.18\textwidth}}
        \toprule
        Notation & Meaning & Imputation Model $f$ &  $L_2$-Sensitivity 
        $\Delta_2(\cL, \pi_f)$\\
        \midrule
         $p$ & Maximum number of missing points in any $S\in\mathcal{L}$ & Mean &$\frac{2}{n-p}$\\\midrule
         \\[-1em]
         $x_{(m)}^{\min}$ & Minimum $m^{th}$ ordered statistics of any feature in any $S\in \mathcal{L}$& Median & $x_{\br{(n+1)/2}}^{\max} - x_{\br{n/2}}^{\min}$\\ 
         \\[-0.5em]
         $x_{(m)}^{\max}$ & Maximum $m^{th}$ ordered statistics of any feature in any $S\in \mathcal{L}$ &$m$-Trimmed Mean & $\nicefrac{x_{(m)}^{\max} - x_{(n-m)}^{\min}}{n-2m-p}$\\\midrule
         $\lambda_{\max}$ & Maximum eigenvalue of $X^TX$, 
         for any $S = (X, Y)\in \mathcal{L}$&\multirow{2}{*}{Linear regression}&\multirow{2}{*}{$\frac{\lambda_{\max}^2}{(\lambda_{\max} + 1)\lambda_{\min}^2} + \frac{1}{\lambda_{\min}}$}\\
         $\lambda_{\min}$ & Minimum eigenvalue of $X^TX$, 
         for any $S = (X, Y)\in \mathcal{L}$&&\\
        \bottomrule
    \end{tabular}
    \caption{Notations and $L_2$ sensitivity of different imputation models}
    \label{tab:imputation-sensitivity}
\end{table*}


\begin{proof}
    It is obvious that the $L_\infty$ sensitivity of model-based imputation is upper bounded by the number of entries with missing values in any of the dataset $S\in \cL$. The $L_2$ sensitivity is upper bounded by the sensitivity of the model $f$ over the set $\cL$. This concludes the proof.

\textbf{$L_2$ Sensitivity of mean over $\cL$} For neighboring datasets $S_1, S_2\in \cL$, write $S_1 = S\cup \{z_1\}$ and $S_2 = S \cup \{z_2\}$. Without loss of generality, denote $S = \{s_{i}\}_{i = 1}^{n-1}$. For each data point $s_{i}$, we denote its $j^{\text{th}}$ feature as $s_{ij}$. Also, we denote the number of available data points for the $j^{\text{th}}$ feature as $n_j$. In the following, we derive an upper bound on the $L_2$ sensitivity of mean imputation $\Delta_2(\cL, \pi_{\mathrm{mean}})$,

\begin{equation}\label{eq:mean-impu-l2-sensitivity}
\begin{aligned}
        \Delta_2(\cL, \pi_{\mathrm{mean}})^2 &\leq \sum_{j = 1}^d \br{\frac{1}{n_j}\br{\sum_{i = 1}^{n-1}s_{ij} +z_{1j}} - \frac{1}{n_j}\br{\sum_{i = 1}^{n-1}s_{ij} + z_{2j}} }^2\\
        &\leq \sum_{j = 1}^d \br{\frac{1}{n_j}\br{z_{1j}-z_{2j}}}^2\\
        &\leq \frac{1}{n-p}\sum_{j = 1}^d\br{z_{1j}-z_{2j}}^2 = \frac{4}{(n-p)^2}
\end{aligned}
\end{equation}
where the last inequality follows by the fact that the instance space is bounded with diameter $1$. Taking square root from both side of~\Cref{eq:mean-impu-l2-sensitivity} completes the proof. 

\textbf{$L_2$ Sensitivity of median and $m$-trimmed mean over $\cL$} The $L_2$-sensitivity of median and $m$-trimmed mean over $\cL$ follows directly by the definition. 

\textbf{$L_2$ Sensitivity of linear regression $\pi_{\mathrm{LR}}$ over $\cL$} For linear regression, we present the $L_2$ sensitivity by considering two datasets $X, X'\in \cL$ where $X' = X\cup\{z\}$. For any $j\in [d]$, imputation with linear regression of the feature $X_j$ looks at the a submatrix of $X$ that does not include the $j^{th}$ feature. Denote the submatrix that is used for linear regression as $X_r$ ($X_r$ includes a subset of features, specified by the index $r$, of the original dataset $X$). Let $\Sigma = \frac{1}{n}X^\top X$, and let $\Sigma_r = \frac{1}{n}X_r^\top X_r$ be a principal matrix of $\Sigma$ with submatrix $X_r$. We calculate the $L_2$ sensitivity of linear regression on imputing the point $x_j$ below, 

\begin{equation}
\begin{aligned}
        \Delta_2(\cL, \pi_{\mathrm{LR}}) &= \norm{x_r^\top (X_r^\top X_r)^{-1}X_r^\top X_j - x_r^\top (X_r^\top X_r + zz^\top )^{-1}(X_r^\top X_j + z z_j)} \\
        &\leq \norm{x_r}\norm{(X_r^\top X_r)^{-1}X_r^\top X_j -  (X_r^\top X_r + zz^\top )^{-1}(X_r^\top X_j + z z_j)}\\
        &\leq \norm{\br{(X_r^\top X_r)^{-1} - (X_r^\top X_r + zz^\top)}^{-1}X_r^\top X_j - (X_r^\top X_r + zz^\top)^{-1}zz_j} \\
        &\leq \norm{\br{(X_r^\top X_r)^{-1} - (X_r^\top X_r + zz^\top)}^{-1}X_r^\top X_j}+\norm{ (X_r^\top X_r + zz^\top)^{-1}zz_j} \\
        &\overset{(a)}{\leq} \norm{\frac{(X_r^\top X_r)^{-1}zz^\top (X_r^\top X_r)^{-1}}{\br{1+z^\top (X_r^\top X_r)^{-1}z}}X_r^\top X_j } + \frac{1}{\lambda_{\min}}\\
        & \overset{(b)}{\leq} \frac{\lambda_{\max}^2}{(\lambda_{\max} + 1)\lambda_{\min}^2} + \frac{1}{\lambda_{\min}}
\end{aligned}
\end{equation}
where step (a) follows from Sherman-Morrison-Formula and that the eigenvalue of a principal submatrix is always larger than the smallest eigenvalue of the original matrix following Theorem 4.3.15 in \cite{horn_johnson_1985} and then taking supremum over all $X\in \cL$. Similarly, step (b) follows from the fact that the eigenvalue of a principal submatrix is always smaller than the largest eigenvalue of the original matrix following Theorem 4.3.15 in \cite{horn_johnson_1985} and then taking supremum over all $X\in \cL$. 
\end{proof}

\subsection{Sensitivity analysis of PCA}
\PCASensitivity*
\begin{proof}
\textbf{PCA for dimension reduction: }For any two neighboring datasets $S, \tilde{S}$, without loss of generality, we denote $S = \{x_i\}_{i = 1}^n$, and  $\tilde{S} = \{x_i\}_{i = 1}^{n-1}\cup \{x_n'\}$. Let $\Sigma, \tilde{\Sigma}$ denote their empirical covariance matrices and $\hat{\mu}, \hat{\mu}'$ denote the empirical mean. Let $A_k$ and $\tilde{A}_k$ be the matrix consisting of first $k$ eigenvectors of $\Sigma$ and $\tilde{\Sigma}$ respectively. 

First, for any $x\in \cX$, we upper bound the $\norm{A_k^\top x - \tilde{A}_k^\top x}_F$ by a linear function of $\norm{\tilde{\Sigma} - \Sigma}_F$ using properties of the dataset collection $\cL$,
\begin{equation}\label{eq:pca-divergence-bound-l2-sensitivity1}
    \norm{A_k^\top x - \tilde{A}_k^\top x}_F \overset{(a)}{\leq} \norm{A_k - \tilde{A}_k^\top }_F
    \overset{(b)}{=} \text{Tr}\br{2(I- A_k^\top \tilde{A}_k)}
\end{equation}
where step (a) follows from Cauchy-Schwarz inequality and bounded instance space and  step (b) follows from the definition of Frobenius norm $\norm{A}_F = \text{Tr}(A^\top A)$ and the orthonormality of $A_k, \tilde{A}_k$. 

Let $\sigma_i$ denote the $i^{\text{th}}$ singular value of $A_k^\top \tilde{A}_k$, and let $\theta_i$ be the $i^{\text{th}}$ canonical angle of $A_k^\top \tilde{A}_k$,~\ie~ $\cos \theta_i = \sigma_i$. We can write $\text{Tr}\br{2(I- A_k^\top \tilde{A}_k)}$ as follows, 
\begin{equation}\label{eq:pca-divergence-bound-l2-sensitivity}
\begin{aligned}
    \text{Tr}\br{2(I- A_k^\top \tilde{A}_k)}&\overset{(a)}{=} 2\br{k - \sum_{i = 1}^k \sigma_i^2} \overset{(b)}{=} 2\br{k - \sum_{i = 1}^k (\cos\theta_i)^2}\\
    &\overset{(c)}{=}2\br{k - k + \sum_{i = 1}^k (\sin\theta_i)^2} \overset{(d)}{\leq} \frac{4\norm{\tilde{\Sigma}-\Sigma}_F}{\min\{\delta_{\min}^k(\cL), \delta_{\min}^1(\cL)\}}. 
\end{aligned}
\end{equation}
where step (a) and (b) are due to the definition of singular value and canonical angle, step (c) follows from the fact that $(\sin \theta)^2 + (\cos\theta)^2 = 1$ for any $\theta$, and step (d) follows by Davis-Kahan Theorem (\Cref{lem:Davis-Kahan-Theorem}) stated below. 
\begin{lemL}[Davis-Kahan Theorem \citep{yu14daviskahan}]\label{lem:Davis-Kahan-Theorem}
Let $\Sigma, \hat{\Sigma}\in \reals^{d\times d}$ be symmetric and positive definite, with eigenvalues $\lambda_1, \geq \ldots \geq \lambda_d$ and $\hat{\lambda}_1\geq \ldots\geq \hat{\lambda}_d$ respectively. For $k\leq d$, let $V$ and $\hat{V}$ be the dataset whose matrices consisting of the first $k$ eigenvectors of $\Sigma$ and $\hat{\Sigma}$ respectively. Then, 
\[\norm{\sin\Theta(V, \hat{V})}_F\leq \frac{2\min \norm{\Sigma - \hat{\Sigma}}_F}{\min\{\lambda_k - \lambda_{k+1}, \lambda_1-\lambda_2\}}, \]
where $\Theta(V, \hat{V})$ denotes the $k\times k$ diagonal matrix of the principal angles between two subspaces $V$ and $\hat{V}$.  
\end{lemL}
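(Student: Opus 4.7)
The plan is to prove Davis--Kahan by reducing the $\sin\Theta$ matrix to the cross-projection block $V_\perp^\top \hat V$, deriving a Sylvester equation for it from the spectral identities for $\Sigma$ and $\hat\Sigma$, and then inverting that equation using the eigengap. I would first recall the standard identity $\|\sin\Theta(V,\hat V)\|_F = \|V_\perp^\top \hat V\|_F$: if the singular values of $V^\top \hat V$ are $\cos\theta_i$, an orthogonal-complement computation (via the CS decomposition of the orthogonal block matrix $[V\ V_\perp]^\top[\hat V\ \hat V_\perp]$) shows that the singular values of $V_\perp^\top \hat V$ are exactly $\sin\theta_i$. Squaring and summing gives the identity, reducing the theorem to a bound on the Frobenius norm of $V_\perp^\top \hat V$.

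Next, let $\Lambda_\perp = \mathrm{diag}(\lambda_{k+1},\ldots,\lambda_d)$ and $\hat\Lambda = \mathrm{diag}(\hat\lambda_1,\ldots,\hat\lambda_k)$. The eigenrelations $\Sigma V_\perp = V_\perp \Lambda_\perp$ and $\hat\Sigma \hat V = \hat V \hat\Lambda$ yield, upon subtraction,
\begin{equation*}
V_\perp^\top (\hat\Sigma - \Sigma)\,\hat V \;=\; V_\perp^\top \hat V\,\hat\Lambda \;-\; \Lambda_\perp\,V_\perp^\top \hat V,
\end{equation*}
a Sylvester equation in the unknown $X := V_\perp^\top \hat V$. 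Writing this coordinatewise in the two eigenbases, the $(i,j)$-entry of the right-hand side equals $X_{ij}(\hat\lambda_j - \lambda_{k+i})$. If $\delta := \min_{i\le d-k,\,j\le k}|\hat\lambda_j-\lambda_{k+i}| > 0$, dividing entrywise and summing squares gives $\|X\|_F \leq \|V_\perp^\top(\hat\Sigma - \Sigma)\hat V\|_F/\delta \leq \|\hat\Sigma-\Sigma\|_F/\delta$, where the trailing inequality uses that left/right multiplication by matrices with orthonormal columns does not increase the Frobenius norm.

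The final step is to replace the mixed gap $\delta$, which involves eigenvalues of both $\Sigma$ and $\hat\Sigma$, by the population-only gap appearing in the statement. I would use the Yu--Wang--Samworth case split: if $\|\hat\Sigma - \Sigma\|_{\mathrm{op}} \le (\lambda_k-\lambda_{k+1})/2$, Weyl's inequality yields $\hat\lambda_j \ge \lambda_k - (\lambda_k-\lambda_{k+1})/2$ for every $j\le k$, so $\delta \ge (\lambda_k-\lambda_{k+1})/2$ and the Sylvester bound immediately gives the claim with the extra factor $2$; in the opposite regime, $\|\hat\Sigma-\Sigma\|_F \ge \|\hat\Sigma-\Sigma\|_{\mathrm{op}} > (\lambda_k-\lambda_{k+1})/2$, and the inequality follows from the trivial bound $\|\sin\Theta\|_F\le\sqrt k$ since the right-hand side is already at least $\sqrt k$ in this regime. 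Running the same argument with the roles of $(V,\hat V)$ replaced by $(\hat V_\perp, V_\perp)$, which is controlled by the top-end eigengap $\lambda_1-\lambda_2$, produces the second branch of the $\min$ in the denominator; taking whichever side is tighter yields the stated bound.

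The main obstacle is the last step. Davis--Kahan starts from a gap purely on $\Sigma$, whereas the Sylvester-equation estimate naturally produces a gap between eigenvalues of $\hat\Sigma$ (those in $\hat\Lambda$) and eigenvalues of $\Sigma$ (those in $\Lambda_\perp$). A direct Weyl estimate would leave a sample-dependent $\|\hat\Sigma-\Sigma\|_{\mathrm{op}}$ in the denominator of the final bound; absorbing this into a clean population-only gap is precisely what forces the case split and what is responsible for the factor $2$ in the numerator.
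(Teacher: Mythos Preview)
The paper does not prove this lemma at all: it is one of the letter-labeled auxiliary results quoted verbatim from the literature (here, Yu--Wang--Samworth), and the paper simply invokes it inside the proof of the PCA sensitivity bound. So there is no ``paper's own proof'' to compare against; your sketch is essentially a reconstruction of the argument in the cited reference.

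That said, your reconstruction has one genuine gap. In the large-perturbation case of the split you write that ``the right-hand side is already at least $\sqrt{k}$ in this regime,'' but all you have established there is $\|\hat\Sigma-\Sigma\|_F \ge \|\hat\Sigma-\Sigma\|_{\mathrm{op}} > (\lambda_k-\lambda_{k+1})/2$, which only yields that the right-hand side exceeds $1$, not $\sqrt{k}$. The trivial bound $\|\sin\Theta\|_F \le \sqrt{k}$ therefore does not close the argument when $k>1$. In Yu--Wang--Samworth the case split is actually carried out for the numerator $\min(\sqrt{k}\,\|\hat\Sigma-\Sigma\|_{\mathrm{op}},\,\|\hat\Sigma-\Sigma\|_F)$: in the large-perturbation regime the $\sqrt{k}\,\|\cdot\|_{\mathrm{op}}$ branch makes the right-hand side exceed $\sqrt{k}$, and the pure Frobenius bound then follows \emph{a posteriori} because that minimum is always at most $\|\hat\Sigma-\Sigma\|_F$. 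Your attempt to get the $\lambda_1-\lambda_2$ branch by swapping $(V,\hat V)$ with $(\hat V_\perp,V_\perp)$ is also not how the second gap arises; for the top-$k$ eigenspace there is only one relevant gap, and the two-sided $\min$ in the denominator appears in the reference when one considers an interior block of eigenvectors rather than the leading block.
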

    
    It remains to upper bound the Frobenius norm of $\norm{\Sigma - \tilde{\Sigma}}_F$. By the definition of the empirical covariance matrix, we decompose $\tilde{\Sigma}$ as 
    \begin{equation}\label{eq:pca-sensitivity-0}
        \begin{aligned}
            \tilde{\Sigma} =& \frac{1}{n-1}\sum_{i = 1}^{n-1}(x_i - \hat{\mu}')(x_i - \hat{\mu}')^\top + \frac{1}{n-1}(x_n' - \hat{\mu}')(x_n' - \hat{\mu}')^\top\\
            \overset{(a)}{=}& \underbrace{\frac{1}{n-1}\sum_{i = 1}^{n-1}(x_i - \mu - \frac{1}{n}\br{x_n' - x_n})(x_i - \mu - \frac{1}{n}\br{x_n' - x_n})^\top}_{\text{part I}} \\
            &+ \underbrace{\frac{1}{n-1}(x_n' - \mu - \frac{1}{n}\br{x_n' - x_n})(x_n' - \mu - \frac{1}{n}\br{x_n' - x_n})^\top}_{\text{part II}}
        \end{aligned}
    \end{equation}
    where $(a)$ follows from $\mu' = \mu + \frac{1}{n}\br{x_n' - x_n}$. 
    
    Part (I) can be written as \begin{equation}
        \begin{aligned}\label{eq:pca-sensitivity-part-i}
            \mathrm{Part~I }=&\frac{1}{n-1}\sum_{i = 1}^{n-1}(x_i - \mu - \frac{1}{n}\br{x_n' - x_n})(x_i - \mu - \frac{1}{n}\br{x_n' - x_n})^\top \\
            =&\frac{1}{n-1}\sum_{i = 1}^{n-1}(x_i - \mu) (x_i - \mu)^\top - \frac{1}{n(n-1)}\sum_{i = 1}^{n-1}\br{(x_i - \mu)(x_n' - x_n)^\top + (x_n' - x_n)(x_i - \mu)^\top}\\
            &\qquad+ \frac{1}{n^2}(x_n'-x_n)(x_n'-x_n)^\top \\
            =& \Sigma - \frac{1}{n-1}(x_n-\mu)(x_n-\mu)^\top  + \frac{1}{n(n-1)}\br{(x_n - \mu)(x_n' - x_n)^\top + (x_n' - x_n)(x_n - \mu)^\top} \\
            &\qquad+ \frac{1}{n^2}(x_n' - x_n)(x_n' - x_n)^\top 
        \end{aligned}
    \end{equation}
    Similarly, we can write part (II) as 
    \begin{equation}\label{eq:pca-sensitivity-part-ii}
        \begin{aligned}
            \mathrm{Part~II} =&\frac{1}{n-1}(x_n' - \mu - \frac{1}{n}\br{x_n' - x_n})(x_n' - \mu - \frac{1}{n}\br{x_n' - x_n})^\top\\
            =& \frac{(x_n' - \mu)(x_n' - \mu)^\top}{n-1}  - \frac{(x_n' - \mu) (x_n' - x_n)^\top }{n(n-1)}- \frac{(x_n' - x_n) (x_n' - \mu)^\top}{n(n-1)}\\
            &\qquad+ \frac{(x_n' - x_n)(x_n' - x_n)^\top }{(n-1)n^2}
        \end{aligned}
    \end{equation}
    Substituting~\Cref{eq:pca-sensitivity-part-i} and~\Cref{eq:pca-sensitivity-part-ii} into~\Cref{eq:pca-sensitivity-0}, and by the fact that $\norm{xx^T}_F\leq 1$ for $\norm{x}_2 \leq 1$, we get
    \begin{equation}\label{eq:pca-bounded-difference-in-covariance}
       \begin{aligned}
            \norm{\tilde{\Sigma} - \Sigma}_F \leq \frac{2(3n+2)}{n(n-1)}.
       \end{aligned}
    \end{equation}

    Finally, substituting~\Cref{eq:pca-bounded-difference-in-covariance} into~\Cref{eq:pca-divergence-bound-l2-sensitivity}, and then~\Cref{eq:pca-divergence-bound-l2-sensitivity} into~\Cref{eq:pca-divergence-bound-l2-sensitivity1} completes the proof. 
    
    \textbf{PCA for rank reduction: } For any two neighboring datasets $S, S'\in \cL$, we define the notations of $\tilde{\Sigma}, \Sigma, A_k, \tilde{A}_k$ similarly as in the proof for $\pi_{\mathrm{PCA-dim}}$. We state~\Cref{lem:pca-convergence-projection-space}, which is used to upper bound $\norm{A_kA_k^\top x - \tilde{A}_k\tilde{A}_k^\top x}_F$. 
    \begin{lemL}[Simplified version of Theorem 3 in \cite{ZwaldB05}]\label{lem:pca-convergence-projection-space}
    Let $A$ be a symmetric positive definite matrix with eigenvalues $\lambda_1 > \lambda_2 > \ldots > \lambda_d$. Let $B$ be a symmetric positive matrix. For an integer $k > 0$, let $A_k$ be the matrix consisting the first $k$ eigenvectors of $A$ and $\tilde{A}_k$ be the matrix consisting of the first $k$ eigenvectors of $A + B$. Then, $A_k$ and $\tilde{A}_k$ satisfy that \[\norm{A_k A_k^\top - \tilde{A}_k\tilde{A}_k^\top}\leq\frac{2\norm{B}}{\lambda_k - \lambda_{k+1}}. \]
    \end{lemL}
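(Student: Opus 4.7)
I plan to prove this via the Sylvester-equation approach underlying the Davis--Kahan $\sin\Theta$ theorem. Let $U := A_k$ and $\tilde{U} := \tilde{A}_k$ be the given orthonormal bases, and let $U_\perp \in \mathbb{R}^{d \times (d-k)}$ be an orthonormal basis for the bottom-$(d-k)$ eigenspace of $A$, with corresponding diagonal eigenvalue matrices $\Lambda \in \mathbb{R}^{k \times k}$ and $\Lambda_\perp \in \mathbb{R}^{(d-k)\times(d-k)}$, and let $\tilde{\Lambda}$ denote the diagonal of top-$k$ eigenvalues of $A + B$. The key reduction is the standard identity
\[
\|U U^\top - \tilde{U}\tilde{U}^\top\|_{op} \;=\; \|U_\perp^\top \tilde{U}\|_{op},
\]
equal to the largest principal sine between the two top-$k$ subspaces; this lets me control the projector difference by bounding the off-diagonal block $U_\perp^\top \tilde{U}$.

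First I would dispose of the trivial regime. Set $\delta = \lambda_k - \lambda_{k+1}$. If $\|B\| \geq \delta/2$, the lemma is immediate because $2\|B\|/\delta \geq 1 \geq \|U U^\top - \tilde{U}\tilde{U}^\top\|_{op}$ for any pair of orthogonal projectors of equal rank. So assume $\|B\| < \delta/2$. Weyl's inequality applied to $A + B$ then yields that every top-$k$ eigenvalue of $A + B$ is at least $\lambda_k - \|B\|$.

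The main step is to derive and solve a Sylvester equation. Left-multiplying the eigenvalue identity $(A + B)\tilde{U} = \tilde{U}\tilde{\Lambda}$ by $U_\perp^\top$ and using $U_\perp^\top A = \Lambda_\perp U_\perp^\top$ gives
\[
\Lambda_\perp X - X \tilde{\Lambda} \;=\; -\,U_\perp^\top B\,\tilde{U}, \qquad X := U_\perp^\top \tilde{U}.
\]
Since $\sigma(\Lambda_\perp) \subseteq (-\infty, \lambda_{k+1}]$ and $\sigma(\tilde{\Lambda}) \subseteq [\lambda_k - \|B\|, \infty)$ are separated by a gap of at least $\delta - \|B\| > 0$, the standard operator-norm estimate for Sylvester equations with self-adjoint coefficients and separated spectra gives
\[
\|X\|_{op} \;\leq\; \frac{\|U_\perp^\top B \tilde{U}\|_{op}}{\delta - \|B\|} \;\leq\; \frac{\|B\|}{\delta - \|B\|} \;\leq\; \frac{2\|B\|}{\delta},
\]
where the last step uses $\|B\| < \delta/2$. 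Combining with the initial reduction closes the argument.

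The main obstacle is the invoked Sylvester operator-norm bound with the sharp denominator $\delta - \|B\|$. If a self-contained derivation is needed rather than quoting a Bhatia-type result, I would pick any constant $c \in (\lambda_{k+1}, \lambda_k - \|B\|)$ so that $\Lambda_\perp - cI$ is negative definite and $\tilde{\Lambda} - cI$ is positive definite, and use the exponential integral representation
\[
X \;=\; -\int_0^\infty e^{t(\Lambda_\perp - cI)}\,\bigl(-U_\perp^\top B \tilde{U}\bigr)\,e^{-t(\tilde{\Lambda} - cI)}\,dt.
\]
Bounding each exponential in operator norm by its largest/smallest-eigenvalue rate gives an integrand of operator norm at most $\|B\|\,e^{-t(\delta - \|B\|)}$, and integrating over $t \in [0, \infty)$ recovers the factor $1/(\delta - \|B\|)$ as required.
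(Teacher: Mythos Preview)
Your argument is correct. The paper does not actually prove this lemma: it is stated as a ``lemL'' (the paper's convention for results quoted from the literature) and attributed directly to Theorem~3 of Zwald and Blanchard (2005), with no proof given. So there is nothing in the paper to compare against beyond the citation itself.

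That said, your route is the standard Davis--Kahan $\sin\Theta$ derivation specialized to operator norm, and it is clean: the reduction $\|UU^\top-\tilde U\tilde U^\top\|_{op}=\|U_\perp^\top\tilde U\|_{op}$, the trivial case $\|B\|\ge\delta/2$, Weyl to locate $\sigma(\tilde\Lambda)$, and the Sylvester bound all check out, and the final inequality $\|B\|/(\delta-\|B\|)\le 2\|B\|/\delta$ is exactly equivalent to $\|B\|\le\delta/2$. Two minor remarks. First, you never use the hypothesis that $B$ is positive (nor that \emph{all} eigenvalues of $A$ are distinct); only symmetry of $B$ and the single gap $\lambda_k>\lambda_{k+1}$ enter, so your proof is slightly more general than the stated lemma. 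Second, the paper applies the lemma with Frobenius norms on the right-hand side; your operator-norm version is of course stronger on the left and compatible on the right since $\|\cdot\|_{op}\le\|\cdot\|_F$, so no adjustment is needed for how the paper invokes it.
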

    Applying~\Cref{lem:pca-convergence-projection-space} with $A+B = \tilde{\Sigma}$ and $A = \Sigma$, we can show an upper bound on the term $\norm{A_kA_k^\top x - \tilde{A}_k\tilde{A}_k^\top x}_F$ for any $x\in \cX$.
    \begin{equation}
        \norm{A_kA_k^\top x - \tilde{A}_k\tilde{A}_k^\top x}_F \leq \frac{\norm{\tilde{\Sigma}-\Sigma}_F}{\lambda_k(S) - \lambda_{k+1}(S)} \leq \frac{4(3n+2)}{n(n-1)\delta_{\min}(S)}
    \end{equation}
    where the last inequality follows by~\Cref{eq:pca-bounded-difference-in-covariance} and the definition of $\delta_{\min}(S) = \min\{\delta_{\min}^k(S), \delta_{\min}^1(S)\}$. Taking the supremum over all dataset $S\in \cL$ concludes the proof. 
\end{proof}

\subsection{Sensitivity Analysis of Scaling}

\StandardScalingSensitivity*

\begin{proof}
The $L_\infty$ sensitivity for both scaling methods is trivially upper bounded by the size of the datasets in $\cL$. In the following, we prove the $L_2$ sensitivity for standard scaling and min max scaling respectively. 

\paragraph{Proof of $L_2$ sensitivity for standard scaling}
    For any two neighboring datasets $S, S'$, let $\mu, \mu'$ denote the mean of $S, S'$ respectively and $\sigma, \sigma'$ denote the standard deviation of $S, S'$ respectively. Then, the $L_2$ sensitivity of standard scaling is 
    \begin{equation}\label{eq:standard-scaling-sensitivity1}
    \begin{aligned}
        \Delta_2 &= \max_{S, S', x}\norm{\frac{x-\mu}{\sigma} - \frac{x - \mu'}{\sigma'}}_2\\
        &= \max_{S, S', x} \norm{\frac{\sigma'x - \sigma' \mu - \sigma x + \sigma \mu'}{\sigma \sigma'}}\\
        &= \max_{S, S', x}\norm{\frac{\br{\sigma' - \sigma}(x - \mu)}{\sigma\sigma'}}_2 + \max_{S, S'}\norm{\frac{\mu - \mu'}{\sigma'}}\\
        &\overset{(a)}{\leq} \frac{2\max_{S, S'}\norm{\sigma' - \sigma}}{\sigma_{\min}^2} + \frac{\max_{S, S'}\norm{\mu - \mu'}}{\sigma_{\min}}
    \end{aligned}
    \end{equation}
    where step (a) follows from $\norm{x - \mu}_2\leq 2$ for any $x$ and $ S$ and the definition of $\sigma_{\min}$. 

    From~\citet{liu2016globalsensitivity}, the global sensitivity of sample mean and variance are $\frac{2}{n}$. We then show that for a dataset collection $\cL$, the global sensitivity conditional on $\cL$ is upper bounded by $\nicefrac{1}{\sigma_{\min}n }$. 
    
    For any $S, S'$ with sample variances $\sigma^2, \br{\sigma'}^2$,  
    we have \begin{equation}\label{eq:std-sample-sensitivity}
        \abs{\sigma - \sigma'} \leq \frac{\abs{\sigma^2-\br{\sigma'}^2}}{2\sigma_{\min}} \leq \frac{1}{\sigma_{\min}n},
    \end{equation}
    where the first inequality follows by rearranging $\abs{\sigma^2 - \br{\sigma'}^2 }= \abs{\br{\sigma - \sigma'}\br{\sigma + \sigma'}} \geq 2\sigma_{\min} \abs{\sigma - \sigma'}$, and the second inequality by substituting the global sensitivity of sample covariance from~\citet{liu2016globalsensitivity}. 

    Substituting the sensitivity of sample mean $2/n$ and sample standard deviation $\frac{1}{\sigma_{\min}n}$ into~\Cref{eq:standard-scaling-sensitivity1}, we conclude the proof. 
\end{proof}

\clearpage

\section{Proofs for overall privacy guarantees (\Cref{sec:overall-privacy})}
\label{app:sec4-proofs}

In this section, we provide the proofs of the overall privacy guarantees for specific pre-processed DP pipelines, as stated in~\Cref{tab:comparison} in~\Cref{sec:overall-privacy}. For clarity, we restate the full version of~\Cref{thm:overall-privacy-table2} and specify the privacy guarantee for each category of privacy mechanisms (each row in~\Cref{tab:comparison}) in~\Cref{thm:overall-privacy}. Then, we present the proof for each category separately. As the proof of DP-GD is the same as that of Gaussian mechanism, we omit the proof for DP-GD for simplicity. 

\begin{thm}[Full version of~\Cref{thm:overall-privacy-table2}]
    \label{thm:overall-privacy}
    Let $p$ denote the $L_\infty$ sensitivity of deduplication, quantization and mean imputation. Let $n \geq \min\bc{101, p}$ be the size of any dataset in the dataset collection $\cL$. Let $\ell$ be a $1$-Lipschitz and $1$-smooth loss function. For an output function $f$ and a score function $Q$, assume their Lipschitz parameter and global sensitivity are both $1$. Then, 
    \begin{itemize}
        \item[(i)] Gaussian mechanism with output function $f$ satisfy $\br{\alpha, 
        {1.05\alpha\varepsilon^2}(1+p^2)}$-RDP, $\br{\alpha, {1.05\alpha\varepsilon^2}(1+\eta^2p^2)}$-RDP, $\br{\alpha, {1.05\alpha\varepsilon^2}\br{1+\frac{4p^2}{(n-p)^2}}}$-RDP, $\br{\alpha, {1.05\alpha\varepsilon^2}\br{1+\frac{12.2^2}{\delta_{\min}^2}}}$-RDP and $1.05\alpha\varepsilon^2\br{1 + \frac{4}{\sigma_{\min}^3}}$-RDP when coupled with deduplication, quantization, mean imputation, PCA for rank reduction and standard scaling respectively. 
        \item[(ii)] Exponential mechanism with score function $Q$ and Laplace mechanism with output function $f$ satisfy $(\alpha, \varepsilon(1+p))$-RDP, $(\alpha, \varepsilon(1+\eta p))$-RDP, $\br{\alpha, \varepsilon\br{1+\frac{2p}{n-p}}}$-RDP, $\br{\alpha, \varepsilon\br{1+\frac{12.2}{\delta{\min}}}}$-RDP and $4.2\alpha\varepsilon^2\br{1 + \frac{1}{\sigma_{\min}^3}}$-RDP when coupled with deduplication, quantization, mean imputation, PCA and standard scaling for rank reduction respectively. 
        \item[(iii)] DP-SGD with subsampling with loss function $\ell$ satisfies $\br{\alpha, {1.05\alpha\varepsilon^2}\br{2\alpha + \frac{12.2^2}{\delta_{\min}^2}}}$-RDP and $2.1\alpha\varepsilon^2\br{\alpha + \frac{8}{\sigma_{\min}^6}}$-RDP when coupled with PCA for rank reduction and standard scaling respectively. 
        \item[(iv)] DP-SGD with iteration with loss function $\ell$ coupled with deduplication, quantization and mean imputation satisfy $\br{\alpha, {1.1\alpha\varepsilon^2}\br{1 + \frac{p^2n \log(n-p)}{(n-p)\log n}}}$-RDP, $\br{\alpha, {1.1\alpha\varepsilon^2}\br{1 + \frac{\eta^2p^2n \log(n-p)}{(n-p)\log n}}}$-RDP, and\\ $\br{\alpha, {1.1\alpha\varepsilon^2}\br{1 + \frac{4p^2n \log(n-p)}{(n-p)^3\log n}}}$-RDP respectively.
    \end{itemize}

\end{thm}
\begin{proof}[Proof of~\Cref{thm:overall-privacy} (i)]
    We apply~\Cref{thm:general-thm-training-only} with $c_1 = c_2 = 2$, 
    \begin{equation}\label{eq:overall-priv-gaussian}
    \begin{aligned} 
        \widehat{\varepsilon}(\alpha)&\leq \max\left\{\frac{2\alpha - 1}{2(\alpha - 1)}\tilde{\varepsilon}(2\alpha ,\Delta_2\Delta_2) + \varepsilon(2\alpha - 1), \frac{2\alpha - 1}{2(\alpha - 1)}\varepsilon(2\alpha ) + \tilde{\varepsilon}(2\alpha - 1, \Delta_2\Delta_\infty)\right\}\\
        &\overset{(a)}{\leq} \max\left\{\frac{2\alpha - 1}{2(\alpha - 1)}\tilde{\varepsilon}(2\alpha ,\Delta_2\Delta_\infty) + \frac{2\alpha - 1}{2(\alpha - 1)}\varepsilon(2\alpha), \frac{2\alpha - 1}{2(\alpha - 1)}\varepsilon(2\alpha ) + \frac{2\alpha - 1}{2(\alpha - 1)}\tilde{\varepsilon}(2\alpha, \Delta_2\Delta_\infty)\right\}\\
        &= \frac{2\alpha - 1}{2(\alpha - 1)}\br{\tilde{\varepsilon}(2\alpha ,\Delta_2\Delta_2) + \varepsilon(2\alpha)}\\
        &\overset{(b)}{\leq}1.05\br{\tilde{\varepsilon}(2\alpha ,\Delta_2\Delta_\infty) + \varepsilon(2\alpha)}. 
    \end{aligned}
    \end{equation}
    where step (a) follows from the monotonicity of Renyi Divergence (\Cref{lem:monotonicity-RDP}) and step (b) follows from $\frac{2\alpha - 1}{2(\alpha - 1)} \leq1.05$ for $\alpha > 11$. 

    By substituting the expression of RDP and SRDP parameter from~\Cref{tab:sdp-parameters} for Gaussian mechanism, the $L_2$ and $L_\infty$ sensitivity ($\Delta_2$ and $\Delta_\infty$) for deduplication, quantization, PCA, standard scaling and mean imputation from \Cref{prop:deduplication-sensitivities,prop:imputation-mean-sensitivity,prop:pca-sensitivity,prop:sensitivity-scaling}, and the Lipschitz parameter and global sensitivity of the output function $f$ into~\Cref{eq:overall-priv-gaussian}, we complete the proof of the overall privacy guarantees for Gaussian mechanism.

  
\end{proof}

\begin{proof}[Proof of~\Cref{thm:overall-privacy} (ii)]
We apply~\Cref{thm:general-thm-training-only} with $c_1 = c_2= 1$, 
    \begin{equation}\label{eq:overall-priv-laplace}
    \begin{aligned} 
        \widehat{\varepsilon}(\alpha)&\leq \max\left\{\frac{\alpha - 1}{\alpha - 1}\tilde{\varepsilon}(\alpha ,\Delta_2\Delta_2) + \varepsilon(\infty), \frac{\alpha - 1}{\alpha - 1}\varepsilon(\alpha ) + \tilde{\varepsilon}(\infty, \Delta_2\Delta_\infty)\right\}\\
        &= \max\{\tilde{\varepsilon}(\alpha, \Delta_2\Delta_\infty) + \varepsilon(\infty), \varepsilon(\alpha) + \tilde{\varepsilon}(\infty, \Delta_2\Delta_\infty)\}
    \end{aligned}
    \end{equation}
    We then derive an upper bound on $\widehat{\varepsilon}(\alpha)$ by monotonicity of RDP, 
    \begin{equation}\label{eq:overall-priv-laplace2}
        \widehat{\varepsilon}(\alpha)\overset{(a)}{\leq} \widehat{\varepsilon}(\infty)\overset{(b)}{\leq} \tilde{\varepsilon}(\infty, \Delta_2\Delta_\infty) + \varepsilon(\infty)
    \end{equation}
    where step (a) follows from~\Cref{lem:monotonicity-RDP} and step (b) follows by setting $\alpha = \infty$ in~\Cref{eq:overall-priv-laplace}. 

    By substituting the expression of RDP and SRDP parameter from~\Cref{tab:sdp-parameters} for Laplace mechanism, the $L_2$ and $L_\infty$ sensitivity ($\Delta_2$ and $\Delta_\infty$) for deduplication, quantization, imputation and PCA from \Cref{prop:deduplication-sensitivities},~\Cref{prop:imputation-mean-sensitivity} and~\Cref{prop:pca-sensitivity}, and the Lipschitz parameter and global sensitivity of the output function $f$ into~\Cref{eq:overall-priv-gaussian}, we complete the proof of the overall privacy guarantees for Laplace mechanism.

    Similarly, for exponential mechanism, we substitute the expression of RDP and SRDP parameter from~\Cref{tab:sdp-parameters} for exponential mechanism, the $L_2$ and $L_\infty$ sensitivity ($\Delta_2$ and $\Delta_\infty$) for deduplication, quantization, PCA, standard scaling and mean imputation from \Cref{prop:deduplication-sensitivities,prop:imputation-mean-sensitivity,prop:pca-sensitivity,prop:sensitivity-scaling}, and the Lipschitzness and the global sensitivity of the score function $Q$, into~\Cref{eq:overall-priv-gaussian}. This completes the proof of the overall privacy guarantees for exponential mechanism. 
  
\end{proof}

In the following, we provide the proof of combining DP-SGD with subsampling with PCA for rank reduction because datasets preprocessed by PCA for rank reduction satisfies~\Cref{assump:dp-sgd-subsampling} automatically. 

\begin{proof}[Proof of~\Cref{thm:overall-privacy} (iii)]
    We first note that for PCA, for any two neighboring datasets, $S_1, S_2$, by the definition of sensitivity of PCA, $\pi_{\mathrm{PCA-rank}}(S_1),\pi_{\mathrm{PCA-rank}}(S_2) $ and $\pi_{\mathrm{PCA-dim}}(S_1),\pi_{\mathrm{PCA-dim}}(S_2)$ has inverse point-wise divergence $1$ because $\Delta_\infty = n$. 

    Following a similar argument as in the proof Gaussian mechanism (\Cref{thm:overall-privacy} (i)), we set the parameters in~\Cref{thm:general-thm-training-only} $c_1 = c_2 = 2$ and obtain~\Cref{eq:overall-priv-gaussian}. Then, we substitute in~\Cref{eq:overall-priv-gaussian} the following parameters: a)RDP and SRDP parameters for DP-SGD with subsampling from~\Cref{tab:sdp-parameters}, b)$\gamma = 1$, c) Lipschitzness and smoothness parameter of the loss function $\ell$, and d) $L_2$ and $L_\infty$ sensitivity ($\Delta_2$ and $\Delta_\infty$) for PCA and standard scaling from~\Cref{prop:pca-sensitivity,prop:sensitivity-scaling}. This completes the proof. 
\end{proof}

For DP-SGD with iteration, the pre-processing mechanisms achieve a tighter bound while satisfying $\Delta_\infty\ll n$ by~\Cref{assump:dp-sgd-iteration}. Therefore, this method will not provide tighter bound for PCA and standard scaling. However, it improves the privacy analysis for imputation, deduplication and quantization, as proved below. 

\begin{proof}[Proof of~\Cref{thm:overall-privacy} (iv)]
    We first note that the maximum divergence $\maxD$ of datasets pre-processed by imputation, deduplication or quantization is upper bounded by $\Delta_\infty=p$. 

    Following a similar argument as in the proof of Gaussian mechanism (\Cref{thm:overall-privacy} (i)), we set the parameters in~\Cref{thm:general-thm-training-only} $c_1 = c_2 = 2$. We substitute in~\Cref{eq:overall-priv-gaussian} the following parameters: a)RDP and SRDP parameters for DP-SGD with iteration from~\Cref{tab:sdp-parameters}, b)$\maxD = \Delta_\infty = p$, c) Lipschitzness and smoothness parameter of the loss function $\ell$, and d) $L_2$ and $L_\infty$ sensitivity ($\Delta_2$ and $\Delta_\infty$) for quantization, deduplication and imputation from from~\Cref{prop:deduplication-sensitivities} and \Cref{prop:imputation-mean-sensitivity}. This completes the proof. 
\end{proof}

\clearpage

\section{Privacy and accuracy guarantees of~\Cref{alg:ptr-imputation} (\Cref{sec:ptr})}\label{app:sec5-proofs}

In this section, we present the proofs for the theoretical guarantees of \Cref{alg:ptr-imputation}, as stated in~\Cref{sec:ptr}. Specifically, we provide the proofs for the privacy guarantee (\Cref{thm:ptr-privacy-guarantee}) and the accuracy guarantee (\Cref{prop:pca-accuracy}) of~\Cref{alg:ptr-imputation}.

\PTRGuarantee*
\begin{proof}
For brevity, we denote~\Cref{alg:ptr-imputation} as $\PTR$. For any two neighboring datasets $S_1, S_2\in \cL$, we consider two cases: i) the minimum eigen-gap of either $S_1$ or $S_2$ is smaller than or equal to $\beta$, and ii) the minimum eigen-gap of both $S_1$ and $S_2$ is greater than $\beta$. 

    \textbf{Case (i)} Without loss of generality, we assume the eigen-gap of $S_1$ is smaller than or equal to $\beta$. We will show that for any neighboring dataset $S_2$ of $S_1$, \ie~$d_H(S_1, S_2) = 1$, $\PTR$ satisfies the following inequality, for any output set $O\subset \cH\cup \{\perp\}$
    \begin{equation}
        \label{eq:ptr-case-i-goal}
        \bP\bs{\PTR(S_1) \in O} \leq e^\epsilon \bP\bs{\PTR(S_2)\in O} + \frac{\delta}{2}.
    \end{equation}
    First consider the case  $O\subset \cH$. Then, assuming that the output is $\perp$ with high probability~\ie~$\bP\bs{\PTR(S_1)\neq \perp} \leq \nicefrac{\delta}{2}$, we have that \begin{equation}
        \label{eq:ptr-case-i-1}
        \bP[\PTR(S_1)\in O] \leq \bP[\PTR(S_1)\neq \perp] \leq \frac{\delta}{2} \leq e^{\epsilon}\bP[\PTR(S_2) \in O] + \frac{\delta}{2}.
    \end{equation}
    Now we show $\bP\bs{\PTR(S_1)\neq \perp} \leq \nicefrac{\delta}{2}$. Given a input dataset $S$, we denote the $\Gamma$ in step 1 of~\Cref{alg:ptr-imputation} by $\Gamma(S)$. 
    
    \begin{equation}\label{eq:ptr-case-1-1}
        \begin{aligned}
            \bP(\PTR(S_1) \neq \perp) &\overset{(a)}{\leq} \bP\br{\Gamma(S_1) \leq \frac{\log\frac{2}{\delta}}{\varepsilon}}
            &\overset{(b)}{=} \bP\br{\Lap{\frac{1}{\varepsilon}}\leq \frac{\log\frac{2}{\delta}}{\varepsilon}}\overset{(c)}{\leq} \frac{\delta}{2}.
        \end{aligned}
    \end{equation}
    where step (a) follows because the algorithm does not return $\perp$ only if $\Gamma(S_1) \geq \frac{\log\frac{2}{\delta}}{\varepsilon}$(Line 5 in~\Cref{alg:ptr-imputation}), and step (b) follows from $\min_{S':\delta_{\min}(S')\leq \beta}d_H(S', S_1)=0$ as $S_1$ itself satisfies $\delta_{\min}(S_1) \leq \beta$. Step $(c)$ follows by the tail bound of Laplace distributions: for a positive real number $t > 0$, $\bP_{z\sim \mathrm{Lap}(0, b)}(z\geq t) \leq e^{-\nicefrac{t}{b}}$. 

    Now consider the only other possible output $O=\bc{\perp}$. We note that for any neighboring dataset $S_2$ of $S_1$, 
    \begin{equation}\label{eq:ptr-case-i-2}
        \min_{S':\delta_{\min}(S')\leq \beta}d_H(S', S_1)=0, \quad \min_{S':\delta_{\min}(S')\leq \beta}d_H(S', S_2)\leq 1
    \end{equation}
This implies 
    \begin{equation}
        \label{eq:ptr-case-i-1}
        \bP\br{\Gamma(S_2)\leq \frac{\log \frac{2}{\delta}}{\varepsilon}} \geq \bP\br{\Lap{\frac{1}{\varepsilon}}\leq \frac{\log \frac{2}{\delta}}{\varepsilon}-1}.
    \end{equation}   
    Define $J(S) = \min_{S':\delta_{\min}(S')< \beta}d_H(S, S')$. It is simple to note that $J$ has global sensitivity $1$ due to~\Cref{eq:ptr-case-i-2}. Thus, $\Gamma$ is essentially Laplace mechanism on \(J\) and thereby $(\varepsilon, 0)$-DP. This implies that for any neighboring datasets $S_2$ of $S_1$,
    \begin{equation}\label{eq:ptr-case-1-2}
            \frac{\bP(\PTR(S_1) = \perp)}{\bP(\PTR(S_2) = \perp)}
            = \frac{\bP\br{\Gamma(S_1) \leq \frac{\log\frac{2}{\delta}}{\varepsilon}}}{\bP\br{\Gamma(S_2) \leq \frac{\log\frac{2}{\delta}}{\varepsilon}}} \overset{(a)}{\leq} \frac{\bP\br{\Lap{\frac{1}{\varepsilon}} \leq \frac{\log\frac{2}{\delta}}{\varepsilon}}}{\bP\br{\Lap{\frac{1}{\varepsilon}} \leq \frac{\log\frac{2}{\delta}}{\varepsilon}-1}} \overset{(b)}{\leq} e^\varepsilon
    \end{equation}
    where the numerator of step (a) follows by~\Cref{eq:ptr-case-1-1} and the denominator of step (a) follows by~\Cref{eq:ptr-case-i-1}. Step (b) follows by the tail bound of Laplace distributions. Combining~\Cref{eq:ptr-case-1-1} and~\Cref{eq:ptr-case-1-2}, we show that~\Cref{eq:ptr-case-i-goal} holds in case (i). 

    \textbf{Case (ii) } Consider $S_1, S_2$ whose $k^{\text{th}}$-eigengaps are greater than $\beta$. We will show that the following holds, for any $O\subset \cH\cup\{\perp\}$
    \begin{equation}
        \label{eq:ptr-case-ii-goal}
        \bP\bs{\PTR(S_1)\in O} \leq e^{\widehat{\varepsilon}}\bP\bs{\PTR(S_2)\in O} + \frac{\delta}{2}
    \end{equation}
    where $ \widehat{\varepsilon} = 3\varepsilon\sqrt{1.05\br{1 + \frac{12.2^2\mu^2}{L^2\beta^2}}\log\frac{1}{\delta}}$.
    First consider the case $O = \{\perp\}$. Following a similar argument as~\Cref{eq:ptr-case-1-2}, we have \begin{equation}
        \label{eq:ptr-case-2-1}
        \frac{\bP\bs{\PTR(S_1) = \perp}}{\bP\bs{\PTR(S_2) = \perp}}\leq e^\varepsilon \leq e^{\widehat{\varepsilon}}.
    \end{equation}  
        Now consider the case when the output is not $\perp$. Then,  choosing any $\alpha \geq 11$ yields that the privacy parameter of DP-GD combined with non-private PCA is $\br{\alpha, 1.05\alpha\varepsilon\br{1 + \frac{12.2^2\mu^2}{L^2\beta^2}}}$-RDP from~\Cref{tab:comparison}. Then, invoking~\Cref{lem:rdp-to-adp} converts the RDP parameter to Approximate DP.
\begin{lemL}[RDP to Approximate DP \citep{mironov2017renyi}]\label{lem:rdp-to-adp}
    If $\cA$ is an $(\alpha, \varepsilon)$-RDP algorithm, then for $0<\delta < 1$, it satisfies $\br{\varepsilon + \frac{\log \frac{1}{\delta}}{\alpha-1}, \delta}$-differential privacy. 
\end{lemL}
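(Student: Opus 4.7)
The plan is to convert the R\'enyi divergence bound into a tail bound on the privacy loss random variable via Markov's inequality, and then split any output event into a ``typical'' part controlled by a pointwise likelihood ratio bound and an ``atypical'' part absorbed into $\delta$. This is the standard Mironov conversion, and I will carry it out in three short steps.

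First, I would fix neighboring datasets $S, S'$, let $P, Q$ denote the output distributions $\cA(S), \cA(S')$, and define the privacy loss random variable $L(o) := \log\br{P(o)/Q(o)}$. A one-line change of measure rewrites the R\'enyi divergence as $D_\alpha(P\|Q) = \frac{1}{\alpha-1}\log \bE_{o\sim P}\bs{e^{(\alpha-1)L(o)}}$, so the $(\alpha,\varepsilon)$-RDP hypothesis is equivalent to the exponential moment bound $\bE_{o\sim P}\bs{e^{(\alpha-1)L(o)}}\leq e^{(\alpha-1)\varepsilon}$. Applying Markov's inequality to the nonnegative random variable $e^{(\alpha-1)L}$ under $P$ yields, for any threshold $t$, the tail bound $\bP_{o\sim P}\bs{L(o)\geq t}\leq e^{(\alpha-1)(\varepsilon - t)}$. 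Setting $t = \widehat{\varepsilon} := \varepsilon + \frac{\log(1/\delta)}{\alpha-1}$ makes this right-hand side exactly $\delta$.

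Second, I would conclude the DP guarantee by decomposing any measurable event $E$ as $E = (E\cap\{L\geq \widehat{\varepsilon}\})\cup (E\cap\{L < \widehat{\varepsilon}\})$. The first piece has $P$-mass at most $\delta$ by the tail bound above; on the second piece, $L(o)<\widehat{\varepsilon}$ gives the pointwise bound $P(o)\leq e^{\widehat{\varepsilon}}Q(o)$, so its $P$-mass is at most $e^{\widehat{\varepsilon}}Q(E)$. Adding the two contributions yields $P(E)\leq e^{\widehat{\varepsilon}}Q(E) + \delta$, which is exactly the claimed $(\widehat{\varepsilon},\delta)$-DP. The argument is mostly bookkeeping; the only subtle point is that after the change of measure to $P$ one works with the $(\alpha-1)$-th rather than the $\alpha$-th exponential moment of $L$, which is why the Markov tail decays at rate $\alpha-1$ and produces the advertised slack $\log(1/\delta)/(\alpha-1)$ rather than $\log(1/\delta)/\alpha$. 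For general (non-discrete) output spaces, the same argument goes through verbatim by phrasing densities as Radon--Nikodym derivatives of $P$ with respect to a dominating measure (e.g.\ $P+Q$), which costs nothing conceptually.
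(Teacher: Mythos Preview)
The paper does not prove this lemma; it is quoted verbatim as an established result from \citet{mironov2017renyi} (note the alphabetical label, which the appendix reserves for results taken from the literature). Your argument is correct and is precisely the standard proof from Mironov's paper: rewrite the R\'enyi bound as an $(\alpha-1)$-exponential-moment bound on the privacy loss under $P$, apply Markov to get a $\delta$-tail, and split any event into the tail part and the part where the likelihood ratio is pointwise bounded.
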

        In particular, we choose \(\alpha = \sqrt{\frac{\log\frac{1}{\delta}}{1.05\varepsilon^2\br{1 + \frac{12.2^2\mu^2}{L^2\beta^2}}}} + 1\geq 11\) and obtain that the output in step 6 of~\Cref{alg:ptr-imputation} is $(\widehat{\varepsilon}, \delta)$-DP, where 
        \[\widehat{\varepsilon} \overset{(a)}{\leq} 2\varepsilon \sqrt{1.05\br{1 + \frac{12.2^2\mu^2}{L^2\beta^2}}\log\frac{1}{\delta}} + 1.05\varepsilon\br{1 + \frac{12.2^2\mu^2}{L^2\beta^2}}\overset{(b)}{\leq} 3\varepsilon\sqrt{1.05\br{1 + \frac{12.2^2\mu^2}{L^2\beta^2}}\log\frac{1}{\delta}} \]
       Here, step (a) follows by~\Cref{lem:rdp-to-adp} with chosen $\alpha$ and step (b) utilizes the fact that $\delta$ satisfies $\log\frac{1}{\delta}\geq1.05\varepsilon^2\br{1 + \frac{12.2^2\mu^2}{L^2\beta^2}}$. 
        
        This algorithm also discloses information about the dataset regarding its minimum eigen-gap. Specifically, when the minimum eigen-gap of the private dataset is smaller than $\beta$, then with high probability the output is $\perp$. However, the additional privacy cost incurred is smaller than that of releasing $\Gamma$ directly. As $\Gamma$ is a Laplace mechanism with global sensitivity $1$, releasing $\Gamma$ satisfies $(\varepsilon, 0)$-DP. By basic composition of approximate differential privacy~(\Cref{lem:composition-DP}), we can combine the privacy cost for releasing the results from DP-GD and releasing $\Gamma$. Thus,~\Cref{alg:ptr-imputation} is $\br{\widehat{\varepsilon} + \varepsilon, \delta}$-DP under case (ii).
        \begin{lemL}[Basic composition of differential privacy \cite{dwork14algorithmicfoundation}]
        \label{lem:composition-DP}
        If algorithm $\cA_1$ is $(\varepsilon_1, \delta_1)$-DP and $\cA_2$ is $(\varepsilon_2, \delta_2)$-DP, then $(\cA_1, \cA_2)$ is $(\varepsilon_1 + \varepsilon_2, \delta_1 + \delta_2)$-DP. 
        \end{lemL}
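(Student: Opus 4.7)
The plan is to show that for any neighboring datasets $S, S'$ and any measurable output set $O \subseteq R_1 \times R_2$,
\[
\bP[(\cA_1(S), \cA_2(S)) \in O] \leq e^{\varepsilon_1 + \varepsilon_2}\bP[(\cA_1(S'), \cA_2(S')) \in O] + \delta_1 + \delta_2,
\]
by peeling off the two algorithms in sequence, under the standing assumption that $\cA_1$ and $\cA_2$ use independent internal randomness.

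First, I would decompose the joint probability by conditioning on the output $o_1$ of $\cA_1$. Writing the $o_1$-slice of $O$ as $O_{o_1} = \bc{o_2 : (o_1, o_2) \in O}$, the joint probability equals the expectation over $o_1$ drawn from $\cA_1(S)$ of $\bP[\cA_2(S) \in O_{o_1}]$. Applying the $(\varepsilon_2, \delta_2)$-DP of $\cA_2$ pointwise in $o_1$ upper bounds this by $e^{\varepsilon_2}\bP[(\cA_1(S), \cA_2(S')) \in O] + \delta_2$. Next, observing that $\bP[(\cA_1(S), \cA_2(S')) \in O]$ is the probability of a measurable event depending only on the output of $\cA_1(S)$ (with $\cA_2(S')$ acting as independent post-processing), the $(\varepsilon_1, \delta_1)$-DP of $\cA_1$ then yields a further bound in terms of $\bP[(\cA_1(S'), \cA_2(S')) \in O] + \delta_1$.

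The main technical obstacle is that naively chaining these two multiplicative-plus-additive inequalities produces an additive term of $e^{\varepsilon_2}\delta_1 + \delta_2$, which is slightly weaker than the claimed $\delta_1 + \delta_2$. To tighten this I would instead invoke the equivalent \emph{bad-event} reformulation of approximate DP: $\cA$ is $(\varepsilon, \delta)$-DP iff for every neighboring pair $(S, S')$ there is a failure set $B$ with $\bP[\cA(S) \in B] \leq \delta$ such that the privacy loss is bounded by $\varepsilon$ outside $B$. Let $B_1, B_2$ be the failure sets of $\cA_1, \cA_2$ on $(S, S')$; the joint failure event $(B_1 \times R_2) \cup (R_1 \times B_2)$ has probability at most $\delta_1 + \delta_2$ by a union bound, and on its complement the two log-probability ratios add to at most $\varepsilon_1 + \varepsilon_2$ by independence, yielding the desired $(\varepsilon_1 + \varepsilon_2, \delta_1 + \delta_2)$-DP guarantee.
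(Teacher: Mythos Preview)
The paper does not prove this lemma: it is quoted as an established result from the cited reference (note the alphabetical label, which the appendix reserves for results taken from the literature). So there is no in-paper argument to compare against; the only question is whether your proof is sound.

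Your first pass is correct and, as you say, lands at $e^{\varepsilon_2}\delta_1+\delta_2$. The problem is the second pass. The ``bad-event'' reformulation you invoke as an \emph{equivalence} is in fact only a one-way implication: existence of a set $B$ with $\bP[\cA(S)\in B]\le\delta$ and pointwise privacy loss at most $\varepsilon$ on $B^c$ implies $(\varepsilon,\delta)$-DP, but the converse fails. For a quick one-directional counterexample, let $\cA(S)$ be uniform on $[0,\tfrac12]$ and $\cA(S')$ uniform on $[0,1]$. One checks that $\bP[\cA(S)\in O]\le 1.5\,\bP[\cA(S')\in O]+\tfrac14$ for every $O$, yet the privacy loss $\log(p/q)=\log 2>\log 1.5$ on the entire support of $\cA(S)$, so every candidate $B$ has $\bP[\cA(S)\in B]=1$. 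Since you cannot extract hard failure sets $B_1,B_2$ from the $(\varepsilon_i,\delta_i)$-DP assumptions on $\cA_1,\cA_2$, the union-bound step does not go through as written.

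Your intuition is rescued by a \emph{soft} bad-set characterization, which \emph{is} an equivalence: the one-sided bound holds iff there is a weight $\beta:\mathrm{Range}(\cA)\to[0,1]$ with $\bE_{o\sim\cA(S)}[\beta(o)]\le\delta$ and $(1-\beta(o))\,p(o)\le e^{\varepsilon}q(o)$ pointwise (take $\beta(o)=\max\{0,\,1-e^{\varepsilon}q(o)/p(o)\}$; the expectation bound is exactly the hockey-stick divergence). With $\beta_1,\beta_2$ for the two mechanisms and $\beta(o_1,o_2)=1-(1-\beta_1(o_1))(1-\beta_2(o_2))$, independence gives $\bE[\beta]\le\delta_1+\delta_2-\delta_1\delta_2\le\delta_1+\delta_2$ together with the product pointwise bound at exponent $\varepsilon_1+\varepsilon_2$, which yields the claimed $(\varepsilon_1+\varepsilon_2,\delta_1+\delta_2)$-DP.
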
 
        Combining the two cases concludes the proof. 
\end{proof}

\PCAaccuracy*
\begin{proof}
    We first prove that the probability that~\Cref{alg:ptr-imputation} outputs \(\perp\) is small. Then, we show that in the even that it does not output \(\perp\) but a real vector, then the excess empirical risk is small.

    Let $\cS_{d_{H}} = \bc{S, S'\mid d_H(S, S') = 1\wedge \delta_{\min}(S), \delta_{\min}(S')>0}$. Let $\Sigma_S, \Sigma_{S'}$ denote the covariance matrices of $S$ and $S'$ respectively and \(\delta_k\br{S}\) denote the \(k^{\it th}\) eigen-gap of \(\Sigma_S\). 
    Then, for any $S, S'\in \cS_{d_H}$, we first upper bound \(\abs{ \delta_k\br{S}- \delta_{k}\br{S'}}\)

 \begin{equation}
        \begin{aligned}
            \abs{\delta_{k}(S') - \delta_{k}(S)}&= \abs{\lambda_k(\tilde{\Sigma}) - \lambda_{k+1}(\tilde{\Sigma}) - \lambda_k(\Sigma) + \lambda_{k+1}(\Sigma)}\\
            &\leq \abs{\lambda_k(\tilde{\Sigma}) - \lambda_k (\Sigma)} + \abs{\lambda_{k+1}(\Sigma) - \lambda_{k+1}(\tilde{\Sigma})}.\\
        \end{aligned}
    \end{equation}
 To bound this term, we upper bound \(\abs{\lambda_k(\tilde{\Sigma}) - \lambda_k (\Sigma)}\) for any \(k\).  Using Weyl's inequality~\citep{Schindler2015Wehl} 
    \begin{equation}\label{eq:eigenvalue-sensitivity}
        \max_{j \in [d]}\abs{\lambda_j(\Sigma) - \lambda_j(\tilde{\Sigma})}\leq \norm{\tilde{\Sigma}-\Sigma}_{\text{op}}\leq\norm{\tilde{\Sigma}-\Sigma}_F\leq \frac{6.1}{n},
    \end{equation}        
    where the last inequality follows from a similar argument as \Cref{eq:pca-bounded-difference-in-covariance} in the proof of~\Cref{prop:pca-sensitivity} for $n\geq 101$. This yields 
\begin{equation}\label{eq:eigen-gap-sensitivity}
     \abs{\delta_{k}(S') - \delta_{k}(S)}\leq \frac{12.2}{n}.
\end{equation}
    Thus, for any $S_0,S_m$ with $d_H(S_0, S_m) = m$, we can construct a series of datasets $S_1, \ldots, S_{m-1}$ such that $d_H(S_j, S_{j+1}) = 1$ for $j\in \{0,...,m-1\}$. By applying~\Cref{eq:eigen-gap-sensitivity} iteratively over each $i$, we have 
    \begin{equation}\label{eq:k-sensitivity-eigen-gap}
        \delta_{k}(S_0)\leq \delta_{k}(S_m) + \frac{12.2m}{n}.
    \end{equation}
     More generally, for any pairs of dataset \(S,S'\), we have  
     \begin{equation}\label{eq:k-sensitivity-eigen-gap}
        d_H(S,S') \geq \frac{\abs{\delta_k\br{S} - \delta_k\br{S'}}n}{12.2}
    \end{equation} Hence,
    \begin{equation}
        \label{eq:pca-acc-ineq}
        \min_{S': \delta_{k}(S')\leq \beta} d_H(S, S') \geq \frac{\br{\delta_{k}(S)-\beta}n}{12.2}.
    \end{equation}
    Now, we upper bound the probability that ~\Cref{alg:ptr-imputation} does not output $\perp$.
    \begin{equation}\label{eq:pca-acc-high-probability-guarantee}
        \begin{aligned}
            \bP\bs{\Gamma\br{S} \geq \frac{\log\frac{1}{\delta}}{\varepsilon}} &= \bP\bs{\min_{S': \delta_{k}(S')\leq \beta}d_H(S, S') + \mathrm{Lap}\br{\frac{1}{\varepsilon}} \geq \frac{\log\frac{1}{\delta}}{\varepsilon}} \\
            &\geq \bP\bs{\frac{\br{\delta_{k}(S)-\beta}n}{12.2} + \mathrm{Lap}\br{\frac{1}{\varepsilon}}\geq \frac{\log\frac{1}{\delta}}{\varepsilon}}\\
            &\overset{(a)}{=} 1-\frac{1}{2}\exp\br{\varepsilon\br{\frac{\log\frac{1}{\delta}}{\varepsilon} - \frac{\br{\delta_{k}(S)-\beta}n}{12.2}}}\\
            &= 1-\frac{1}{2\delta}\exp\br{\frac{\beta - \delta_{k}(S)}{12.2}n\varepsilon},
        \end{aligned}
    \end{equation}
    where step $(a)$ follows from the tail bound of Laplace distribution when $\delta_{k}(S)\geq \beta + \frac{12.2\log\frac{1}{\delta}}{n\varepsilon}$. 
    
    We have shown that~\Cref{alg:ptr-imputation} does not output $\perp$ with probability $1-\frac{1}{2\delta}\exp\br{\frac{\beta - \delta_{k}(S)}{12.2}n\varepsilon}$. It remains to derive the convergence guarantee of~\Cref{alg:ptr-imputation} when the output is not $\perp$. To show this, we utilize the property that PCA transforms the dataset to have a low-rank covariance matrix, which allows us to apply the dimension-independent convergence guarantee of DP-GD following the analysis of \citet{song21Evading} and establish the desired convergence guarantee.

    Let $A_k^\top \in \reals^{k\times d}$ be the matrix consisting of the first $k$ eigenvectors of the covariance matrix of $S$, $\Sigma = \frac{1}{n}\sum_{i = 1}^n x_i x_i^\top $. Let $S_k = \{(A_k A_k^\top x_i, y_i)\}_{i = 1}^n$. We can decompose the error into three terms, 
    \begin{equation}
    \label{eq:pca-acc-decomposition}
        \bE\bs{\hat{\ell}_S(\hat{\theta})} - \ell_S(\theta^\star) \leq \underbrace{ \abs{\bE\bs{\hat{\ell}_S(\hat{\theta})} - \bE\bs{\hat{\ell}_{S_k}(\hat{\theta})} }}_{(a)} + \underbrace{\abs{\bE\bs{\hat{\ell}_{S_k}(\hat{\theta})} -\ell_{S_k}(\theta^\star)}}_{(b)} + \underbrace{\abs{\ell_S(\theta^\star)-\ell_{S_k}(\theta^\star)}}_{(c)}
    \end{equation}
    By Theorem 3.1 in \citet{song21Evading}, part $(b)$ is upper bounded by $\bigO{\frac{L\sqrt{1 + k\log\frac{1}{\delta}}}{\varepsilon n}}$.  
\begin{lemL}[Theorem 3.1 in~\citet{song21Evading}]
    Let $\theta_0 = 0^p$ be the initial point of $\DPGD$. Let the dataset be centered at $0$. Let $k$ be the rank of the projector to the eigenspace of the covariance matrix $\sum_{i = 1}^n x_ix_i^\top$. For a $L$-Lipschitz loss function $\ell$, $\DPGD$ with $T = n^2 \varepsilon^2$ with appropriate learning rate $\eta$ output $\hat{\theta}$ satisfying 
    \[\bE\bs{\hat{\ell}_S(\hat{\theta})} - \hat{\ell}_S(\theta^\star) \leq \bigO{\frac{L\sqrt{1 + 2k\log\frac{1}{\delta}}}{\varepsilon n}}. \]
\end{lemL}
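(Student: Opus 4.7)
The plan is to mimic the classical convex-optimisation analysis of DP-GD à la Bassily--Smith--Thakurta, while carefully exploiting the fact that, for a generalised linear model, all gradient information lives in the column span of the data. Concretely, since $\ell(\theta, (x_i,y_i))$ depends on $\theta$ only through $\langle \theta, x_i\rangle$, the clean gradient $\nabla_\theta \ell(\theta, (x_i,y_i)) = \ell_0'(\langle \theta,x_i\rangle, y_i)\, x_i$ always lies in $V := \mathrm{span}(x_1,\ldots,x_n)$, a subspace of $\reals^d$ of dimension at most $k$. Let $P_V$ and $P_{V^\perp}$ denote the orthogonal projections onto $V$ and $V^\perp$.

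First I would handle privacy accounting. Each step of $\DPGD$ is the Gaussian mechanism on the averaged (clipped) gradient, whose $L_2$-sensitivity is $2L/n$. Composing $T$ such steps via RDP and then converting to approximate DP with \Cref{lem:rdp-to-adp} shows that $\sigma = \Theta\br{L\sqrt{T\log(1/\delta)}/(\varepsilon n)}$ is enough for $(\varepsilon,\delta)$-DP; this is the noise scale one implicitly works with.

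Next comes the utility analysis, which is where the rank-$k$ structure pays off. Decompose $\theta_t = u_t + w_t$ with $u_t = P_V\theta_t$ and $w_t = P_{V^\perp}\theta_t$. Because $\theta_0 = 0 \in V$ and every clean gradient is in $V$, the $V^\perp$-component satisfies $w_{t+1} = w_t + \eta P_{V^\perp}\xi_t$ with $\xi_t\sim\mathcal{N}(0,\sigma^2 I_d)$, a pure Gaussian random walk. Since $\hat\ell_S$ sees only $\langle\theta_t, x_i\rangle = \langle u_t, x_i\rangle$, the component $w_t$ is loss-irrelevant and produces no feedback into $u_t$'s dynamics. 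The effective trajectory is therefore noisy projected gradient descent in the $k$-dimensional space $V$, with noise $\eta P_V\xi_t$ whose covariance has trace $\eta^2\sigma^2 k$ rather than $\eta^2\sigma^2 d$. Applying the standard averaged-iterate bound for GD on a convex $L$-Lipschitz objective with stochastic gradient second moment $O(L^2 + \sigma^2 k)$ and suitably tuned $\eta \asymp 1/\sqrt{T\br{L^2 + \sigma^2 k}}$ gives
\[
\bE\bs{\hat{\ell}_S(\hat{\theta})} - \hat{\ell}_S(\theta^\star) = O\!\br{\frac{\|\theta^\star\|_2 \sqrt{L^2 + \sigma^2 k}}{\sqrt{T}}}.
\]
Substituting the privacy-induced $\sigma$ and the choice $T = n^2\varepsilon^2$ collapses this to $O\!\br{L\sqrt{1 + k\log(1/\delta)}/(\varepsilon n)}$, which is the advertised bound.

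The main obstacle is the utility step: rigorously decoupling the $V$ and $V^\perp$ components and showing that only the $P_V$-projected noise shows up in the error. One must verify (i) that $V$ depends only on the (fixed) data, so analysing utility in $V$ does not leak additional privacy; (ii) that averaging of iterates respects the decomposition, so the loss of the averaged point $\hat\theta$ equals the loss of $P_V\hat\theta$; and (iii) that the $\|\theta^\star\|_2$ factor is absorbed by the paper's bounded-parameter convention. A secondary subtlety is the precise shape $\sqrt{1 + 2k\log(1/\delta)}$ under one square root: the ``$1$'' comes from the deterministic optimisation term $L/\sqrt{T}$ and the $k\log(1/\delta)$ from the noise term, and they combine additively only because $T = n^2\varepsilon^2$ equalises the two contributions up to constants.
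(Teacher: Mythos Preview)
The paper does not prove this lemma at all: it is stated as a \texttt{lemL} environment, which in this paper's convention is reserved for results quoted verbatim from the literature, and it is invoked by the sentence ``By Theorem~3.1 in \citet{song21Evading}, part~$(b)$ is upper bounded by\ldots''. There is therefore no ``paper's own proof'' to compare against; the authors simply cite the bound and move on.

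That said, your sketch is essentially the argument of \citet{song21Evading} itself: the crucial observation that for a generalised linear model the population/empirical gradients lie in $V=\mathrm{span}(x_1,\dots,x_n)$, the orthogonal decomposition $\theta_t = P_V\theta_t + P_{V^\perp}\theta_t$ with the $V^\perp$-component evolving as a pure Gaussian walk that the loss never sees, and the conclusion that the effective noise dimension is $k=\mathrm{rank}(\sum_i x_ix_i^\top)$ rather than $d$. One caveat worth flagging is that the paper's description of $\cA_{\mathrm{GD}}$ includes a projection onto a parameter set $\cH$, whereas your decoupling argument (and the original result in \citet{song21Evading}, whose title is ``Evading the Curse of Dimensionality in \emph{Unconstrained} Private GLMs'') relies on the update being unconstrained so that $P_V$ and $P_{V^\perp}$ commute with the dynamics; if a projection is present it must preserve the $V/V^\perp$ split, which you implicitly assume but do not verify.
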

    Next, we show that part $(a)$ and $(c)$ are upper bounded by $L\sum_{i = k+1}^n\lambda_i(S)= L\Lambda$. To show this, we prove that for any $\theta \in B_2^d$, $\abs{\hat{\ell}_{S_k}(\theta) - \hat{\ell}_S(\theta)}\leq L\sum_{i = k+1}^n\lambda_i(S)$. 

    \begin{equation}
        \begin{aligned}
            \abs{\hat{\ell}_{S_k}(\theta) - \hat{\ell}_S(\theta)} &= \abs{\frac{1}{n}\sum_{i = 1}^n \ell(\theta^\top A_k A_k^\top x_i, y_i) - \frac{1}{n}\sum_{i = 1}^n \ell(\theta^\top x_i, y_i)}\\ 
            &\overset{(a)}{\leq} \frac{1}{n}\sum_{i = 1}^n L\norm{\theta^\top A_kA_k^\top x_i - \theta^\top x_i}_2\\
            &\overset{(b)}{\leq} \frac{1}{n}\sum_{i = 1}^n L\norm{ A_kA_k^\top x_i -  x_i}_2 \overset{(c)}{\leq} L\sum_{i = k+1}^d \lambda_i(S) = L\Lambda
        \end{aligned}
    \end{equation}
    where step $(a)$ follows from Lipschitzness of the loss function $\ell$, step $(b)$ follows due to the projection step in DP-GD projects $\theta$ to the Euclidean ball with radius 1, and step $(c)$ is obtained by substituting the reconstruction error of PCA, which is $\sum_{i = k+1}^d \lambda_i(S)$. This concludes the proof. 
\end{proof}
\clearpage

\section{Experiment setups and discussion}\label{app:experiments}
\subsection{Experiment setups}

\paragraph{Data generation} The synthetic data is generated with the \verb|make_classification| function in the \verb|sklearn| library. We generate a 2-class low rank dataset consisting of 1000 data points with dimension 6000 and rank 50. We set the parameter \verb|n_cluster_per_class| in \verb|make_classification| to 1. 

\paragraph{Models and allocation of privacy budget} We compared the excess empirical loss of three models in~\Cref{fig:training-acc-pca}. We provide the details of the models below. 
\begin{itemize}
    \item\textbf{Pre-processed DP pipeline: }We employ non-private PCA to reduce the dimensionality of the original dataset to $k$ and then apply private logistic regression. In particular, we use the \verb|make_private_with_epsilon| method from the \verb|Opacus| library with PyTorch SGD optimizer with learning rate \verb|1e-2|, \verb|max_grad_norm = 10| and \verb|epochs = 10|. The privacy parameters \verb|epsilon| and \verb|delta| are obtained by adjusting the desired overall privacy level with our framework (\Cref{thm:overall-privacy}). 
    \item \textbf{No pre-processing: }We directly apply private logistic regression with the same parameters on the original high-dimensional dataset. 
    \item \textbf{DP-PCA: }We first implement the DP-PCA in \citet{chaudhuri2012dppca} and then apply private logistic regression. We allocate half of the privacy budget to DP-PCA and the remaining half to private logistic regression.
\end{itemize}

\subsection{Discussion on clipping}
In practice, when the sensitivity of the original function is unbounded, one can apply the technique of clipping to restrict the sensitivity~\citep{Abadi16dpsgd,Liu22dp-pca}. We can also incorporate clipping into our pre-processed DP pipeline. Specifically, we first non-privately compute the PCA projection matrix $A_k$ with the original dataset $S$. We clip the original data to $S_{\text{clipped}}$ with the clipping threshold $C \in [0.1R, 0.7R, 0.99R]$, where $R$ represents the maximum norm of the original dataset. Then, we apply the projection matrix to the clipped dataset to obtain the pre-processed dataset $S_{\text{preprocessed}} = A_kA_k^\top S_{\text{clipped}}$. Finally, we apply private logistic regression on $S_{\text{preprocessed}}$, with the privacy parameter set by our framework. 

As shown in~\Cref{tab:clipping}, the effect of clipping on accuracy depends on the clipping threshold. This is because clipping reduces the $L_2$ sensitivity of PCA from $\nicefrac{12R}{n\delta_{\min}^k}$ to $\nicefrac{12C}{n\delta_{\min}^k}$, allowing for selection of a larger privacy parameter during private learning and better accuracy. However, clipping also introduces inaccuracies in the dataset. 

\begin{table}[h]
\caption{Excess empirical loss of clipped pre-processed DP-pipeline }
\begin{center}
\begin{tabular}{l|c|c|c|c}
\toprule
 & Clipping 0.1 & Clipping 0.7 & Clipping 0.99 & Pre-processed DP pipeline \\
\midrule
1.0 & 0.84 & \textbf{0.89} & 0.88 & \textbf{0.89} \\
2.0 & 0.84 & \textbf{0.89} & \textbf{0.89} & \textbf{0.89} \\
5.0 & 0.85 & \textbf{0.90} & \textbf{0.90} & \textbf{0.90} \\
\bottomrule
\end{tabular}
\end{center}
\label{tab:clipping}
\end{table}

However, we note that $R$ is dataset dependent and might lead to additional privacy leakage.

\end{document}